\newcommand{\ouralg}{\texttt{LAOC}\xspace}
\newcommand{\opt}{\texttt{OPT}\xspace}
\newcommand{\robd}{\texttt{ROBD}\xspace}
\newcommand{\mpc}{\texttt{MPC}\xspace}
\newcommand{\mpclstm}{\texttt{MPC-LSTM}\xspace}
\newcommand{\tmpc}{\texttt{TMPC}\xspace}
\newcommand{\rl}{\texttt{ML}\xspace}
\newcommand{\crl}{\texttt{CRL}\xspace}
\newcommand{\lin}{\texttt{Lin}\xspace}
\newcommand{\linplus}{\texttt{Lin+}\xspace}
\newcommand{\ogd}{\texttt{OGD}\xspace}
\theoremstyle{plain}
\newtheorem{theorem}{Theorem}[section]
\newtheorem{proposition}[theorem]{Proposition}
\newtheorem{lemma}[theorem]{Lemma}
\newtheorem{assumption}[theorem]{Assumption}
\theoremstyle{remark}
\begin{document}

\title{Learning-Augmented Online Control for \\ Decarbonizing Water Infrastructures}

\author{Jianyi Yang}
\affiliation{%
	\institution{University of Houston}
 \country{}
 \city{}
}
\email{jyang66@uh.edu}

\author{Pengfei Li}
\affiliation{%
  \institution{University of California, Riverside}
  \country{}
  \city{}
}
\email{pli081@ucr.edu}

\author{Tongxin Li}
\affiliation{%
  \institution{Chinese University of Hongkong, Shenzhen}
  \country{}
  \city{}
}
\email{litongxin@cuhk.edu.cn}

\author{Adam Wierman}
\affiliation{%
\institution{California Institute of Technology}
 \country{}
 \city{}
}
\email{adamw@caltech.edu}

\author{Shaolei Ren}
\affiliation{%
	\institution{University of California, Riverside}
 \country{}
 \city{}
}
\email{shaolei@ucr.edu}

\renewcommand{\shortauthors}{Yang et al.}

\begin{abstract}
 Water infrastructures are essential for drinking water supply, irrigation, fire protection, and other critical applications. However, water pumping systems, which are key to transporting water to the point of use, consume significant amounts of energy and emit millions of tons of greenhouse gases annually. With the wide deployment of digital water meters and sensors in these infrastructures, Machine Learning (ML) has the potential to optimize water supply control and reduce greenhouse gas emissions. Nevertheless, the inherent vulnerability of ML methods in terms of worst-case performance raises safety concerns when deployed in critical water infrastructures. To address this challenge, we propose a learning-augmented online control algorithm, termed \ouralg, designed to dynamically schedule the activation and/or speed of water pumps. To ensure safety, we introduce a novel design of safe action sets for online control problems. By leveraging these safe action sets, \ouralg can provably guarantee safety constraints while utilizing ML predictions to reduce energy and environmental costs. Our analysis reveals the tradeoff between safety requirements and average energy/environmental cost performance.
Additionally, we conduct an experimental study on a building water supply system to demonstrate the empirical performance of \ouralg. The results indicate that \ouralg can effectively reduce environmental and energy costs while guaranteeing safety constraints.
\end{abstract}

\maketitle

\section{Introduction}

Water supply is a critical utility for numerous infrastructures, including residential and commercial buildings, manufacturing facilities, and data centers. Globally, water systems consume about $4\%$ of the total electricity use \cite{IEA_energy_and_water}. In municipalities, energy consumption of water systems typically accounts for approximately $30\%$ to $40\%$ of the total electricity use \cite{energy_efficiency_water_supply}. In the United States alone, the energy costs associated with water infrastructure amount to around 4 billion annually and contribute over 45 million tons of greenhouse gases \cite{energy_efficiency_water_supply}. Pumping is usually the most energy-intensive part of water infrastructures, representing up to $80\%$ of the energy consumed by municipal water systems \cite{saving_energy_public_water}. This significant energy consumption has spurred widespread interest in optimizing water pump systems to reduce both greenhouse gas emissions and monetary costs \cite{IEA_energy_and_water,energy_efficiency_water_supply,saving_energy_public_water}.

In most critical infrastructures, water supply systems use storage tanks to ensure a reliable water provision. Pumps are employed to maintain adequate water levels of these tanks to meet water demand. Beyond providing a reliable water supply, these tanks can serve as buffers that can be exploited to manage pumping systems more efficiently, thereby reducing greenhouse gas emissions and monetary costs. With the integration of renewable energy, both carbon intensity and electricity prices fluctuate over time \cite{predicting_carbon_intensity_sun2022predictions,electricity_markets}. This time-varying property, combined with the widespread deployment of sensors, allows water supply systems to dynamically schedule the activation and/or the speed of pumps with the goal of optimizing carbon/energy efficiency \cite{Water_Pump_Operation_eEnergy24,pump_control_optimization}.  Importantly, the scheduling policy should ensure safe water levels of the tanks to address any emergencies.

Water supply management is an online control problem characterized by time-varying dynamics and cost functions that are revealed sequentially to the pump controller. Such problems are challenging due to the uncertainty of future contexts including demand, carbon intensity, and/or energy prices \cite{L2O_OnlineBipartiteMatching_Toronto_ArXiv_2021_DBLP:journals/corr/abs-2109-10380,L2O_OnlineResource_PriceCloud_ChuanWu_AAAI_2019_10.1609/aaai.v33i01.33017570,learning_OBM_ICML23,LAAU_yang2022learning}. Without precise knowledge of the future contexts, the controllers of pumping systems are difficult to achieve high energy efficiency. Nevertheless, 
exploiting the data of water usage, carbon intensity and energy price, machine learning (ML) can be applied to overcome the uncertainties inherent in online control, often surpassing the performance of manually designed policies \cite{renewable_aggregation_li2021learning,online_optimal_control_li2019online,SOCO_Prediction_Error_RHIG_NaLi_Harvard_NIPS_2020_NEURIPS2020_a6e4f250,Shaolei_L2O_ExpertCalibrated_SOCO_SIGMETRICS_2022}. Recently, ML predictions have been utilized in water supply systems to enhance cost savings and carbon efficiency \cite{Water_Pump_Operation_eEnergy24,MPC_real_time_water_operation_wang2021minimizing,MPC_energy_water_management_wanjiru2016model}.

However, ML can sometimes provide inaccurate predictions or low-quality advice, which can lead to arbitrarily poor performance and raise safety concerns for critical water infrastructures. For instance, a water tank in a conference center is crucial for ensuring a reliable water supply and fire protection. If the controller fails to maintain a safe water level, serious accidents can occur in the event of a municipal distribution system fault or a fire emergency. Naive deployments of ML-based controllers could result in such failures, leading to significant safety risks. Despite significant efforts to improve ML models for water supply systems \cite{Water_Pump_Operation_eEnergy24,uncertainty_aware_water_management_sopasakis2018uncertainty,chance_constrained_pumpung_stuhlmacher2020chance}, ML-based controllers fundamentally lack performance guarantees, especially for adversarial or out-of-distribution problem instances. Such lack of performance guarantees hinders the deployment of ML in real-world critical infrastructures.

To solve the fundamental challenges of ensuring worst-case performance guarantees for ML-based controllers, we propose a method that leverages control priors. Control priors are human-crafted online algorithms with provable worst-case performance guarantees \cite{competitive_control_memory_shi2020online,CompetitiveControl_GuanyaShi_CISS_2021_9400281,CompetitiveControl_Hassibi_arXiv_2021_https://doi.org/10.48550/arxiv.2107.13657,SOCO_OBD_LQR_Abstract_Goel_Adam_Caltech_2019_10.1145/3374888.3374892} or trusted rule-based heuristics that have been reliably used in real systems for a long time \cite{Cooling_control_luo2022controlling,RL_cooling_chervonyi2022semi}. These control priors are highly reliable in terms of safety metrics. By integrating these priors into ML-based controllers, we aim to develop an algorithm that ensures the safety performance of the ML-based controller is no worse than a the safety performance benchmark. Drawing on the concept of learning-augmented algorithms that incorporate ML advice into algorithm design, we call our proposed algorithm Learning-Augmented Online Control (\ouralg). 

While initially developed for water systems, the proposed algorithm (\ouralg) is versatile and can be applied to various practical online control and resource management problems, such as battery management for electric vehicle (EV) charging station\cite{EV_charging_sun2021data}, workload scheduling for sustainable data centers \cite{RL_resource_provisioning_salahuddin2016reinforcement}, and control of cooling systems \cite{Cooling_control_luo2022controlling}. Adaptation of \ouralg to these applications can improve the average performance while providing a worst-case performance guarantee.

\textbf{Contributions.} The contributions of the paper are summarized as follows. First, it presents an online control framework designed to sustainably and safely manage water supply for critical infrastructures. The framework addresses the urgent need for a worst-case safety risk guarantee in decarbonizing critical infrastructures. Notably, this framework extends to various online control and resource management problems across different critical infrastructures.
Central to the paper's contribution is the development of a novel learning-augmented algorithm named \ouralg, which integrates a control prior into the ML-based controller to ensure worst-case safety risk constraints while optimizing decarbonization performance. Our analysis demonstrates that the proposed method reliably satisfies safety performance constraints for any problem instance while effectively leveraging ML predictions for decarbonization and cost saving. Furthermore, our analysis illuminates the tradeoff between the decarbonization and cost saving performance and the worst-case safety guarantee.
Lastly, the paper evaluates the proposed algorithm for the water supply system of critical buildings. Results indicate that \ouralg achieves significant carbon reduction and cost savings compared to traditional controllers used in water supply systems focusing on maintaining water levels. Moreover, it showcases the advantage of \ouralg in guaranteeing worst-case safety performance compared to pure ML-based algorithms.

\section{Related Work}\label{sec:related}

\textbf{Optimization of water supply systems.}
The considered problem stems from the tradition field of water supply management. In this area, a lot of works consider the scheduling for water distribution systems \cite{optimal_scheduling_water_singh2019optimal,chance_constrained_water_distribution_stuhlmacher2020water,water_network_optimization_d2015mathematical,optimal_operation_oikonomou2018optimal}. Some works have developed the pump control methods to maintain a water level for demand satisfaction and save energy, which has been studied in \cite{MPC_energy_water_management_wanjiru2016model,MPC_real_time_water_operation_wang2021minimizing,pump_control_optimization,pump_scheduling_water_networks,luna2019improving,Water_Pump_Operation_eEnergy24,uncertainty_aware_water_management_sopasakis2018uncertainty,chance_constrained_pumpung_stuhlmacher2020chance}.  Most of these works only consider the energy price, but do not explicitly consider the dynamical carbon intensity. The carbon emission of water infrastructures has recently become a crucial social concern \cite{energy_efficiency_water_supply,sustainable_water_infrastructure}, so we include the carbon emissions in the optimization objective to ensure sustainable operation. 

Much of the literature, e.g.,  \cite{MPC_energy_water_management_wanjiru2016model,MPC_real_time_water_operation_wang2021minimizing,Water_Pump_Operation_eEnergy24}, utilizes ML predictions of the future demand and/or energy price to improve the control performance. To fight against the future uncertainty, some works have developed robust control algorithms  or constrained control algorithms for water supply systems \cite{chance_constrained_water_distribution_stuhlmacher2020water,optimal_scheduling_water_singh2019optimal, robust_optimization_water_ghelichi2018novel,robust_energy_optimization_WDS_goryashko2014robust}. They either satisfy the safety constraints with a large probability or provide no guarantee on safety constraints. However,  it is critically needed for water infrastructures to guarantee the worst-case safety performance of water supply given any problem instance. In this paper, we solve this challenge by designing a novel learning-augmented control algorithm utilizing the trusted control prior.

\textbf{Online control.}
Our problem formulation is relevant to the literature of online competitive control.  
In our problem setting, the target is to minimize the cumulative cost in the nonlinear dynamics, which is
different from the traditional control literature that uses  measures for stabilization purposes \cite{polycarpou1993robust,franco2006robust,freeman2008robust,khalil1996robust}. 
Like the recent works on competitive control~\cite{competitive_control_goel2022competitive,goel2022best,competitive_control_yu2022competitive,SOCO_OBD_LQR_Abstract_Goel_Adam_Caltech_2019_10.1145/3374888.3374892,competitive_control_memory_shi2020online,SOCO_OBD_R-OBD_Goel_Adam_NIPS_2019_NEURIPS2019_9f36407e, SOCO_Memory_FeedbackDelay_Nonlinear_Adam_Sigmetrics_2022_10.1145/3508037}, our work considers guarantees on the worst-case competitiveness, but our main focus is different --- we leverage ML to explore policies with low average cost while enforcing competitiveness guarantees for any step in any episode. This enables the use of the existing competitive control policies as priors. Achieving our objective requires novel design of safe action sets and new analysis techniques to find the trade-off between the average performance and worst-case competitiveness.

\textbf{Learning-based online control}.
Our algorithm is relevant to the broad area of learning-based control
\cite{learning_MPC_hewing2020learning,safe_learning_in_robotics_brunke2022safe, data_driven_control_tang2022data, learning_tube_MPC_fan2020deep,learning_uncertainty_set_lilearning,L1_GP_gahlawat2020l1,Control_RobustConsistency_LQC_TongxinLi_Sigmetrics_2022_10.1145/3508038,Online_untrusted_predictions_rutten2022online}. These works have developed machine learning models to predict the system dynamic or control-relevant information which is utilized in deciding the control actions \cite{data_driven_control_tang2022data,robust_learning_MPC_aswani2013provably,learning_uncertainty_set_lilearning,L1_GP_gahlawat2020l1,learning_tube_MPC_fan2020deep,Deep_adaptive_control_joshi2019deep,Asynchronous_deep_adaptive_control_joshi2021asynchronous}. Recent works combine learning-based methods with system models in order to improve the safety or robustness of learning for control \cite{safe_learning_in_robotics_brunke2022safe,learning_tube_MPC_fan2020deep,L4safety_critical_control_taylor2020learning,Control_RobustConsistency_LQC_TongxinLi_Sigmetrics_2022_10.1145/3508038,Online_untrusted_predictions_rutten2022online,wabersich2021probabilistic}.
Among them, learning-augmented online algorithms
combine potentially untrusted ML predictions with
robust policies (i.e., control priors). 
Learning-augmented algorithms have been developed for online control/optimization by combining ML predictions and control priors through online switching \cite{Online_untrusted_predictions_rutten2022online,SOCO_MetricUntrustedPrediction_Google_ICML_2020_pmlr-v119-antoniadis20a} or adaptively setting a confidence on the ML prediction \cite{Control_RobustConsistency_LQC_TongxinLi_Sigmetrics_2022_10.1145/3508038,renewable_aggregation_li2021learning}. 
Compared to these studies, we make contributions by considering
a more challenging setting, i.e., non-linear and time-varying dynamic models that are sequentially revealed online. 
Although some of the existing  studies \cite{Control_RobustConsistency_LQC_TongxinLi_Sigmetrics_2022_10.1145/3508038,Online_untrusted_predictions_rutten2022online,optimal_christianson2023,expert_robustified_learning_infocom2023,Shaolei_L2O_ExpertCalibrated_SOCO_SIGMETRICS_Journal_2022} provide provable cost bounds, they cannot guarantee a flexible any-step safety constraint given an arbitrary control prior, but this is needed for real problems \cite{Cooling_control_luo2022controlling}.

\textbf{Safe/Constrained Reinforcement Learning}
Our algorithm is also relevant to the literature of safe/constrained Reinforcement Learning (RL). Some safe/constrained RL works focus on discrete actions and their regret scales with the size of action set \cite{constrained_MDP_efroni2020exploration,wei2021provably,liu2021learning} while others \cite{Conservative_RL_Bandits_LiweiWang_SimonDu_ICLR_2022_yang2022a,Conservative_RL_Bandits_LiweiWang_SimonDu_ICLR_2022_yang2022a,Conservative_ConstrainedPolicyOptimization_achiam2017constrained,Conservative_ProjectBasedConstrainedPolicyOptimizatino_ICLR_2020_Yang2020Projection-Based} apply to the continuous control problems. 
However, most of them only satisfy the constraints in expectation or with a high probability \cite{Conservative_RL_Bandits_LiweiWang_SimonDu_ICLR_2022_yang2022a,Conservative_RL_Bandits_LiweiWang_SimonDu_ICLR_2022_yang2022a,Conservative_ConstrainedPolicyOptimization_achiam2017constrained,Conservative_ProjectBasedConstrainedPolicyOptimizatino_ICLR_2020_Yang2020Projection-Based,SafeRL_AlmostSureViolationConstraint_JHU_2022_castellano2021reinforcement,SafeRL_LinearFunction_YangLin_UCLA_ICML_2021_pmlr-v139-amani21a,safe_exploartion_primal_dual_ding2021provably,constrained_RL_linear_approximation_ghosh2022provably,constrained_MDP_efroni2020exploration,constrained_RL_primal_dual_ding2020natural}. A recent work \cite{Saute_rl_sootla2022saute} tries to solve RL with safety constraints satisfied almost surely, but no theoretical constraint satisfaction is guaranteed. When these algorithms are applied to online control systems like water supply management, the safety constraints can still be violated for some adversarial sequences.
By contrast, our algorithm exploits the control priors and provides a theoretical guarantee for the safety constraint satisfaction.

\section{Problem Formulation}
\begin{figure}
	\centering
 \vspace{-0.4cm}
 \includegraphics[width=0.48\textwidth]{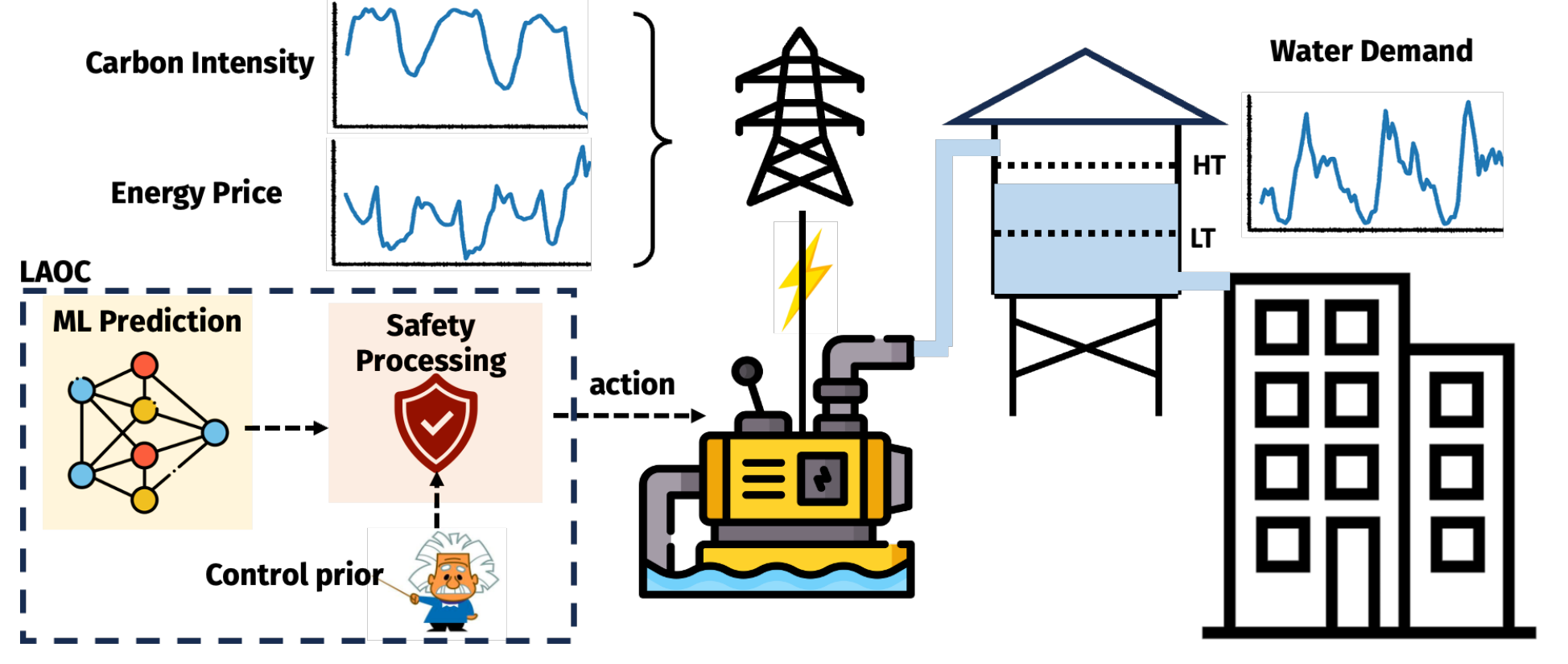}
\caption{Water Supply Infrastructure with ML predictions.}
\vspace{-0.2cm}	
\end{figure} 

In this section, with the water supply management as the application scenario, we present the safe online control model. Next, we show the safe online control model applies to broader applications by specifying the dynamics, loss and risk functions. Finally, we give the assumptions on the dynamics and risk functions required for the algorithm design and analysis.
\subsection{Safe Online Control for Water Supply}\label{sec:water_supply_control_model}
In this section, we formulate an online control problem with time-varying costs and dynamics that captures the task of water supply management. A problem instance consists of \(H\) time slots. At the beginning of each time slot \(h \in [H]\), the controller observes the water level state \(x_{h}\) and decides on an action \(u_h \in \mathbb{R}^d\) to schedule the activation time and/or the speed of the pumps. This action incurs a non-negative carbon emission \(c_e(u_h, e_h)\) related to the carbon intensity \(e_h\), and a monetary cost \(c_p(u_h, p_h)\) related to the energy price \(p_h\).
Given the water level state \(x_h\) and the action \(u_h\), the system transitions to \(x_{h+1}\) at the end of slot \(h\) following the dynamic function \(f\) defined as:
\begin{equation}\label{eqn:dynamic}
x_{h+1} = f_h(x_{h}, u_h) = x_{h} + g(u_h) - w_h, \quad h = 1, \ldots, H,
\end{equation}
where \(w_h\) is the water consumption within time slot \(h\), and \(g\) maps the control signal \(u_h\) to the amount of water supply within time slot \(h\). Note that $g(u_h)$ is a linear function if we only control the activation time of pumping, and it is a nonlinear continuous function if we control the speed of pumps \cite{variable_speed_pump}. The water level state is expected to remain close to a nominal water level $\bar{x}$ in the water tanks. Deviation from this nominal water level incurs a penalty cost denoted as \(c_w(x_h)\).

For convenience, we denote \(y_h \coloneqq (e_h, p_h, w_h)\), and so \(y_{1:H} = (y_1, \ldots, y_H)\) is the information for the entire episode. The total loss at slot \(h\) is expressed as:
\begin{equation}
c_h(x_h, u_h) = \gamma_1 \cdot c_w(x_h) + \gamma_2 \cdot c_e(u_h, e_h) + \gamma_3 \cdot c_p(u_h, p_h),
\end{equation}
where \(\gamma_1\), \(\gamma_2\), and \(\gamma_3\) are weights used to convert the costs to the same measurement.
An online control policy, denoted by \(\pi\), outputs the action \(u_h\). The cumulative loss within an episode of \(H\) time slots, following policy \(\pi\), is expressed as:
The offline optimal loss is denoted by \(J_H^*\).

\textbf{Safety Constraint.} Online control algorithms must guarantee safety performance. For critical infrastructures, water supply management should maintain a safe water level to ensure reliable supply during emergencies. Failure to maintain a safe water level incurs a safety risk, denoted as \(r_h(x_h, u_h)\).  Given a nominal water level $\bar{x}$, a concrete form of safety risk can be denoted as 
\begin{equation}\label{eqn:risk_func}
r_h(x_h,u_h) = \gamma_w \cdot \mathrm{dist}(x_h,\bar{x})+ \gamma_b\cdot b(u_h),
\end{equation}
where $\mathrm{dist}(x_h,\bar{x})$ is a measure of distance between the water level $x_h$ and the nominal water level $\bar{x}$, $b(u_h)$ penalizes the power load of the scheduling action $u_h$, and $\gamma_w$ and $\gamma_b$ are balancing weights for the two risk metrics. Note that the distance function $\mathrm{dist}(x_h,\bar{x})$ can be an asymmetric function which provides different penalties for $x_h-\bar{x}\leq 0$ and $x_h-\bar{x}>0$. The asymmetric distance measure is flexible to model different penalties of overly-high and overly-low water levels. 
We define the total safety risk of a policy \(\pi\) over an episode with \(H\) rounds as \(R_H^{\pi} = \sum_{h=1}^H r_h(x_h, u_h)\).

To evaluate whether a controller is safe, we require a safety benchmark. In this paper, we use the scaled safety risk of an existing safe control prior \(\pi^{\dagger}\) as our benchmark. This means that for any problem instance \(y_{1:h}\) and \(h \in [H]\), the controller \(\pi\) must satisfy the safety constraint expressed as:
\begin{equation}\label{eqn:safety_constraint}
R_h^{\pi} \leq (1 + \lambda) R_h^{\pi^{\dagger}},
\end{equation}
where \(R_h^{\pi^{\dagger}}\) is the safety risk of the safe control prior \(\pi^{\dagger}\) and \(\lambda > 0\) is a preset parameter indicating the safety requirement level. The constraint in \eqref{eqn:safety_constraint} is called \textit{$(1+\lambda)-$safety}.

The intuition behind the safety constraint is that if the control prior has a worst-case safety performance guarantee for any instance, then a policy satisfying this constraint also ensures a performance guarantee adjustable by \(1 + \lambda\). This constraint must be satisfied in each round to provide a strong worst-case guarantee. The safe control prior can be a human-crafted algorithm with a theoretical worst-case performance guarantee or a reliable heuristic implemented in real systems for a long time. 
In water infrastructures, the control prior can be a traditional controller that is designed to maintain the safe water level \cite{MPC_energy_water_management_wanjiru2016model, MPC_real_time_water_operation_wang2021minimizing}.

\textbf{Objective}. 
We exploit ML predictions to optimize the expected loss while guaranteeing safety constraint for any problem instance. Given a safety requirement \(\lambda > 0\), the objective is:
\begin{equation}\label{eqn:constraint}
\begin{split}  &\min_{\pi}\mathbb{E}_{y_{1:H}}\left[J_H^{\pi}\right]\\
&\text{s.t.} \;\; R_h^{\pi} \leq (1+\lambda) R_h^{\pi^{\dagger}}, \;\; \forall h \in [H], \;\; \forall y_{1:H} \in \mathcal{Y}.
\end{split}
\end{equation}
For convenience, we define the collection of all control policies that satisfies the safety requirement with $\lambda$ as \(\Pi_{\lambda} = \{ \pi \mid R_h^{\pi} \leq (1+\lambda) R_h^{\pi^{\dagger}}, \forall h \in [H], \;\forall y_{1:H} \in \mathcal{Y}\}\). If \(\lambda\) is larger, then the size of \(\Pi_{\lambda}\) is also larger, providing more flexibility to optimize the average loss.
To solve this objective, we need to integrate the control prior \(\pi^{\dagger}\) into the ML-based controller. 

\subsection{Broader Applications}
While the formulation is specifically given for water supply management, it applies to many other online control problems by replacing the dynamic function $f_h$, the cost function $c_h$ and the risk function $r_h$ with concrete expressions. Here, we give the following two application examples.
\begin{itemize}
    \item \textbf{Battery management of EV charging station}. The battery management of Electrical Vehicle (EV) charging station is an online control problem where the agent needs to decide the amount of battery charging or discharging $u_h$ at each round to maintain a nominal State of Charge (SoC) $x_h$ that satisfies the charging demand \cite{Control_RobustConsistency_LQC_TongxinLi_Sigmetrics_2022_10.1145/3508038,EV_charging_data_lee2019acn}. In this problem, the dynamic of SoC $x_h$ is modeled by the dynamic function $f_h(x_h,u_h)=x_h+u_h-w_h$ where $w_h$ is the charging demand, the loss function $c_h$ defines the cost of charging and discharging. The risk function $r_h$ defines the risk of not satisfying the charging demand. Classic controllers \cite{classic_control_anderson2012optimal,power_of_predictions_yu2020power} can serve as the control prior $\pi^{\dagger}$ with risk performance guarantee.
    \item \textbf{Cooling control for sustainable data centers}.  In this application, the target of the data center agent is to maintain a temperature range with high carbon efficiency by making online decisions of cooling equipment management \cite{Cooling_control_luo2022controlling,hvac_optimizing_wong2022optimizing,chervonyi2022semi}. Failure to maintain a suitable temperature range will overheat the devices and render the risk of critical services denial. The dynamic function $f_h(x_h,u_h)$ models the temperature dynamic where $w_h$ is the randomness factor affecting the temperature change. The cost function $c_h$ captures the losses of carbon emission and energy costs. The risk function $r_h$ measures the risk of deviating from the normal temperature range. The traditional rule-based heuristics \cite{Cooling_control_luo2022controlling} that have verified performance in maintaining a suitable temperature can serve as the control prior $\pi^{\dagger}$.
\end{itemize}

\subsection{Assumptions}
In this paper, we assume the following conditions on the dynamic functions and the risk functions.

\begin{assumption}[Lipschitz dynamics]\label{assumption:lipschitz}
For each time \(h\), the function \(f_h\) is Lipschitz continuous with respect to \(x_h\) and \(u_h\) with Lipschitz constants \(\sigma_x \geq 0\) and \(\sigma_u \geq 0\), respectively, i.e., for any \((x, u)\) and \((x', u')\), \(f_h\) satisfies
\[
\begin{split}
&\left\| f_h(x, u) - f_h(x', u) \right\| \leq \sigma_x \| x - x' \| \\
&\left\| f_h(x, u) - f_h(x, u') \right\| \leq \sigma_u \| u - u' \|.
\end{split}
\]
\end{assumption}

\begin{assumption}[Well-conditioned risk functions]\label{assumption:smooth}
For each time \(h\), the risk function \(r_h\) is non-negative, \(\alpha\)-strongly convex, and \(\beta\)-smooth with respect to \((x_h, u_h)\).
\end{assumption}

The first assumption is the Lipschitz continuity of the dynamic functions, which is common in finite-horizon control models \cite{Control_RobustConsistency_LQC_TongxinLi_Sigmetrics_2022_10.1145/3508038,online_optimal_control_li2019online,competitive_control_yu2022competitive}. For water supply management, the dynamic function $f_h$ in \eqref{eqn:dynamic} is clearly Lipschitz continuous as $g$ is a Lipschitz continuous function.

The second assumption is the non-negativity, convexity and smoothness of the risk functions, which is a common regularity condition in control system costs \cite{online_optimal_control_li2019online,Control_PerturbationExponential_YihengLin_AdamWiermand_NIPS_2021_NEURIPS2021_298f5874,CompetitiveControl_GuanyaShi_CISS_2021_9400281,Decentralized_OCO_lin2022decentralized}. 
We are flexible to choose different risk functions that satisfy Assumption \ref{assumption:smooth}. For example, we can choose an asymmetric $\mathrm{dist}$ function as $\mathrm{dist}(x_h,\bar{x})=\gamma_{w,1}(\bar{x}-x_h)^2$ if $\bar{x}\geq x_h$ and $\mathrm{dist}(x_h,\bar{x})=\gamma_{w,2}(x_h-\bar{x})^2$ if $\bar{x}<x_h$ and a quadratic penalty of the power load, and the obtained risk function satisfies Assumption 3.2.

\section{Learning-Augmented Online Control (\ouralg)}\label{sec:potentil_competitive}

In this section, we present and analyze an algorithm, \ouralg, to solve the online control problem introduced in the previous section. Before stating the algorithm, we highlight the challenges created by the safety requirements.  

\subsection{Challenges Due to Safety Requirements}\label{sec:challenges}

Our goal is to find a policy satisfying safety constraints in \eqref{eqn:constraint} while exploiting the ML predictions to achieve a low loss. However, this is very challenging for online control where the future contexts are unknown to the agent. 

One might hope that a straightforward design that considers a linear combination of the control prior and the pure ML action (\lin) would be sufficient.  Formally, \lin is defined as $\pi=\rho \tilde{\pi} + (1-\rho) \pi^{\dagger}$, where $\tilde{\pi}$ is the pure ML policy and $\pi^{\dagger}$ is the policy prior. However, Proposition \ref{thm:necessary_competitiveness} shows that,
unless we completely ignore the ML policy (i.e.,
$\rho=0$),
\lin cannot guarantee $(1+\lambda)-$safety given any ML policy.

\begin{proposition}\label{thm:necessary_competitiveness}
Define the quality of pure ML as the normalized difference between the ML advice and the offline optimal action ${\|\tilde{u}-u^*\|^2}/{J_H^*}$. If
the pure ML have an arbitrarily low quality
(i.e., ${\|\tilde{u}-u^*\|^2}/{J_H^*}\to\infty$), \lin with $\rho\in (0,1]$ cannot guarantee $(1+\lambda)-$ safety for any finite $\lambda>0$.
\end{proposition}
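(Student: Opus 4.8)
The plan is to prove the claim by explicit construction: for any fixed mixing weight $\rho \in (0,1]$ and any finite safety parameter $\lambda > 0$, I will exhibit a family of scenarios in which the ML quality $\|\tilde{u} - u^*\|^2/J_H^*$ diverges while the $(1+\lambda)$-safety constraint \eqref{eqn:safety_constraint} is violated at the very first step. The key simplification is to work at $h = 1$, where the state $x_1$ is the common initial condition shared by \lin and the prior $\pi^{\dagger}$, so the dynamics play no role and the risk comparison reduces to evaluating $r_1(x_1, \cdot)$ at two different actions.

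First I would write the \lin action at the first step as $u_1 = \rho \tilde{u}_1 + (1-\rho) u_1^{\dagger}$, so its deviation from the prior is $u_1 - u_1^{\dagger} = \rho(\tilde{u}_1 - u_1^{\dagger})$, scaling linearly in $\rho$ and in the ML deviation. Next I would invoke Assumption \ref{assumption:smooth}: since $r_1$ is non-negative and $\alpha$-strongly convex, it is coercive and obeys $r_1(x_1, u_1) \geq \frac{\alpha}{2}\|(x_1, u_1) - (x_1^{\min}, u_1^{\min})\|^2$, where $(x_1^{\min}, u_1^{\min})$ is its minimizer. Hence the risk $R_1^{\pi} = r_1(x_1, u_1)$ of \lin grows at least quadratically in $\|u_1\|$, and therefore in $\rho\|\tilde{u}_1\|$ once $\|\tilde{u}_1\|$ is large. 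Crucially, the benchmark $R_1^{\pi^{\dagger}} = r_1(x_1, u_1^{\dagger})$ is a fixed finite constant determined only by the instance and the prior, and is independent of the ML advice.

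With these observations the construction is immediate: fix any instance $y_{1:H}$ with finite offline optimum $J_H^* > 0$ and let $\|\tilde{u}_1\| \to \infty$. Because $\rho > 0$, the \lin action satisfies $\|u_1\| \to \infty$, so $R_1^{\pi} \to \infty$ while $R_1^{\pi^{\dagger}}$ stays fixed; thus $R_1^{\pi} > (1+\lambda) R_1^{\pi^{\dagger}}$ for all sufficiently large $\|\tilde{u}_1\|$ and any finite $\lambda$, breaking $(1+\lambda)$-safety. Since the offline optimizer $u^*$ and the optimal cost $J_H^*$ depend only on $y_{1:H}$ and not on the ML advice, the same divergence also forces $\|\tilde{u} - u^*\|^2/J_H^* \to \infty$, so low ML quality and the safety violation occur together.

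The step I expect to carry the real weight is the simultaneous control of the two ratios. Coercivity from strong convexity blows up $R_1^{\pi}$ relative to the fixed $R_1^{\pi^{\dagger}}$, but one must verify this can be arranged without inflating $J_H^*$ --- this is precisely what lets the quality metric diverge rather than remain bounded. The resolution is that $J_H^*$ is a property of the instance $y_{1:H}$ alone and can be held finite while $\tilde{u}_1$ is driven to infinity independently. Finally, the exclusion of $\rho = 0$ is essential and transparent in this argument: at $\rho = 0$ the \lin action collapses onto the prior, giving $R_1^{\pi} = R_1^{\pi^{\dagger}}$ regardless of the ML advice, so no violation can be induced.
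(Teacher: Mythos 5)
Your proof is correct, but it takes a genuinely different route from the paper's. The paper argues by contrapositive on a constructed instance with linear dynamics and a prior of competitive ratio $\eta_{\pi^{\dagger}}$: assuming \lin satisfies $(1+\lambda)$-safety over the whole horizon, it applies $\alpha$-strong convexity of the cumulative risk together with the first-order optimality condition $\nabla_{u^*} R_H^{\pi^*}=0$ (once to the mixed action, once to the prior's action), plus the triangle inequality, to derive an explicit finite bound
\[
\frac{\|\tilde{u}-u^*\|^2}{R_H^*}\leq \frac{2}{\alpha}\left( \frac{1-\rho}{\rho}\sqrt{\eta_{\pi^{\dagger}}-1}+\frac{1}{\rho}\sqrt{(1+\lambda)\eta_{\pi^{\dagger}}-1}\right)^2,
\]
so unbounded ML quality contradicts safety. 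You instead give a direct counterexample that is local to the first step: coercivity from strong convexity of $r_1$ forces the \lin risk $r_1(x_1,\rho\tilde{u}_1+(1-\rho)u_1^{\dagger})$ to diverge quadratically as $\|\tilde{u}_1\|\to\infty$ (possible only because $\rho>0$), while the benchmark $r_1(x_1,u_1^{\dagger})$ is fixed, so the constraint at $h=1$ breaks for every finite $\lambda$. Your argument is more elementary and needs fewer auxiliary ingredients — no linear dynamics, no competitive-ratio assumption on the prior, no first-order condition at the offline optimum — and it pinpoints where the violation occurs. What the paper's contrapositive buys in exchange is a quantitative necessary condition: an explicit quality threshold (in terms of $\rho$, $\lambda$, $\eta_{\pi^{\dagger}}$, $\alpha$) that the ML advice must satisfy whenever \lin is safe, which characterizes when \lin \emph{can} be safe rather than merely exhibiting a failure; it also covers divergence of the quality ratio through small $J_H^*$ rather than only through large $\|\tilde{u}\|$, though for disproving a guarantee your single family of counterexamples suffices.
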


Proposition \ref{thm:necessary_competitiveness} is proven by constructing a contradictory example that if $(1+\lambda)-$safety with finite $\lambda$ is satisfied by \lin with $\rho\in(0,1]$, the quality of ML advice $\frac{\|\tilde{u}-u^*\|^2}{J_H^*}$ must be bounded by a finite value. In
other words, $(1+\lambda)-$safety cannot be satisfied
by \lin with a potentially unsafe ML model in the worst case.

Overcoming the limitation of \lin with respect to safety guarantees requires a more flexible combination of pure ML and the control prior. 
Thus, we give a second natural approach maps the ML advice into a safe action set defined by the safety constraint for each round $h$ as
\begin{equation}\label{eqn:naive_design}
\overline{\mathcal{U}}_{\lambda,h}\!\!=\!\!\left\{u_h \mid R_h\leq(1+\lambda)R_h^{\pi^{\dagger}}\right\},
\end{equation}
where $R_h=\sum_{i=0}^h r_i(x_i,u_i)$ and $R_h^{\pi^{\dagger}}=\sum_{i=0}^h r_i(x_i^{\dagger},u_i^{\dagger})$.
The mapping can be a linear combination that selects the action as $u_h=\bar{\rho}_h \tilde{u}_h + (1-\bar{\rho}_h) u^{\dagger}_h$ where $\bar{\rho}_h=1$ if $\tilde{u}_h\in\overline{\mathcal{U}}_{\lambda,h}$ and $\bar{\rho}_h$ is the solution of $\smash{r_h(x_h,\bar{\rho}_h \tilde{u}_h + (1-\bar{\rho}_h) u^{\dagger}_h) = (1+\lambda)R_h^{\pi^{\dagger}}-R_{h-1}}$ if $\tilde{u}_h\notin\overline{\mathcal{U}}_{\lambda,h}$. We refer to this policy as \linplus.

\linplus uses a time-varying combination variable $\bar{\rho}_h$, so it is much more flexible than \lin and can strictly guarantee $(1+\lambda)$-safety given any instance as long as the safe action set in \eqref{eqn:naive_design} is non-empty. Unfortunately, the naive design of safe action set $\overline{\mathcal{U}}_{\lambda,h}$ in \eqref{eqn:naive_design} can be empty, which results in no feasible actions.  This is illustrated by the following example.

\textit{Example 4.1.} \textit{Suppose that $\sum_{i=0}^{h}r_i(x_i,u_i)=(1+\lambda)\sum_{i=0}^{h}r_i(x_i^{\dagger},u_i^{\dagger})$ is satisfied at time $h$. If $\smash{x_{h+1}=x_{h+1}^{\dagger}}$ holds at round $h+1$, the agent can always choose ${u_{h+1}=u_{h+1}^{\dagger}}$ to satisfy \eqref{eqn:naive_design} at round $h+1$. However, when $\smash{x_{h+1}\neq x_{h+1}^{\dagger}}$, 
it is possible that the control prior has a low loss for its state $x_{h+1}^{\dagger}$
at time $h+1$ such that for any action $u\in\mathcal{U}$ the true loss $c_{h+1}(x_{h+1},u)$ is lager than the scaled prior loss $\smash{(1+\lambda)c_{h+1}(x_{h+1}^{\dagger},u_{h+1}^{\dagger})}$. In such a case, the naive safe action set $\overline{\mathcal{U}}_{\lambda,h+1}$ is empty, and the control agent cannot maintain the
inequality in \eqref{eqn:naive_design}, thus potentially violating the subsequent safety constraints.}

The failures of the intuitive policies \lin and \linplus show that for a policy to combine the ML advice and the control prior, it must be flexible and conservative enough to guarantee that feasible actions exist to meet the safety constraints.  In the next section, we give the design that can theoretically guarantee the $(1+\lambda)$-safety for any sequence and ML advice. 

\subsection{Algorithm Design}\label{sec:competitivenessguarantee} 
In this section, we give an overview of the design of Learning Augmented Online Control (\ouralg). 

First, we highlight the design of the safe action set used in the algorithm. 
Instead of directly guaranteeing the inequality in \eqref{eqn:naive_design}, we ensure that the resulting cumulative loss satisfies $\sum_{i=0}^{h}r_i(x_i,u_i)+\phi_h\leq (1+\lambda)\sum_{i=0}^{h}r_i(x_i^{\dagger},u_i^{\dagger})$ with an added reservation $\phi_h\geq 0$ for hedging. 
With a proper design of the reservation, $\mathcal{U}_{\lambda,h}, h\in[h,H]$ can be guaranteed to be not empty for all the possible future control environments $y_{h:H}$.
To this end, we design a reservation in the next proposition,
whose proof is deferred to Appendix~\ref{sec:robustnessproof}.

\begin{proposition}\label{thm:potentaildesign_concrete}
Define a safe set $\mathcal{U}_{\lambda,h}, \lambda>0$ as 
\begin{equation}\label{eqn:robust_constraint}
\mathcal{U}_{\lambda,h}\!\coloneqq\!\left\{u_h \in \mathcal{U} \mid \!R_{h}+\phi_h(u_h)\leq (1+\lambda) R^{\pi^{\dagger}}_{h}\right\},
\end{equation}
where $R_{h}=
\sum_{i=0}^{h}r_i(x_i,u_i)$ and $R^{\pi^{\dagger}}_{h}=\sum_{i=0}^{h}r_i(x_i^{\dagger},u_i^{\dagger})$ are
the true loss and the loss of control prior, respectively. Moreover, 
{
\(\phi_h(u)\!\!=\!\!q_h\|x_{h+1}\!-\!x_{h+1}^{\dagger}\|^2\!\!=\!\!q_h\|f_h(x_{h},u)\!-\!f_h(x_{h}^{\dagger},u_h^{\dagger})\|^2\)} is a reservation function,
where $
q_h=C_1(1+\frac{1}{\lambda})\frac{\beta}{2}\sum_{h'=0}^{H-h-1}(C_2\sigma_x^{2})^{h'}
$
for constants $C_1\geq 1$ and $C_2\geq 1$.
With Assumptions~\ref{assumption:lipschitz} and~\ref{assumption:smooth}, if $\mathcal{U}_{\lambda,h-1}$ is not empty and the action $u_{h-1}$ at round $h-1$ is selected from the safe set $\mathcal{U}_{\lambda,h-1}$,
then $\mathcal{U}_{\lambda,h}$ is not empty and always includes $u_h^{\dagger}$. 
\end{proposition}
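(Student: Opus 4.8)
The plan is to argue by induction on $h$, showing that the prior's action $u_h^{\dagger}$ is always a feasible point of $\mathcal{U}_{\lambda,h}$, which simultaneously establishes non-emptiness (assuming, as the prior is a valid policy, that $u_h^{\dagger}\in\mathcal{U}$). The induction hypothesis is the feasibility of $u_{h-1}$, which by the definition of $\mathcal{U}_{\lambda,h-1}$ gives $R_{h-1}+\phi_{h-1}(u_{h-1})\leq (1+\lambda)R^{\pi^{\dagger}}_{h-1}$. First I would substitute $u_h=u_h^{\dagger}$ into the defining inequality of $\mathcal{U}_{\lambda,h}$, expand $R_h=R_{h-1}+r_h(x_h,u_h^{\dagger})$ and $R^{\pi^{\dagger}}_h=R^{\pi^{\dagger}}_{h-1}+r_h(x_h^{\dagger},u_h^{\dagger})$, and use the hypothesis to cancel the cumulative terms. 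This reduces the entire claim to the single-step inequality
\[
r_h(x_h,u_h^{\dagger})-(1+\lambda)r_h(x_h^{\dagger},u_h^{\dagger})+\phi_h(u_h^{\dagger})\leq \phi_{h-1}(u_{h-1}),
\]
i.e. the reservation carried over from step $h-1$ must pay for both the per-step risk gap and the freshly created reservation.

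Next I would bound every piece of this inequality in terms of the state deviation $\|x_h-x_h^{\dagger}\|^2$, which is exactly what $\phi_{h-1}(u_{h-1})=q_{h-1}\|x_h-x_h^{\dagger}\|^2$ measures. For the risk gap, I would use the $\beta$-smoothness of $r_h$ (Assumption~\ref{assumption:smooth}) to write $r_h(x_h,u_h^{\dagger})-r_h(x_h^{\dagger},u_h^{\dagger})\leq \langle \nabla_x r_h(x_h^{\dagger},u_h^{\dagger}),x_h-x_h^{\dagger}\rangle+\tfrac{\beta}{2}\|x_h-x_h^{\dagger}\|^2$, then control the inner product via Young's inequality and the self-bounding property of non-negative $\beta$-smooth functions, namely $\|\nabla_x r_h(x_h^{\dagger},u_h^{\dagger})\|^2\leq 2\beta\, r_h(x_h^{\dagger},u_h^{\dagger})$. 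Choosing the Young parameter as $\beta/\lambda$ makes the resulting $\lambda\, r_h(x_h^{\dagger},u_h^{\dagger})$ term cancel exactly against the $-\lambda r_h(x_h^{\dagger},u_h^{\dagger})$ slack, leaving $r_h(x_h,u_h^{\dagger})-(1+\lambda)r_h(x_h^{\dagger},u_h^{\dagger})\leq \tfrac{\beta}{2}(1+\tfrac{1}{\lambda})\|x_h-x_h^{\dagger}\|^2$. For the new reservation I would apply the Lipschitz dynamics (Assumption~\ref{assumption:lipschitz}) to get $\phi_h(u_h^{\dagger})=q_h\|f_h(x_h,u_h^{\dagger})-f_h(x_h^{\dagger},u_h^{\dagger})\|^2\leq q_h\sigma_x^2\|x_h-x_h^{\dagger}\|^2$.

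Combining these, the single-step inequality follows once $\tfrac{\beta}{2}(1+\tfrac{1}{\lambda})+q_h\sigma_x^2\leq q_{h-1}$. The final step is to verify this is precisely what the closed form of $q_h$ guarantees: the geometric-sum definition satisfies the recursion $q_{h-1}=C_1(1+\tfrac{1}{\lambda})\tfrac{\beta}{2}+C_2\sigma_x^2\, q_h$, so with $C_1\geq 1$ and $C_2\geq 1$ the left-hand side is dominated termwise. This yields $u_h^{\dagger}\in\mathcal{U}_{\lambda,h}$, proving non-emptiness and closing the induction.

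I expect the main obstacle to be the risk-gap bound: the first-order expansion leaves an uncontrolled linear term $\langle\nabla_x r_h,x_h-x_h^{\dagger}\rangle$, and the crux is to absorb it into the $\lambda$-slack by pairing the self-bounding inequality with the exact Young parameter $\beta/\lambda$. Forcing the leading coefficient to be $\tfrac{\beta}{2}(1+\tfrac{1}{\lambda})$ is what pins down the specific form of the reservation, and checking the telescoping recursion for $q_h$ (rather than merely asserting monotonicity) is the second delicate point.
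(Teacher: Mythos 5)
Your proposal is correct and follows essentially the same route as the paper's proof: the same induction with the same reduction to the single-step inequality $r_h(x_h,u_h^{\dagger})-(1+\lambda)r_h(x_h^{\dagger},u_h^{\dagger})+\phi_h(u_h^{\dagger})\leq\phi_{h-1}(u_{h-1})$, the same Lipschitz bound $\phi_h(u_h^{\dagger})\leq q_h\sigma_x^2\|x_h-x_h^{\dagger}\|^2$, and the same verification that the geometric-sum form of $q_h$ absorbs both terms. The only cosmetic difference is that you derive the risk-gap bound $(1+\tfrac{1}{\lambda})\tfrac{\beta}{2}\|x_h-x_h^{\dagger}\|^2$ from first principles (smoothness, the self-bounding property of non-negative smooth functions, and Young's inequality with parameter $\beta/\lambda$), whereas the paper invokes it as a cited auxiliary result (Lemma~\ref{eqn:tri-smoothness}); your derivation is precisely the standard proof of that lemma, so the two arguments coincide.
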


The key insight behind the formulation of the reservation  $\phi_h(u)$ in \eqref{eqn:robust_constraint}
is to hedge against the possible violation of safety constraints in future rounds.  If the resulting state difference $\|x_{h+1}-x_{h+1}^{\dagger}\|^2$ from choosing $u_h$ is greater, the possible loss difference $\sum_{i=h+1}^H r_i(x_i,u_i)-(1+\lambda)r_i(x_i^{\dagger},u_i^{\dagger})$ in the following rounds can also be greater in the worst case. Thus, the reservation $\phi_h(u)$ is designed as the scaled state difference to account for the worst-case future risk difference
between the true control policy and 
the control prior $\pi^{\dagger}$.

\begin{algorithm}[t]
\caption{Learning Augmented Online Control (\ouralg)}
 \begin{algorithmic}[1]
\label{alg:expert_robust_Q}
\REQUIRE{ML model $\tilde{\pi}$ and control prior $\pi^{\dagger}$}
 
\FOR{\text{time horizon }$h=0,\cdots,H-1$}

\STATE Observe state $x_h$, information $\{r_h,f_h\}$, and last-step context $w_{h-1}$.

\STATE  Update the policy prior's state
$x_{h}^{\dagger}=f_{h-1}(x_{h-1}^{\dagger},u_{h-1}^{\dagger}) + w_{h-1}$

\STATE Obtain an action $u_h^{\dagger}$ by the prior $\pi^{\dagger}$, and update prior risk $R_{h}^{\pi^{\dagger}}=R^{\pi^{\dagger}}_{h-1}+r_h(x_{h}^{\dagger},u^{\dagger}_h)$

\STATE Obtain the ML action $\tilde{u}_h$ via the ML model $\tilde{\pi}$

\STATE \textbf{if } the ML action $\tilde{u}_h\in \mathcal{U}_{\lambda,h}$ \textbf{then } take $u_h=\tilde{u}_h$ 

\STATE \textbf{else } take $u_h=m(\tilde{u}_h)$ \textbf{end if} {\color{gray}\textit{// Map to a safe action set $\mathcal{U}_{\lambda,h}$~\eqref{eqn:robust_constraint} by \eqref{eqn:projection} or \eqref{eqn:linear_combination}}}

\STATE Update true loss $J_{h}=J_{h-1}+c_h(x_{h},u_h)$ and risk $R_{h}=R_{h-1}+r_h(x_{h},u_h)$
\ENDFOR
\end{algorithmic}
\end{algorithm}

As a consequence of Proposition \ref{thm:potentaildesign_concrete}, if $u_h$ is selected from $\mathcal{U}_{\lambda,h}$ for each round $h$, there always exists a non-empty safe action set $\mathcal{U}_{\lambda, h}$ in the subsequent steps, and thus $(1+\lambda)-$safety is strictly satisfied for each round. Based on Proposition \ref{thm:potentaildesign_concrete}, given an ML policy $\tilde{\pi}$ and a control prior $\pi^{\dagger}$, we design the online learning-augmented control policy \ouralg as shown in Algorithm~\ref{alg:expert_robust_Q}.  
At each round $h$ within an episode, the controller first evaluates the loss of the control prior. To achieve this, after observing the true state $x_h$ and $f_{h-1}, w_{h-1}$, 
we first calculate a ``virtual state'' corresponding to the control prior for the same online information $y_{0:h-1}$, denoted by $\smash{x_h^{\dagger}=f_{h-1}(x_{h-1}^{\dagger},u_{h-1}^{\dagger})+w_{h-1}}$. Next, we query the control prior $\pi^{\dagger}$ with a state $x_h^{\dagger}$ and obtain an action $u_h^{\dagger}$, which can be used to update the cumulative risk $R_{h}^{\dagger}$ at round $h$. By doing so, a safe action set $\mathcal{U}_{\lambda,h}$ can be constructed by Proposition~\ref{thm:potentaildesign_concrete}.
 
To utilize the ML advice for loss performance, we select an action that is close enough to the pure ML action from the safe action set. If the ML action $\tilde{u}_h$ is in the safe action set $\mathcal{U}_{\lambda,h}$, then we simply select $u_h=\tilde{u}_h$. Otherwise, we can use a mapping function $m:\mathbb{R}^d\rightarrow \mathcal{U}_{\lambda,h}$ that maps the ML action $\tilde{u}_h$ into an action in the safe action set.  
One choice of $m$ is the projection operation which selects action as
 \begin{equation}\label{eqn:projection}
 u_h=m(\tilde{u}_h,\mathcal{U}_{\lambda,h})=\arg\min_{u\in\mathcal{U}_{\lambda,h}} \|\tilde{u}_h-u\|.
 \end{equation}
When the safe action set is a convex set (e.g. the dynamic functions $\{f_h:h\in [H]\}$ are linear  \cite{competitive_control_goel2022competitive,competitive_control_yu2022competitive,goel2022best,regret_LQR_predictions_zhang2021regret}), the projection can be efficiently solved. Otherwise,  the complexity can be high especially for high dimensional actions \cite{Homeomorphic_Projection_chen2021_low_complexity,differentiable_projection_chen2021enforcing}. Under such cases, we can choose $m$ as a linear combination as below \begin{equation}\label{eqn:linear_combination}u_h=m(\tilde{u}_h,\mathcal{U}_{\lambda,h})=\rho_h \tilde{u}_h+(1-\rho_h)u_h^{\dagger},
 \end{equation}
 where we need to solve an one-dimensional combination variable $\rho_h\in [0,1]$
 as a solution of $R_{h-1}+r_h(x_h,\rho_h \tilde{u}_h+(1-\rho_h)u_h^{\dagger})+\phi_h(\rho_h \tilde{u}_h+(1-\rho_h)u_h^{\dagger})= (1+\lambda) R^{\dagger}_{h}$.
 We will prove in Theorem \ref{thm:optimal} that \ouralg with both mapping functions in \eqref{eqn:projection} and \eqref{eqn:linear_combination} share the same expected loss bound.

The time complexity of LAOC is $O(H (T_\mathrm{ML}+T_\mathrm{prior}+T_\mathrm{map}) )$ where $T_\mathrm{ML}$, $T_\mathrm{prior}$ and $T_\mathrm{map}$ are the time complexities of the ML inference, the control prior and the mapping operations, respectively. $T_\mathrm{ML}$ is determined by the ML architecture. The time complexity of the control prior $T_\mathrm{prior}$ usually increases with the complexity of the control problem. Take the control prior ROBD \cite{SOCO_OBD_R-OBD_Goel_Adam_NIPS_2019_NEURIPS2019_9f36407e} as an example, the complexity to solve the optimization in ROBD scales with the dimension of the action. Furthermore, the mapping complexity $T_\mathrm{map}$ depends on the action-state dimensions and the complexity of the control model. If the safe action set in \eqref{eqn:robust_constraint} is convex (e.g. linear dynamic leads to a convex safe action set), we can use a convex optimization solver to efficiently solve the projection in \eqref{eqn:projection}. When the safe action set is non-convex, the projection into a non-convex action set has a high time complexity. In such cases, we can map the ML action to the safe action set by solving an one-dimensional combination variable in \eqref{eqn:linear_combination}.
 
\textbf{Safety-aware finetuning.}
If we have access to the pure ML model, we can finetune it based on available sequence data to further improve average loss performance with the safety guarantee. 
Specifically, 
given the pure ML model $\tilde{\pi}$ which outputs the ML action $\tilde{u}_h$ and the safe action set $\mathcal{U}_{\lambda, h}$, 
we finetune the ML model by minimizing the empirical loss of safe actions:
\begin{equation}\label{eqn:training}
\tilde{\pi}_{\lambda}^{(n)}\!\!=\!\arg\min_{\tilde{\pi}\in\Pi}\!\!\sum_{y_{1:H}\in\mathcal{D}_n}\sum_{h=1}^Hc_h\big(x_h,m\left(\tilde{u}_h,\mathcal{U}_{\lambda,h}\right)\big),
\end{equation} 
where $\mathcal{D}_n$ is the finetuning dataset with $n$ sequences. 
To finetune the ML model with~\eqref{eqn:training}, we can directly 
perform the back-propagation through the online process where all the operations are differentiable. The projection in \eqref{eqn:projection} can be implicitly differentiated as  shown in \cite{agrawal2019differentiable}. The linear mapping in \eqref{eqn:linear_combination} is also differentiable 
by differentiating the equation to solve $\rho_h$.
\subsection{Performance Bounds}\label{sec:analysis_cost_ratio}\label{sec:regret}
We provide the performance analysis of \ouralg in this section. We first present the conclusion that the the safety constraint in \eqref{eqn:safety_constraint} is always satisfied by \ouralg. Next, we give the average performance bound of \ouralg under the safety guarantee. Last but not least, we provide the performance bound by safety-aware finetuning in Eqn.\eqref{eqn:training}.

\subsubsection{Safety constraint satisfaction}
In Proposition \ref{thm:potentaildesign_concrete}, we prove that the safe set $\mathcal{U}_{\lambda,h}$ in \eqref{eqn:robust_constraint} is not empty for each round $h$. Since \ouralg (Algorithm \ref{alg:expert_robust_Q}) guarantees that the action $u_h$ lies in the safe set at each round, we can get the conclusion of safety constraint satisfaction in the next theorem. 
\begin{theorem}\label{thm:safety}
By \ouralg (Algorithm \ref{alg:expert_robust_Q}) with safety set $\mathcal{U}_{\lambda,h}$ in \eqref{eqn:robust_constraint}, for any problem sequence $y_{1:H}$ and any round $h\in [H]$, we can guarantee that the safety risk constraint in \eqref{eqn:safety_constraint} is satisfied.
\end{theorem}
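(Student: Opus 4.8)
The plan is to prove the claim by a straightforward induction over the round index $h$, using Proposition~\ref{thm:potentaildesign_concrete} to propagate non-emptiness of the safe sets and the non-negativity of the reservation $\phi_h$ to translate membership in $\mathcal{U}_{\lambda,h}$ into the desired inequality~\eqref{eqn:safety_constraint}. The central observation is that the cumulative true risk $R_h=\sum_{i=0}^{h} r_i(x_i,u_i)$ appearing in the definition~\eqref{eqn:robust_constraint} of the safe set is exactly $R_h^{\pi}$ for the policy $\pi=\ouralg$, so it suffices to show that \ouralg selects $u_h\in\mathcal{U}_{\lambda,h}$ at every round and that $\mathcal{U}_{\lambda,h}$ is never empty.

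First I would establish the base case. Since the controller and the prior share the same initial state at the first round ($x_0=x_0^{\dagger}$), choosing $u_0=u_0^{\dagger}$ yields $R_0=R_0^{\pi^\dagger}$ and $\phi_0(u_0^{\dagger})=q_0\|f_0(x_0,u_0^{\dagger})-f_0(x_0^{\dagger},u_0^{\dagger})\|^2=0$; because $\lambda>0$ and $R_0^{\pi^\dagger}\ge 0$, this gives $u_0^{\dagger}\in\mathcal{U}_{\lambda,0}$, so $\mathcal{U}_{\lambda,0}$ is non-empty. For the inductive step I would assume that $\mathcal{U}_{\lambda,h-1}$ is non-empty and that \ouralg picked $u_{h-1}\in\mathcal{U}_{\lambda,h-1}$. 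Proposition~\ref{thm:potentaildesign_concrete} then immediately yields that $\mathcal{U}_{\lambda,h}$ is non-empty and in fact contains $u_h^{\dagger}$.

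Next I would verify that \ouralg actually returns an action inside $\mathcal{U}_{\lambda,h}$, closing the induction. If $\tilde{u}_h\in\mathcal{U}_{\lambda,h}$ the algorithm sets $u_h=\tilde{u}_h$, which is trivially in the set. Otherwise it applies the mapping $m$; for the projection~\eqref{eqn:projection} the output lies in $\mathcal{U}_{\lambda,h}$ by definition, while for the linear combination~\eqref{eqn:linear_combination} I would invoke an intermediate-value argument: the map $\rho\mapsto R_{h-1}+r_h\big(x_h,\rho\tilde{u}_h+(1-\rho)u_h^{\dagger}\big)+\phi_h\big(\rho\tilde{u}_h+(1-\rho)u_h^{\dagger}\big)-(1+\lambda)R_h^{\pi^\dagger}$ is continuous, is non-positive at $\rho=0$ (since $u_h^{\dagger}\in\mathcal{U}_{\lambda,h}$) and strictly positive at $\rho=1$ (since $\tilde{u}_h\notin\mathcal{U}_{\lambda,h}$), so a root $\rho_h\in[0,1)$ exists and the resulting $u_h$ meets the defining inequality of $\mathcal{U}_{\lambda,h}$ with equality. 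In either branch $u_h\in\mathcal{U}_{\lambda,h}$, which both completes the induction and supplies the hypothesis for round $h+1$.

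Finally I would conclude. Having shown $u_h\in\mathcal{U}_{\lambda,h}$ for every $h\in[H]$, the defining inequality of~\eqref{eqn:robust_constraint} gives $R_h+\phi_h(u_h)\le(1+\lambda)R_h^{\pi^\dagger}$; since $\phi_h(u_h)=q_h\|x_{h+1}-x_{h+1}^{\dagger}\|^2\ge0$, dropping it yields $R_h^{\pi}=R_h\le(1+\lambda)R_h^{\pi^\dagger}$, which is precisely~\eqref{eqn:safety_constraint}. I expect the only delicate point to be the linear-combination branch, where I must confirm that a feasible $\rho_h\in[0,1]$ genuinely exists; this is where continuity of the convex, $\beta$-smooth risk and of $\phi_h$, together with the non-emptiness guarantee of Proposition~\ref{thm:potentaildesign_concrete} (providing the $\rho=0$ endpoint $u_h^{\dagger}$), does the work, so the theorem reduces almost entirely to that earlier proposition.
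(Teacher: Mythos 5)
Your proof is correct and takes essentially the same route as the paper: the theorem is reduced to Proposition~\ref{thm:potentaildesign_concrete} via induction on the non-emptiness of $\mathcal{U}_{\lambda,h}$, combined with the fact that Algorithm~\ref{alg:expert_robust_Q} always selects $u_h\in\mathcal{U}_{\lambda,h}$, and the non-negativity of the reservation $\phi_h$ then lets you drop it to obtain~\eqref{eqn:safety_constraint}. Your treatment is in fact slightly more careful than the paper's terse argument, since you explicitly verify (via the intermediate-value argument) that the linear-combination mapping~\eqref{eqn:linear_combination} always admits a feasible $\rho_h$, a step the paper leaves implicit.
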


Theorem \eqref{thm:safety} highlights that \ouralg can strictly guarantee $(1+\lambda)$-safety for any problem instance even when the ML policy $\tilde{\pi}$ has an arbitrarily bad performance. Under the safety guarantee, we are concerned about the expected loss performance given in the next theorem.

\subsubsection{Average performance}
The expected loss relies heavily on the choices of $C_1$ and $C_2$ in \eqref{eqn:robust_constraint} of Proposition \ref{thm:potentaildesign_concrete}.  To see this, if $C_1$ or $C_2$ is larger, the reservation $\phi_h(u)$ becomes larger, so the safe action set $\mathcal{U}_{\lambda,h}$ contains less feasible actions. Thus, the policy cannot utilize the ML model to improve the average loss performance effectively. 
On the contrary, if $C_1$ and $C_2$ approach 1, it can happen in the earlier rounds that the sizes of the safe action sets $\mathcal{U}_{\lambda,h}$ are too large and the selected action is too far from the prior action. This results in large state differences between $x_h$ and $x_h^{\dagger}$ in future rounds, resulting in small safe action set $\mathcal{U}_{\lambda,h}$ and impeding the exploitation of ML advice.
 The following analysis formally shows the factors that affect the expected performance and suggest the choices of $C_1$ and $C_2$. 

\begin{theorem}\label{thm:optimal}
Assume that the ML policy $\tilde{\pi}$  is $L_{\pi}$-Lipschitz continuous and the function $c_h$ is $L_c$-Lipschitz continuous,
by optimally choosing $C_1=1+\frac{1}{\sqrt{1+\lambda}}$ and $C_2=\arg\min_{c\geq 1} \{\frac{c}{c-1}\sigma_u^2(1-(c\sigma_x^2)^{H-h})/(1-c\sigma_x^2)\} $ in \eqref{eqn:robust_constraint},
the expected loss of \ouralg $\pi_{\lambda}$ that guarantees $(1+\lambda)$-safety is bounded by\begin{equation}\label{eqn:cr_bound_theorem}
 \begin{split}\mathbb{E}\left[J_H^{\pi_{\lambda}}\right]\leq \mathbb{E}\left[J_H^{\tilde{\pi}}\right]\!+\!B\mathbb{E}\left[\sum_{h=0}^{H-1} \left[\delta_h-(\sqrt{1+\lambda}-1)^2Gr_h^{\dagger}\right]^+ \right],
  \end{split}
  \end{equation}
 where 
 $\delta_h=\|\tilde{\pi}(\tilde{s}_h)-u_h^{\dagger}\|$ is the
 action discrepancy between the pure ML action and the control prior, $G\coloneqq 2(L_c(1+\frac{C_2}{C_2-1}\sigma_u^2(1-(C_2\sigma_x^2)^{H-h})/(1-C_2\sigma_x^2)))^{-1}$ 
 and $B\coloneqq
 L_c(1+(1+2L_{\pi})\sigma_u\sum_{i=0}^{H-1}(\sigma_x+2\sigma_uL_{\pi})^{h-i-1})$
 are constants of the control system,
  in which $\beta$ is the smoothness parameter of the risk function $r_h$, $A$ is the size of the state-action set, $L_{\pi}$ is the Lipschitz constant of the ML advice policy $\tilde{\pi}$,
 $\sigma_x$ and $\sigma_u$ are the Lipschitz constants
 of the dynamics model $f_h$. 
\end{theorem}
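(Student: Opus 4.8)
The plan is to bound $\mathbb{E}[J_H^{\pi_{\lambda}}]-\mathbb{E}[J_H^{\tilde{\pi}}]$ by running the \ouralg trajectory $x_h$ and the pure-ML trajectory $\tilde{x}_h$ side by side and charging all excess loss to the rounds on which the safe-set mapping actually pulls the ML action back toward the prior. Writing the telescoped difference $J_H^{\pi_{\lambda}}-J_H^{\tilde{\pi}}=\sum_h\big(c_h(x_h,u_h)-c_h(\tilde{x}_h,\tilde{u}_h)\big)$ and applying the $L_c$-Lipschitzness of $c_h$, each summand is at most $L_c(\|x_h-\tilde{x}_h\|+\|u_h-\tilde{u}_h\|)$, so the whole argument reduces to (i) a per-round bound on the mapping deviation and (ii) a propagation bound on the state gap.

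For (i) I would invoke Proposition \ref{thm:potentaildesign_concrete}, which guarantees $u_h^{\dagger}\in\mathcal{U}_{\lambda,h}$ and hence a strictly positive slack in the safety constraint evaluated at the prior action. The key claim is that the safe set contains the segment from $u_h^{\dagger}$ toward the local ML action $\tilde{\pi}(s_h)$ up to action-distance $\tau_h\coloneqq(\sqrt{1+\lambda}-1)^2 G r_h^{\dagger}$: whenever $\|\tilde{\pi}(s_h)-u_h^{\dagger}\|\le\tau_h$ the ML action is already feasible and $u_h=\tilde{u}_h$, and otherwise the linear-combination map leaves a residual $(1-\rho_h)\|\tilde{\pi}(s_h)-u_h^{\dagger}\|\le[\|\tilde{\pi}(s_h)-u_h^{\dagger}\|-\tau_h]^+$ (projection is no larger, giving the common bound promised for \eqref{eqn:projection} and \eqref{eqn:linear_combination}). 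The slack freed at $u_h^{\dagger}$ is a fixed fraction of $r_h^{\dagger}$, and the fraction $(\sqrt{1+\lambda}-1)^2$ is exactly what the allocation $C_1=1+\tfrac{1}{\sqrt{1+\lambda}}$ buys when the $(1+\lambda)$ risk budget is split between the immediate risk increase and the reservation $\phi_h$; the constant $G$, an inverse effective Lipschitz factor aggregating the geometric reservation terms, converts this risk slack into an action-distance threshold. Because $\delta_h$ is measured at the pure-ML state $\tilde{s}_h$ rather than at $s_h$, I would pass between them via $\|\tilde{\pi}(s_h)-\tilde{\pi}(\tilde{s}_h)\|\le L_{\pi}\|x_h-\tilde{x}_h\|$, yielding $\|u_h-\tilde{u}_h\|\le[\delta_h-\tau_h]^++2L_{\pi}\|x_h-\tilde{x}_h\|$.

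For (ii) I would combine Assumption \ref{assumption:lipschitz} with the last inequality to obtain the one-step recursion $\|x_{h+1}-\tilde{x}_{h+1}\|\le(\sigma_x+2\sigma_u L_{\pi})\|x_h-\tilde{x}_h\|+\sigma_u[\delta_h-\tau_h]^+$, whose unrolling gives a geometric sum of the past excess deviations with ratio $\sigma_x+2\sigma_u L_{\pi}$. Substituting the state bound and the action bound into the telescoped loss and collecting the coefficient of each $[\delta_i-\tau_i]^+$ produces exactly the prefactor $B=L_c\big(1+(1+2L_{\pi})\sigma_u\sum_i(\sigma_x+2\sigma_u L_{\pi})^{h-i-1}\big)$; taking expectations then delivers \eqref{eqn:cr_bound_theorem}.

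The main obstacle is the threshold in (i): proving that $\mathcal{U}_{\lambda,h}$ admits a ball of radius $\tau_h=(\sqrt{1+\lambda}-1)^2 G r_h^{\dagger}$ around $u_h^{\dagger}$ requires tracking how the slack depends jointly on the $(1+\lambda)$ scaling and on the reservation $\phi_h$ from Proposition \ref{thm:potentaildesign_concrete}, and then choosing $C_1,C_2$ to make the resulting bound tight. The optimum is a genuine tradeoff: larger constants inflate $\phi_h$, shrinking $\mathcal{U}_{\lambda,h}$ and raising the unusable part of $\delta_h$, whereas constants near $1$ let early deviations enlarge $\|x_h-\tilde{x}_h\|$ and thereby shrink later safe sets. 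The choice $C_1=1+\tfrac{1}{\sqrt{1+\lambda}}$ optimizes the per-round risk/reservation split, while $C_2=\arg\min_{c\ge1}\{\tfrac{c}{c-1}\sigma_u^2(1-(c\sigma_x^2)^{H-h})/(1-c\sigma_x^2)\}$ minimizes precisely the geometric propagation factor entering $G$; verifying that these choices simultaneously keep $\mathcal{U}_{\lambda,h}$ nonempty (consistent with Proposition \ref{thm:potentaildesign_concrete}) and maximize $\tau_h$ is the delicate calculation, with everything else reducing to Lipschitz bookkeeping.
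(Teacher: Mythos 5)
Your proposal is correct and follows essentially the same route as the paper's proof: the per-round Lipschitz decomposition, the key lemma that a ball of radius $(\sqrt{1+\lambda}-1)^2 G r_h^{\dagger}$ around $u_h^{\dagger}$ lies inside $\mathcal{U}_{\lambda,h}$ (the paper's Lemmas C.2 and D.2, where your ``budget split'' is formalized by optimizing $\lambda_0=\sqrt{1+\lambda}-1$, equivalent to $C_1=1+\tfrac{1}{\sqrt{1+\lambda}}$), and the state-gap recursion with ratio $\sigma_x+2\sigma_u L_{\pi}$ driven by $\sigma_u[\delta_h-\tau_h]^+$ (the paper's Lemma D.3), collected into the constant $B$ exactly as you describe.
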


The expected loss bound in Theorem~\ref{thm:optimal} relies on the choices of $C_1$ and $C_2$. 
When $\lambda$ becomes larger, the safety constraint is more relaxed, so a smaller $C_1$ is chosen to get a smaller reservation $\phi_u(h)$ in Proposition \ref{thm:potentaildesign_concrete}, allowing more flexibility to follow the ML advice. Also, $C_2$ is selected to alleviate the impact of the dynamic sensitivity measured by $\sigma_x$ and $\sigma_u$ (Assumption~\ref{assumption:lipschitz}) on the expected loss.

The expected loss bound in Theorem~\ref{thm:optimal} can be interpreted as follows. 
First, the safety constraint naturally 
creates a gap of expected loss between  \ouralg  $\pi_{\lambda}$ and the ML advice $\tilde{\pi}$. More specifically,
given a control prior $\pi^{\dagger}$, when $\lambda>0$ becomes
smaller, the safety constraint is more stringent,
which thus makes the actions of \ouralg $\pi_{\lambda}$ potentially deviate more from those of
the ML advice policy $\tilde{\pi}$ and increases
the bound in \eqref{eqn:cr_bound_theorem}.
On the contrary,
when $\lambda>0$ becomes larger, the safety constraint is more relaxed,
reducing the expected loss of
 \ouralg $\pi_{\lambda}$.
In particular, if $\lambda$ is sufficiently large,
the term $[\delta_h-(\sqrt{1+\lambda}-1)^2Gr_h^{\dagger}]^+$  can reduce to zero, voiding
the safety constraint and resulting in the same expected loss as pure ML. 
Additionally, the expected loss is affected by the action discrepancy $\delta_h$ 
because a larger $\delta_h$ means larger difference between the prior $\pi^{\dagger}$ and the ML model $\tilde{\pi}$,
naturally making it more difficult for \ouralg to approach
ML $\tilde{\pi}$ while satisfying safety constraints.

\subsubsection{Generalization performance of safety-aware finetuning}
In this section, we consider the case in which the ML policy in \ouralg
is trained on \eqref{eqn:training}.
We bound the average loss gap between \ouralg policy and the  
unconstrained-optimal policy $\pi^*$.
We denote $\pi_{\lambda}^{(n)}(s_h)=m(\tilde{\pi}_{\lambda}^{(n)}(s_h),\mathcal{U}_{\lambda,t})$ as \ouralg policy by Algorithm \ref{alg:expert_robust_Q} with the ML model $\tilde{\pi}_{\lambda}^{(n)}$ trained by \eqref{eqn:training} on a dataset with $n$ traces,  and bound the expected loss $\mathbb{E}[J_H^{\pi_{\lambda}^{(n)}}]$ in the following theorem.

\begin{theorem}\label{thm:sublinear_main}
If ML policy is trained by the loss function in Eqn.~\eqref{eqn:training} with a training dataset with $n$ samples,
with probability at least $1-\delta, \delta\in(0,1)$, 
the expected loss of our competitiveness-constrained policy $\pi_{\lambda}^{(n)}$
is bounded by
\begin{equation}\nonumber
\begin{split}
\mathbb{E}\left[J_H^{\pi_{\lambda}^{(n)}}\right]\leq&
\mathbb{E}\left[J_H^{\pi^{*}}\right]+B\mathbb{E}\left[\sum_{h=0}^{H-1} \left[\delta_h-(\sqrt{1+\lambda}-1)^2Gr_h^{\dagger}\right]^+ \right]\\
&+\mathcal{O}\left(\sqrt{\frac{1}{n}\ln\frac{N(\epsilon, \Pi_{\lambda}, \hat{L}_1^n)}{\delta}}\right),
\end{split}
\end{equation}
where the system-related parameters $B, G$ and $\delta_h$ have the same definition as in Theorem \ref{thm:optimal},  $N(\epsilon, \Pi_{\lambda}, \hat{L}_1^n)$ is the $\epsilon$-covering number of the competitive policy space $\Pi_{\lambda}$ with $L_1$-norm as the distance measure
(the distance of two policies
$\pi$ and $\pi'$ is $\|\pi-\pi'\|_{\hat{L}_1^n}=\frac{1}{n}\sum_{t=1}^n\sum_{h=1}^H\|\pi(s_h^{(t)})-\pi'(s_h^{(t)})\|_1$) on the training dataset $\mathcal{D}_n$, and $\mathcal{O}$ indicates the scaling with the loss upper bound $P$, the horizon $H$, and the size of action-state space $\mathcal{X}\times\mathcal{U}$. 
\end{theorem}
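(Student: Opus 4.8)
The plan is to split the excess loss of $\pi_\lambda^{(n)}$ over the unconstrained optimum into a \emph{bias} term, controlled by Theorem~\ref{thm:optimal}, and a \emph{statistical estimation} term, controlled by a uniform-convergence argument over the constrained class $\Pi_\lambda$. Let $\pi_\lambda^{\pi^*}$ denote the \ouralg policy obtained by feeding the unconstrained optimum $\pi^*$ as the ML advice; by Theorem~\ref{thm:safety} it lies in $\Pi_\lambda$, and (assuming the ML class is rich enough that $m\circ\pi^*$ is achievable by the finetuning in \eqref{eqn:training}) it is a valid comparator for the empirical risk minimizer. Writing $\hat J_n(\pi)=\frac1n\sum_{t=1}^n J_H^\pi(y_{1:H}^{(t)})$ for the empirical loss, I would first use the empirical-minimality of $\pi_\lambda^{(n)}$, i.e.\ $\hat J_n(\pi_\lambda^{(n)})\le \hat J_n(\pi_\lambda^{\pi^*})$, to obtain
\[
\mathbb{E}\!\left[J_H^{\pi_\lambda^{(n)}}\right]\le \mathbb{E}\!\left[J_H^{\pi_\lambda^{\pi^*}}\right]+2\sup_{\pi\in\Pi_\lambda}\left|\mathbb{E}\!\left[J_H^{\pi}\right]-\hat J_n(\pi)\right|.
\]
The first term on the right is bounded directly by Theorem~\ref{thm:optimal} applied with $\tilde\pi=\pi^*$, giving $\mathbb{E}[J_H^{\pi^*}]+B\,\mathbb{E}[\sum_{h=0}^{H-1}[\delta_h-(\sqrt{1+\lambda}-1)^2Gr_h^{\dagger}]^+]$, which is exactly the middle term of the claimed bound. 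It then remains to bound the uniform deviation by the covering term.

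For the estimation term I would run a standard covering/symmetrization argument adapted to the policy class. Since the per-episode loss is bounded (the $\mathcal{O}$ absorbs the loss upper bound $P$ and the horizon $H$), each fixed policy satisfies a Hoeffding concentration at rate $\sqrt{\ln(1/\delta)/n}$. To make this uniform over $\Pi_\lambda$ I would introduce a ghost sample (symmetrization) so that the data-dependent $\epsilon$-cover of size $N(\epsilon,\Pi_\lambda,\hat{L}_1^n)$ may legitimately be used, apply a union bound over the $N$ cover elements, and then optimize over $\epsilon$. This produces
\[
\sup_{\pi\in\Pi_\lambda}\left|\mathbb{E}\!\left[J_H^{\pi}\right]-\hat J_n(\pi)\right|\le \mathcal{O}\!\left(\sqrt{\tfrac1n\ln\tfrac{N(\epsilon,\Pi_\lambda,\hat{L}_1^n)}{\delta}}\right),
\]
which is precisely the last term of the statement, so combining the two displays yields the theorem.

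The hard part, and what makes the covering argument legitimate, is establishing that $\pi\mapsto J_H^\pi$ is Lipschitz in the $\hat{L}_1^n$ policy metric, so that an $\epsilon$-cover of $\Pi_\lambda$ induces an $\mathcal{O}(\epsilon)$-cover of the episode-loss functionals and the discretization residual is negligible. This is a trajectory-perturbation analysis: for two nearby policies, a per-step action discrepancy at round $i$ propagates into the state gap $\|x_h-x_h'\|$ through the Lipschitz dynamics of Assumption~\ref{assumption:lipschitz}, and is additionally filtered through the safe-action mapping $m$ in \eqref{eqn:projection} or \eqref{eqn:linear_combination}, whose output depends on the entire history via the \emph{trajectory-dependent} set $\mathcal{U}_{\lambda,h}$ in \eqref{eqn:robust_constraint}. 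Two compounding effects must be tracked: errors accumulate multiplicatively over the horizon, producing the geometric factor $\sum_i(\sigma_x+2\sigma_uL_\pi)^{h-i-1}$ already visible in the constant $B$; and the mapping $m$ must be shown non-expansive/Lipschitz even though the set onto which it maps is itself moving with the state. I would control the sensitivity of $m$ via non-expansiveness of Euclidean projection onto the convex safe set for \eqref{eqn:projection}, or by implicitly differentiating the one-dimensional equation defining $\rho_h$ for \eqref{eqn:linear_combination}, using the $\alpha$-strong convexity and $\beta$-smoothness of $r_h$ (Assumption~\ref{assumption:smooth}) to bound how $\mathcal{U}_{\lambda,h}$ deforms under state perturbations, and then close an induction over $h$ whose accumulated constant is exactly $B$. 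The remaining subtlety is the realizability/best-in-class gap in the bias step: if $\pi^*\notin\Pi$, the comparator $\pi_\lambda^{\pi^*}$ should be read as the best achievable $m\circ\tilde\pi$, at the cost of an additional approximation term.
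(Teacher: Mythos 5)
Your proposal is correct and follows essentially the same route as the paper's proof: empirical-risk minimality of $\pi_{\lambda}^{(n)}$ against the comparator built by feeding $\pi^*$ through the safe mapping, a covering-number uniform-convergence bound (the paper invokes the generalization theorem of Bousquet et al.\ as a black box where you sketch the symmetrization behind it), and the bias bound of Theorem~\ref{thm:optimal} applied with $\tilde{\pi}=\pi^*$. Your additional care about the $\hat{L}_1^n$-Lipschitzness of $\pi\mapsto J_H^{\pi}$ and the realizability of the comparator fills in details the paper leaves implicit, but it does not change the structure of the argument.
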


Theorem~\ref{thm:sublinear_main}  shows that as the number
of training samples $n\to\infty$, the expected loss is bounded by the unconstrained-optimal expected loss $\mathbb{E}\left[J_H^{\pi^{*}}\right]$ plus an additional term relying on the expected loss of the prior $\pi^{\dagger}$ and the 
parameter $\lambda>0$.
This additional term is because the policy is optimized under the safety constraint in \eqref{eqn:safety_constraint}. 
When $\lambda$ becomes larger, the constraint is more relaxed and the expected loss is closer to the unconstrained-optimal expected loss.   
Also, Theorem~\ref{thm:sublinear_main} shows that our policy with the online-trained ML model  converges with a rate of $\sqrt{1/n}$. In particular, the convergence rate is affected by $\lambda$ through the covering number $N(\epsilon, \Pi_{\lambda}, \hat{L}_1^n)$ which indicates the richness of the competitive policy class $\Pi_{\lambda}$. 
Comparing to the unconstrained policy set $\Pi_{\infty}$, the covering number of the competitive policy class $\Pi_{\lambda}$ is smaller. This is because with the same ML model, 
the safety constraint reduces the set of feasible actions --- with a smaller $\lambda>0$,
the safe policy space becomes smaller, making
it easier for the convergence of \ouralg.

\section{Case Study}\label{sec:water_experiment}

In this section, we evaluate the performance of \ouralg by experiments on a concrete water supply case and compare \ouralg with different control baselines. 
\subsection{Setup}
In this section, we provide the experimental setups on the water supply management. We first present the architecture of water supply system with roof top water tanks. Next, we introduce the datasets used in the experiments including the traces of water demand, carbon intensity, and energy price. Following that, we define the concerned performance metrics in the experiments. Finally, we provide the settings of \ouralg and the baselines.
\subsubsection{Water supply system with roof top water tanks}
The water supply systems of many modern buildings are equipped with roof top water tanks. The roof top water tanks have large water storage capacities and exploit the gravity in the elevated level to supply water for building users. Water is pumped from municipal water sources to these roof top water tanks to maintain a water level. The water tanks can play an important part in sustaining water supply system because the manager can pump less water (by decreasing the activation time and/or the speed of pumps) to the water tanks when the carbon intensity and energy price are high while still satisfying the demand using the water stored in the water tanks. Beyond that, the water tanks are crucial for the safety of the buildings because they are equipped to supply water for fire protection systems and the mission-critical functions of the buildings. Therefore, we must make sure that the water level in a water tank is not far from its nominal water level to meet the safety requirements.

To sustain the water supply system for a building with roof top water tanks, we need to know the energy consumption of its pumping system to pump water to the water tanks. In this paper, we estimate the power $P_{\mathrm{pump}} (\mathrm{kW})$ of water pumping by the following formula converted from the horsepower formula used in engineering practice \cite{Horsepower_required_to_pump_water}:
\begin{equation}\label{eqn:power_pumping}
P_{\mathrm{pump}} = \frac{\mathrm{WF}\times \mathrm{HD} \times \mathrm{SG} }{102\cdot \eta_{\mathrm{pump}}},
\end{equation}
where $\mathrm{WF} (\mathrm{L/s})$ is the water volume flow, $\mathrm{HD} (\mathrm{m})$ is the height of the water tank, and $\mathrm{SG}=1$ is the water specific gravity, and $\eta_{\mathrm{pump}}$ is the power efficiency of the pumping system.
We develop a controller that decides the amount of water $u_h (\mathrm{m}^3)$ pumped into the water tanks in each hour round $h$. 
The effective water flow is $u_h/3.6 \;(\mathrm{L/s})$ \footnote{The amount of water supply per hour can be adjusted by either controlling the activation time of the pumps or the speed of the pumps. For ease of computation, we assume the speed of the pumps is constant within each hour.}, which corresponds to an energy consumption of $(u_h\times \mathrm{HD} \times \mathrm{SG})/(367.2\cdot \eta_{\mathrm{pump}})$ (in $\mathrm{kWh}$) by \eqref{eqn:power_pumping}. Here, we define the energy efficiency as the energy consumption to pump a unit $\mathrm{m}^3$ of water to the water tank in one hour and denote it as
\begin{equation}\label{energy_efficiency_pumping}
\eta = \frac{\mathrm{HD} \times \mathrm{SG} }{367.2\cdot \eta_{\mathrm{pump}}}.
\end{equation}

The setups in the experiment are given as below.
The buildings are $75\:\mathrm{m}$ high and have water tanks with a total volume of $80 m^3$ on the roof. Each control horizon has a span of 24 hours.  By \eqref{energy_efficiency_pumping}, the power consumption to pump a unit $\mathrm{m}^3$ of water is $\eta = 0.272\:\mathrm{kWh/m^3}$ by choosing the energy efficiency of the pumps as $\eta_{\mathrm{pump}}=75\%$ according to \cite{volk2013pump}.  
The controller decides the amount of water pumped into the water tanks in each hour as $u_h$ ($m^3$), so the energy consumption at hour round $h$ is $\eta \cdot u_h$ ($\mathrm{kWh}$).

\subsubsection{Performance Metrics}\label{sec:metrics_experiments}
In water supply management, the concerned performance metrics include the carbon and energy costs and the safety risk. Given the water supply system with roof top water tanks, the expressions of the objectives are given as below.

\textbf{Carbon cost}.
Given an action $u_h$ which is the the amount of pumped water at hour round $h$, the energy consumption is $\eta \cdot u_h$ ($\mathrm{kWh}$). Therefore, given the carbon intensity $e_h$ ($\mathrm{g/kWh}$) at round $h$, the carbon emission at round $h$ is $c_e(u_h, e_h)= e_h\cdot (\eta\cdot u_h)$. 

 \textbf{Energy cost}.
With the energy price $p_h$ for round $h$, the total energy cost at round $h$ is $c_p(u_h, e_h)=p_h\cdot (\eta \cdot u_h)$. 

\textbf{Deviation from the safe level}.
 We choose the nominal safe water level as $\bar{x}=40 \mathrm{m^3}$ (half of the total water tank capacity). We choose a quadratic penalty for water level deviation which restrains large deviation. Thus, the deviation is measured by the quadratic deviation from the nominal level $\bar{x}$, i.e. $c_w(x_h)= (x_h-\bar{x})^2$.

The loss function is a weighted combination of the deviation and the carbon and energy costs which is expressed as
\begin{equation}
c_h(x_h,u_h)= \gamma_1 \cdot (x_h-\bar{x})^2+\gamma_2\cdot  e_h\cdot (\eta\cdot u_h)+\gamma_3\cdot p_h\cdot (\eta \cdot u_h).
\end{equation}
We consider the expected loss $\mathbb{E}_{y_{1:H}}[J_H^{\pi}]=\mathbb{E}_{y_{1:H}}\left[\sum_{h=1}^H c_h(x_h,u_h)\right]$ where the expectation is taken on the distribution of the water demand, carbon intensity and energy price traces. 

\textbf{Safety risk}. The safety risk is determined by the deviation and the hourly energy consumption. A high deviation will increase the risk of not satisfying the water demand and a large energy consumption can add too much power load to the energy system. We consider a quadratic penalty to restrain large deviation and hourly energy consumption and the safety risk is expressed as
\begin{equation}\label{eqn:risk_func_exp}
r_h(x_h,u_h) = \gamma_w \cdot (x_h-\bar{x})^2+ \gamma_b\cdot (\eta \cdot u_h)^2.
\end{equation}
In some scenarios, we need to consider different penalties for overly-low and overly-high water levels and model the deviation as an asymmetric function. If the asymmetric function satisfies Assumption \ref{assumption:smooth} (e.g. an asymmetric $\mathrm{dist}$ function as $\mathrm{dist}(x_h,\bar{x})=\gamma_{w,1}(\bar{x}-x_h)^2$ if $\bar{x}\geq x_h$ and $\mathrm{dist}(x_h,\bar{x})=\gamma_{w,2}(x_h-\bar{x})^2$ if $\bar{x}<x_h$), the theoretical conclusions of \ouralg still hold, and the key observations of experiments also generalize to such asymmetric penalties.

Given a control prior $\pi^{\dagger}$, we consider $(1+\lambda)-$safety which guarantees for any sequence that the safety risk is always bounded by the scaled safety risk of $\pi^{\dagger}$, i.e. $\forall h \in [H], \forall y_{1:H} \in \mathcal{Y}, R_h^{\pi} \leq (1+\lambda) R_h^{\pi^{\dagger}}$. We also directly evaluate the safety risk performance by the maximum risk ratio on the testing dataset $\max_{y_{1:H}\in \mathcal{D}_{\mathrm{test}}} \left(R_H^\pi/R_H^{\pi^{\dagger}}\right)$, which is a commonly used metric for worst case performance. \cite{competitive_control_goel2022competitive,CompetitiveControl_GuanyaShi_CISS_2021_9400281,SOCO_ROBD_Adam_NeurIPS_2019_10.5555/3454287.3454455}.

\subsubsection{Water demand, carbon intensity and energy price}
The experiments are conducted based on some public datasets. We provide the details for the traces of water demand, carbon intensity and energy price as below.

\textbf{Water demand trace}. 
The consumed water at each hour round $h$ is $w_h$ and affects the water level dynamics through \eqref{eqn:dynamic}. 
In our experiments, we use the water demand dataset measured for university buildings in \cite{water_consumption_predixtion_bejarano2019swap}. 
For each building, the trace contains hourly water consumption from August 1st, 2018 to December 8th, 2018. Since the traces are measured on low-rise university buildings, we scale up the hourly water consumption by 10 to simulate the high-rise building with dense occupancy. 
The water consumption data of four residence hall is used for training the ML model for water supply management. We augment the water consumption data of another two residence halls and get the 1-year demand traces for 20 buildings which are held out for testing.  

To further evaluate the robustness of the algorithms, we also create an Out-Of-Distribution (OOD) testing dataset on the basis of the original testing dataset. We generate the OOD demand dataset by adding Gaussian noise to each sample in the original dataset.  The standard deviation of the Gaussian noise is set as 30\% of the maximum demand value.

\textbf{Carbon intensity trace}
The carbon intensity datasets are from California Independent System Operator (CAISO) which are published on the website of Electricity Maps \cite{electricity_maps}.
The carbon intensity datasets contain the hourly carbon intensity of a city in California. We use the carbon traces in 2022 to train the ML model, and we hold out the carbon traces in 2023 for validation and testing.  

\textbf{Energy price trace}
The electricity price datasets are from CAISO which are published on the website of Energy Online \cite{energy_price_data}. Each price trace in the dataset contains the energy price value every 5 mins. We convert the original traces into hourly price traces by calculating the average price within each hour. We use the price data in 2022 to train the ML model while holding out the price data in 2023 for validation and testing.

\begin{table*}[!t]
    \scriptsize
    \caption{Cost and risk performance}   \label{table:comparison_all_results}
    \vspace{-0.2cm}
    \centering
    \begin{tabular}{c|c|c|c|c|c|c|c|c|c|c} 
    \toprule
    \multirow{2}{*}{\textbf{Metrics}} & \multicolumn{4}{c|}{Control priors} & \multicolumn{2}{c|}{ML} & \multicolumn{4}{|c}{Learning-augmented designs}\\ 
    \cline{2-11}
                          & OGD & ROBD  & MPC-LSTM& TMPC & ML & CRL & \lin-0.5 & \lin-0.2 &\bf \ouralg ($\lambda=0.4$)&\bf \ouralg ($\lambda=0.8$) \\ 
\midrule
     \textbf{Avg. energy} (US\$)                         &7344 & 7347& 7008&7640& \bf 6169&6584&6169 &7190 &6872 & 6690\\
         \textbf{Avg. carbon} (kg)                         &17994 &18278 & 18007& 18820&\bf 16123 &16820& 17178& 17668 &17037 & 16526\\
  \textbf{Max risk ratio}    &2.04 &\bf 1.14 & 6.19&3.608&6.17 &4.60& 4.56&2.44 &  2.76 & 3.40\\ 
    \bottomrule
    \end{tabular}
\end{table*}

\begin{table*}[!t]
    \scriptsize
    \caption{Cost and risk performance under OOD setting}   \label{table:comparison_all_results_ood}
    \vspace{-0.2cm}
    \centering
    \begin{tabular}{c|c|c|c|c|c|c|c|c|c|c} 
    \toprule
    \multirow{2}{*}{\textbf{Metrics}} & \multicolumn{4}{c|}{Control priors} & \multicolumn{2}{c|}{ML} & \multicolumn{4}{|c}{Learning-augmented designs}\\ 
    \cline{2-11}
                          & OGD & ROBD  & MPC-LSTM& TMPC & ML & CRL & \lin-0.5 & \lin-0.2 &\bf \ouralg ($\lambda=0.4$)&\bf \ouralg ($\lambda=0.8$) \\ 
\midrule
     \textbf{Avg. energy} (US\$)                         &7610 & 7580& 7136&7937& \bf 6516&6854&7063 &7391 &6901 & 6798\\
         \textbf{Avg. carbon} (kg)                         &20494 &20577& 18630& 21143&\bf 18476 &18711& 19485& 20090 &19326 & 19060\\
  \textbf{Max risk ratio}    &4.09 &\bf 1.24 & 41.2&12.1&11.00 &7.68& 5.79&3.32 &  4.36 & 5.68\\ 
    \bottomrule
    \end{tabular}
\end{table*}

\subsubsection{Settings of \ouralg}
To implement \ouralg in Algorithm \ref{alg:expert_robust_Q}, we need an ML model $\tilde{\pi}$ and a control prior $\pi^{\dagger}$ as inputs. Also, we can perform safety-aware finetuning in Eqn.\eqref{eqn:training} to learn an ML model for Algorithm \ref{alg:expert_robust_Q}. Thus, we summarize the variants of \ouralg as follows.
\begin{itemize}
\item \ouralg($\lambda, \tilde{\pi}, \pi^{\dagger}$): We use ML model $\tilde{\pi}$ and control prior $\pi^{\dagger}$ as the inputs of \ouralg.   $\tilde{\pi}$ and $\pi^{\dagger}$ can be replaced with a concrete ML model and a concrete control prior, respectively. If not specified, \ouralg uses Online Gradient Descent (\ogd) as the control prior by default. If not specified, \ouralg uses the ML model purely trained without considering the safety constraint by default. $\lambda$ determines the $(1+\lambda)-$safety in \eqref{eqn:safety_constraint}. 
\item \ouralg-F($\lambda, \pi^{\dagger}$): We use control prior $\pi^{\dagger}$ and an ML model obtained by safety-aware finetuning in Eqn.\eqref{eqn:training} as the inputs of \ouralg.  If not specified, \ouralg-F uses Online Gradient Descent (\ogd) as the control prior by default. $\lambda$ determines the $(1+\lambda)-$safety in \eqref{eqn:safety_constraint}.
\end{itemize}

\textbf{ML model}.
The ML model for \ouralg is a recurrent neural network. It takes available information about demand, carbon intensity, and electricity price as inputs and outputs the action for each round. By default, the ML model has 2 hidden layers and each hidden layer has 12 neurons. The ML model is trained by the Adam optimizer with a learning rate $5\times 10^{-4}$ for 400 epochs.

\textbf{Control prior}.
The control prior can be selected from some controllers that focus on reducing the safety risk. \cite{MPC_energy_water_management_wanjiru2016model, MPC_real_time_water_operation_wang2021minimizing}. 
Some robust online optimization algorithms such as Online Gradient Descent ({\ogd}) \cite{SOCO_Prediction_Error_Meta_ZhenhuaLiu_SIGMETRICS_2019_10.1145/3322205.3311087}, Online Balanced Descent ({\robd}) 
\cite{SOCO_OBD_R-OBD_Goel_Adam_NIPS_2019_NEURIPS2019_9f36407e}  can be applied to optimize the safety risk, so they can serve as the control prior. Alternatively, we can apply Model Predictive Control ({\mpc}) \cite{MPC_energy_water_management_wanjiru2016model, MPC_real_time_water_operation_wang2021minimizing} to minimize the safety risk which is commonly used for water supply control as a control prior.  

Regarding the safe set in \eqref{eqn:robust_constraint}, the safety requirement parameter $\lambda$ is chosen from $[0,2]$. $C_1$ and $C_2$ are chosen based on Theorem \ref{thm:optimal}.

\subsubsection{Baselines}

We compare \ouralg with \ogd \cite{SOCO_Prediction_Error_Meta_ZhenhuaLiu_SIGMETRICS_2019_10.1145/3322205.3311087}, \robd
\cite{SOCO_OBD_R-OBD_Goel_Adam_NIPS_2019_NEURIPS2019_9f36407e} and \mpc \cite{MPC_energy_water_management_wanjiru2016model}  that focus on the safety risk, the pure ML that is trained on the average loss, and the naive learning-augmented design \lin. 
$\bullet$ Offline Optimal Policy ({\opt}): This is the optimal offline policy that knows all the information in advance and obtains
the optimal action for each episode. 

$\bullet$ Online Gradient Descent ({\ogd}): Online gradient descent \cite{SOCO_Prediction_Error_Meta_ZhenhuaLiu_SIGMETRICS_2019_10.1145/3322205.3311087} is an  online algorithm to minimize the safety risk without relying on any predictions. \ogd has provable regret bound and competitive ratio with proper choice of step size. We use \ogd as a control policy prior by default.

$\bullet$ Regularized Online Balanced Descent ({\robd}):
\robd is an online optimization algorithm to minimize the safety risk with one-step demand prediction. It enjoys provable competitive ratio given perfect one-step prediction \cite{SOCO_OBD_R-OBD_Goel_Adam_NIPS_2019_NEURIPS2019_9f36407e,SOCO_Memory_FeedbackDelay_Nonlinear_Adam_Sigmetrics_2022_10.1145/3508037}. We use \robd as a control policy prior.
We set the parameters for \robd
optimally according to \cite{SOCO_OBD_R-OBD_Goel_Adam_NIPS_2019_NEURIPS2019_9f36407e}.

$\bullet$ Model Predictive Control ({\mpc}): \mpc\cite{MPC_camacho2013model}
solves the control problem by leveraging predictions of the
future information. Here, we assume that at round $h$, the information $w_{h:H}$ is predicted as $\hat{w}_{h:H}$, and the per-round prediction error normalized by the maximum input range is $\epsilon=\mathbb{E}\left[\frac{1}{(H-h+1)(w_{\max})}\|w_{h:H}-\hat{w}_{h:H}\|\right]$.  In this paper, we use \mpc with a window size of $4$ hours as a control prior to minimize the risk. 
$\mpc-\epsilon$ is \mpc with a generated prediction error of $\epsilon$.

$\bullet$ Model Predictive Control with LSTM ({\mpclstm}): Due to the powerful time series prediction ability, Long Short-Term Memory (LSTM) has been utilized as a prediction model in \mpc in recent studies \cite{MPC_LSTM_jung2023model,huang2022lstm,LSTM_MPC_zarzycki2024lstm}. In the water supply control problem, we implement a LSTM model as the predictor and apply it in \mpc, which is called \mpclstm. The LSTM model has one LSTM layer with 60 hidden neurons and can predict the demand in the future 4 hours. The same training dataset for \rl is used for LSTM training. 

$\bullet$ Tube-based Model Predictive Control ({\tmpc}): 
\tmpc\cite{dynamic_tube_MPC_lopez2019dynamic,system_TMPC_sieber2021system} is a computationally efficient robust \mpc approach which creates state constraints (tube) based on a nominal dynamic model. \tmpc makes sure that the true state of \mpc stays within the tube. Since the nominal states are assumed to satisfy the constraint, \tmpc can also guarantee a constraint. In our experiments, we design a tube based on a nominal dynamic model exploiting the expected demand information. On the basis of existing \tmpc \cite{dynamic_tube_MPC_lopez2019dynamic,system_TMPC_sieber2021system}, we utilize the LSTM predictor in deciding an action while guaranteeing the action stays in the created tube.

$\bullet$ Machine Learning ({\rl}): This is
the purely-trained \rl model without safety constraints for any episode. 
 For fair comparison, we use the same neural architecture for pure ML and \ouralg. 

$\bullet$ Constrained Reinforcement Learning ({\crl}): As an important safe reinforcement learning algorithm, \crl \cite{CRL_ghosh2022provably,paternain2019constrained} has been applied for control problems. Most \crl methods guarantee a constraint in expectation or with a high probability. In our experiments, we implement \crl to satisfy the expected safety risk constraint $\mathbb{E}[R_h^{\pi}-(1 + \lambda) R_h^{\pi^{\dagger}}]\leq 0$ with $\lambda=0$. Not like original model-free \crl, we exploit the dynamic model information for value estimation in RL.

$\bullet$ Linear Combination (\textsf{\lin}): \lin-$\rho$ is the policy in Proposition \ref{thm:necessary_competitiveness} that linearly combines ML advice $\tilde{\pi}$ and \textsf{\robd} $\pi^{\dagger}$ as $\pi=\rho \tilde{\pi} + (1-\rho) \pi^{\dagger}$ with a combination factor $\rho\in [0,1]$. 

\begin{figure*}[t]	
	\centering
  \subfigure[Safety violation w/ $\lambda$]{
	\includegraphics[width=0.31\textwidth]{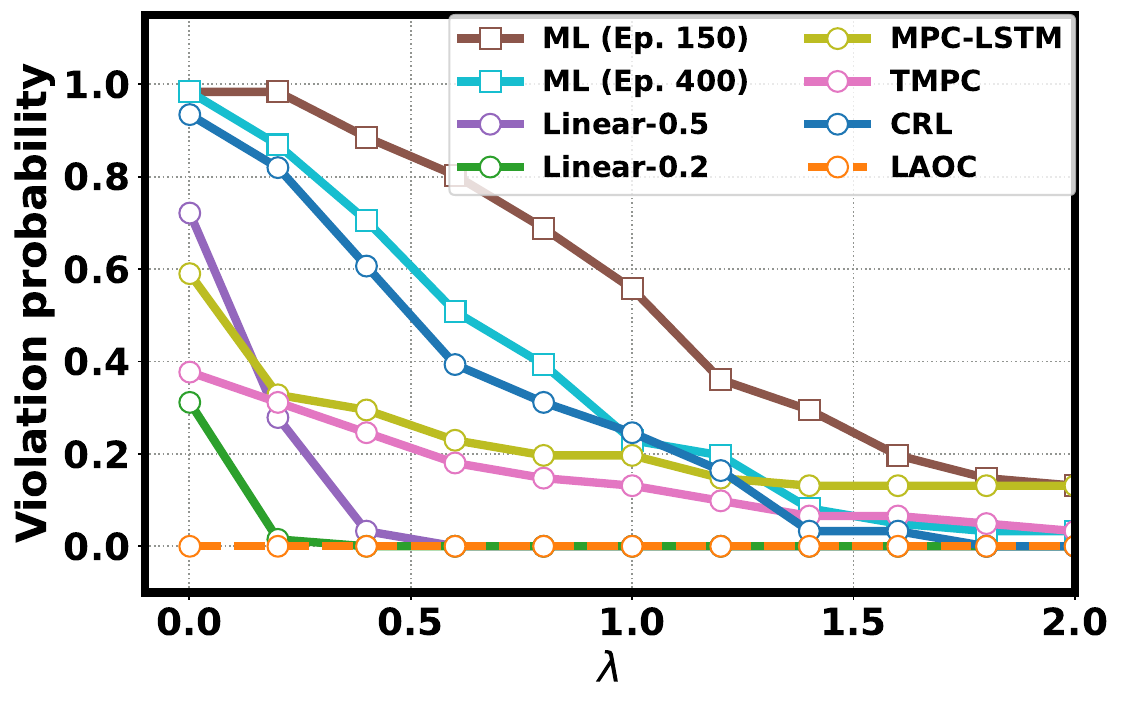}\label{fig:violation}
	}
  \subfigure[Average carbon cost]{
	\includegraphics[width=0.31\textwidth]{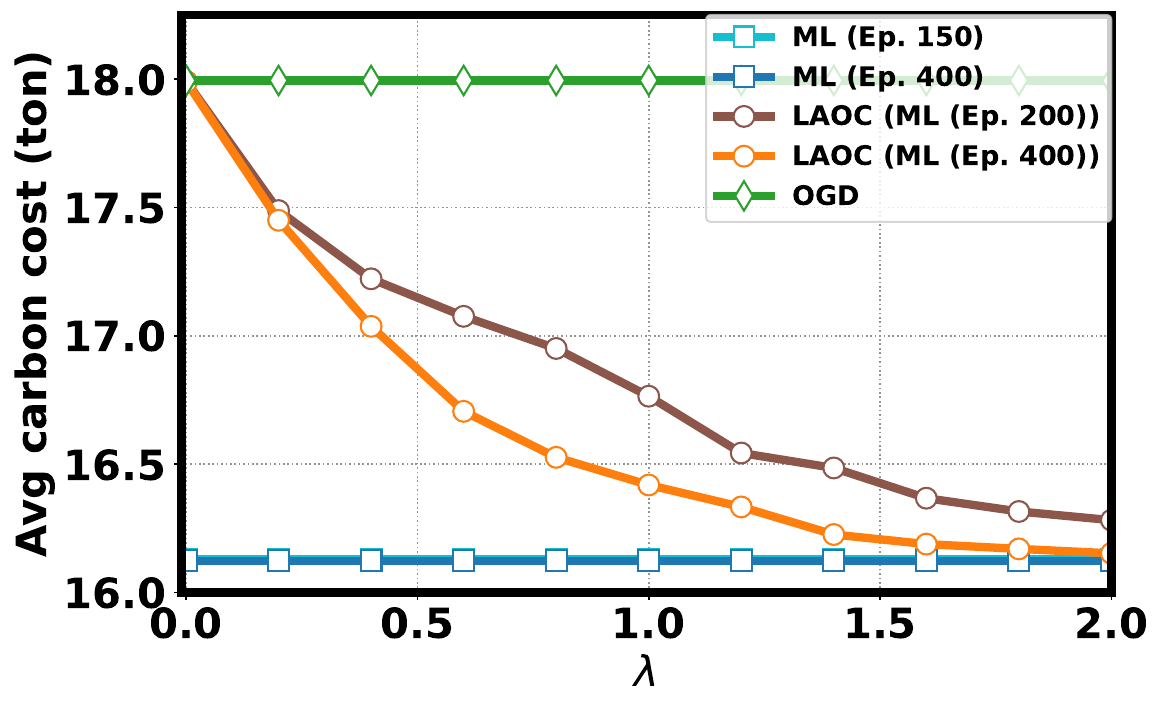}\label{fig:avg_carbon}
	}
 	 \subfigure[Average  energy cost]{
	\includegraphics[width=0.31\textwidth]{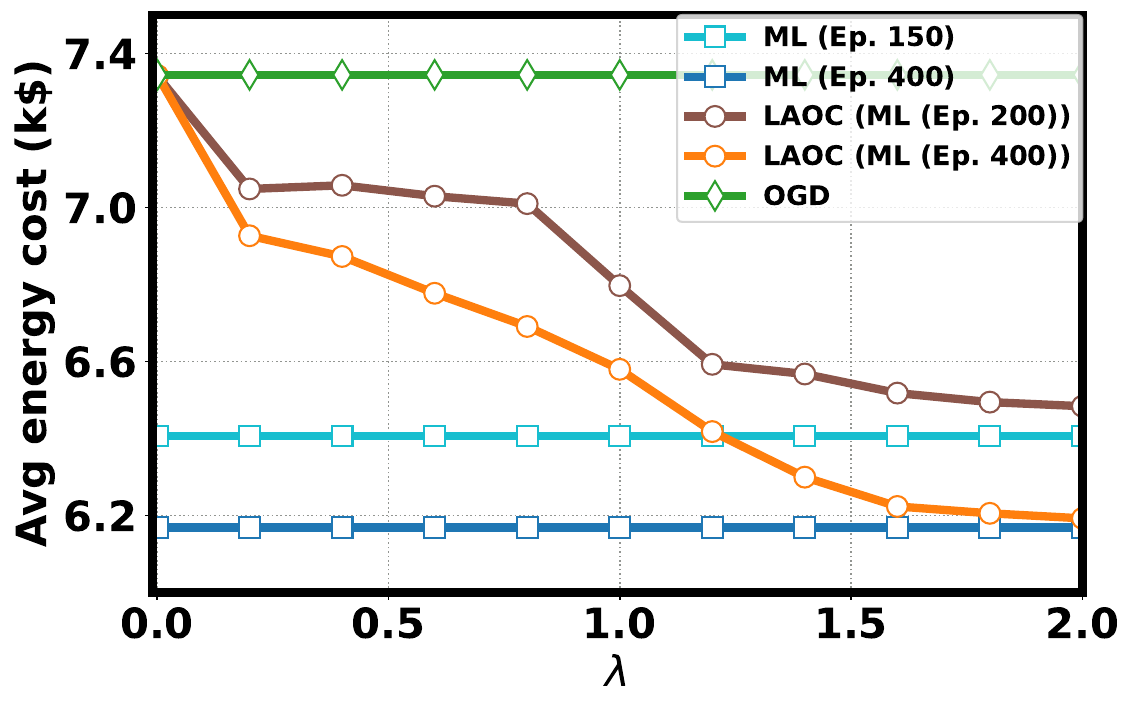}\label{fig:avg_energy}
	}
\vspace{-0.2cm}	
\caption{\small{Safety constraint violation and average costs. By default, \ogd is the control prior for \ouralg.  ML (Ep. $N$) is the ML model at the $N$th epoch. \ouralg (ML (Ep. $N$)) is \ouralg using the purely-trained ML model at the $N$th epoch.}}\label{fig:cost_standard}
\vspace{-0.5cm}
\end{figure*}

\subsection{Results}\label{sec:results}
We provide our main results for the default setting in Table \ref{table:comparison_all_results}. We give the average energy cost, the average carbon cost, and the maximum risk ratio for the control priors, the pure ML model and the learning-augmented algorithms including \lin and \ouralg. The results are evaluated for 20 buildings in one year.

First, we can find that the control prior \robd achieves the lowest safety risk which requires an accurate one-step prediction of the water demand. The control prior \ogd which does not rely on any prediction also achieves relatively low risk. However, the average energy costs and carbon emission are relatively large. Assuming a nearly-accurate predictor with a prediction error of $0.03$, \mpc-0.03 can achieve a maximum risk ratio of 2.52, an average carbon cost of 17782 kg, and an average energy cost of 6924 \$. Thus, \mpc has good risk and cost performances when a nearly-accurate predictor is applied. However, a real predictor such as LSTM in our experiments can have large prediction error. The LSTM in our experiment has an average prediction error of 0.05. This results in a higher safety risk and larger average energy cost and carbon emission as is shown by the performance of \mpclstm in Table \ref{table:comparison_all_results}. This shows that the performance of \mpc is largely affected by the quality of the predictor.  Furthermore, we can observe from Table \ref{table:comparison_all_results} that as a robust \mpc method, \tmpc can effectively reduce the safety risk, but it has much higher average energy costs and carbon emission. 

Different from control priors, the pure ML policy has the lowest energy cost and carbon emission, but it has much higher safety risk ratio. This is because the ML models are trained to optimize the average performances, but they can have arbitrarily bad performance when the adversarial instances exist. With the expected safety constraints, \crl can reduce the safety risk while sacrificing some average cost performance. However, we can observe from Table \ref{table:comparison_all_results} that the worst-case risk of \crl can still be very high, which is due to the existence of adversarial instances. The vulnerability of \rl and \crl impedes their deployments in real water supply systems which are critical for the safety of the buildings.

The learning-augmented designs are given to achieve a tradeoff between the average performance and the worst-case risk. As a naive learning-augmented design, \lin can reduce the safety risk to some extent by choosing a proper combination weight $\rho$. However, we can find that if we choose a large weight for ML model (e.g. $\rho =0.5$ in Table \ref{table:comparison_all_results}), the average performance can be good but the maximum risk ratio is still very high. Actually, by \lin-0.5, the $(1+\lambda)-$safety constraint is violated with a high probability as we will show in Figure \ref{fig:violation}. If we set a small weight for ML model (e.g. $\rho =0.2$ in Table \ref{table:comparison_all_results}), we can get a low safety risk ratio, but the average costs becomes very large. \ouralg is designed to optimizing the average performance while guaranteeing the $(1+\lambda)-$safety constraint in \eqref{eqn:safety_constraint}. We can observe that with a higher safety requirement $(e.g. \lambda=0.4)$, the safety risk ratio is low and close to that of control priors while the average costs are much lower than those of \lin. Also, with a lower safety requirement $(e.g. \lambda=0.8)$, the average costs of \ouralg are low and close to those of pure ML, and the risk ratio is also much lower than pure ML because the $(1+\lambda)-$safety constraint is always satisfied.
Next, we provide more details as below.

\subsubsection{Safety Violation}
The safety violation probability on testing dataset is given in Figure \ref{fig:violation}. The violation probability is the ratio of the number of safety constraint violation instances to the total testing instance number. A higher $\lambda$ in ($1+\lambda$)-safety in \eqref{eqn:safety_constraint} gives a less strict safety constraint, so the violation probability decreases with $\lambda$.   We can observe that ML can have a high safety violation probability even when $\lambda$ is large. If the ML model is not sufficiently trained (e.g. ML model at Epoch 150 in Figure \ref{fig:violation}), the safety violation probability is even higher. These show that pure ML is not safe enough for water supply systems. Although \crl reduces the safety violation probability comparing to \rl, it still has a high safety violation rate. Moreover, \mpclstm has a high safety violation rates due to the lack of prediction performance guarantee, and \tmpc has a reduced but non-zero safety violation probability.   As a learning-augmented design, \lin can also violates safety constraint especially when the safety requirement is high (small $\lambda$). Decreasing the combination weight for ML model from 0.5 to 0.2 can reduce the violation probability, but this results in a large increase of average costs shown in Table \ref{table:comparison_all_results}.   By contrast, \ouralg never violates safety constraint given any problem instance and any safety requirement parameter $\lambda$, which validates the effectiveness of \ouralg in strictly guaranteeing the safety constraint as proved in Theorem \ref{thm:safety}.   

\subsubsection{Cost-safety tradeoff}
Figure \ref{fig:avg_carbon} and Figure \ref{fig:avg_energy} demonstrates the tradeoff between the average costs and safety requirement for \ouralg. The preset parameter $\lambda$ in the safety constraint \eqref{eqn:safety_constraint} indicates the level of safety requirement: with smaller $\lambda$, the safety constraint becomes more strict.  When $\lambda=0$, the safety constraint is so strict that \ouralg reduces to the control prior \ogd which is the default control prior used in \ouralg. Thus, the carbon and energy costs of \ouralg is the same as those of \ogd.  When $\lambda$ becomes larger, $(1+\lambda)-$safety constraint \eqref{eqn:safety_constraint} becomes less strict, the average costs of \ouralg approaches the average costs of corresponding pure ML models, so \ouralg can achieve less carbon and energy costs. When $\lambda$ becomes large enough, we can observe that the average costs of \ouralg are the same as those of pure \rl model. The carbon and energy costs also show the impacts of the ML quality. The ML model at Epoch 400 is better than the ML model at Epoch 200, so the average costs of \ouralg with ML model at Epoch 400 are lower than those of \ouralg with ML model at Epoch 200.   These observations coincide with Theorem \ref{thm:optimal} which theoretically shows the tradeoff between the average costs and safety requirement.

\subsubsection{OOD Testing}
To further evaluate the robustness of the algorithms, we give results under the Out of Distribution (OOD) setting in Table \ref{table:comparison_all_results_ood}. We generate the OOD testing demand sequences by adding Gaussian noise to the original demand sequences. We can find that the control priors \robd and \ogd both achieve low enough safety risk ratio, but their average costs are very high. The LSTM predictor is largely affected by the OOD testing, causing a very high safety risk for \mpclstm. \tmpc can be applied to reduce the safety risk to some extent, but the worst-case safety risk is still very high. This is because the nominal model in \tmpc cannot define a robust enough tube in OOD setting. 

The pure ML policy and \crl policy are also largely affected by OOD testing. We can observe from Table \ref{table:comparison_all_results_ood} that \rl and \crl both have high safety risk in the worst case. The expected safety constraint satisfaction in \crl does not help a lot in OOD testing because \crl is trained on a distribution that is very different from the testing distribution. That being said, \rl still achieves the lowest average energy and carbon costs.

The learning-augmented designs that combine \rl with control priors can take an effect in achieving a low enough safety risk. Even \lin can achieve a low safety risk by choosing a good combination weight $\rho$. However, \lin has high average energy and carbon costs because it is limited in exploiting the ML predictions. Comparably, \ouralg (e.g. $\lambda=0.4$) not only guarantees a small enough risk for any problem instance, but also achieves low average energy and carbon costs.

\section{Concluding Remarks}

This work considers an online control problem for water supply management. Besides minimizing the average energy cost, we consider
the safety constraint against a given control prior.
We design a learning-augmented algorithm, \ouralg, that strictly ensure safety constraint.
Our analysis reveals the tradeoff between the cost performance and the safety requirement.
We evaluate the performance for a case study of building water supply, showing the superiority of \ouralg in reducing energy cost and carbon emission and guaranteeing the safety requirements. In the future, the proposed design can be extended to broader applications such as EV charging and sustainable data centers to improve the efficiency and provide safety guarantee for these systems.

\section{Acknowledgment}
Pengfei Li and Shaolei Ren were supported by NSF grants CCF-2324916 and CCF-2324941. Adam Wierman was supported by NSF grants CCF-2326609, CNS-2146814, CPS-2136197, CNS-2106403, and NGSDI-2105648 and the funding from the Resnick Sustainability Institute.

\bibliographystyle{ACM-Reference-Format}

\clearpage
\appendix
\section*{Appendix}
\section{Additional Numerical Results}
In this section, we give more numerical results of the case study in Section \ref{sec:water_experiment}. We first give more details on the testing loss for different training epochs and safety parameter $\lambda$ followed by the maximum risk ratio with different $\lambda$. Then, we provide an ablation study on the impact of different control priors on \ouralg.  Next, we show the safety violation probability under the OOD setting. Finally, we give an instance study to explain \ouralg intuitively. 
\subsection{Training and Testing Details}
\begin{figure*}[t]	
	\centering
 \subfigure[Average cost w/ Epoch]{
	\includegraphics[width=0.31\textwidth]{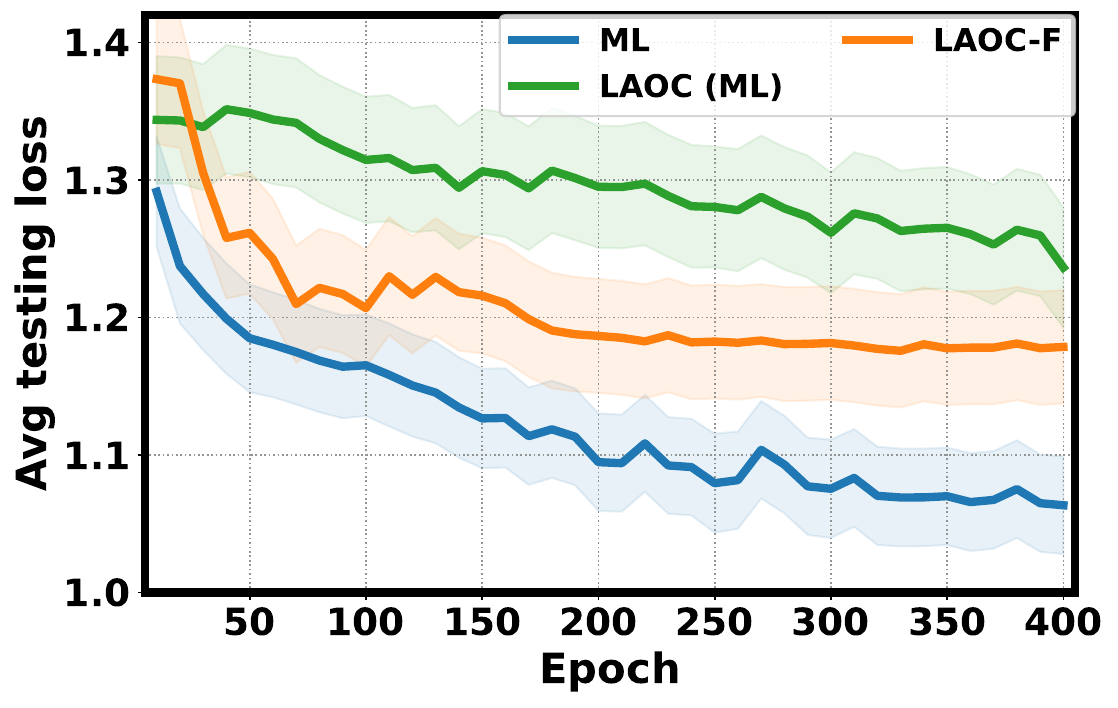}\label{fig:convergence}
	}
 	\subfigure[Average Loss w/ $\lambda$]{
	\includegraphics[width=0.31\textwidth]{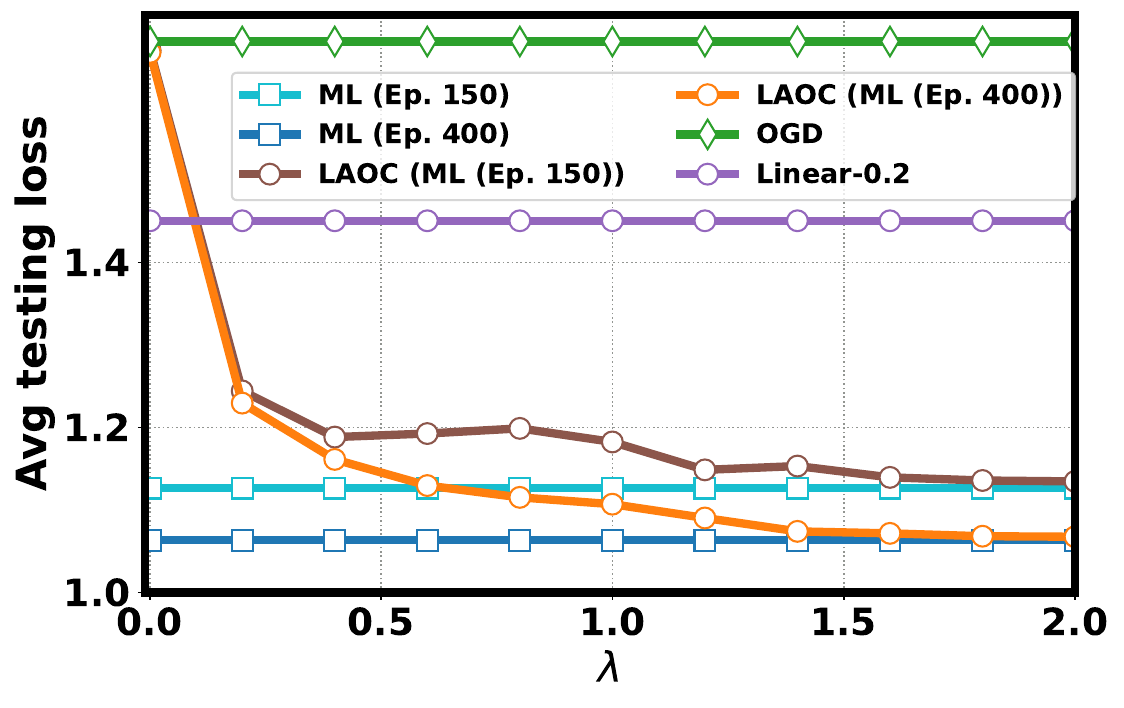}\label{fig:cost_ratio}
	}
 	\subfigure[Max risk ratio]{
	\includegraphics[width=0.31\textwidth]{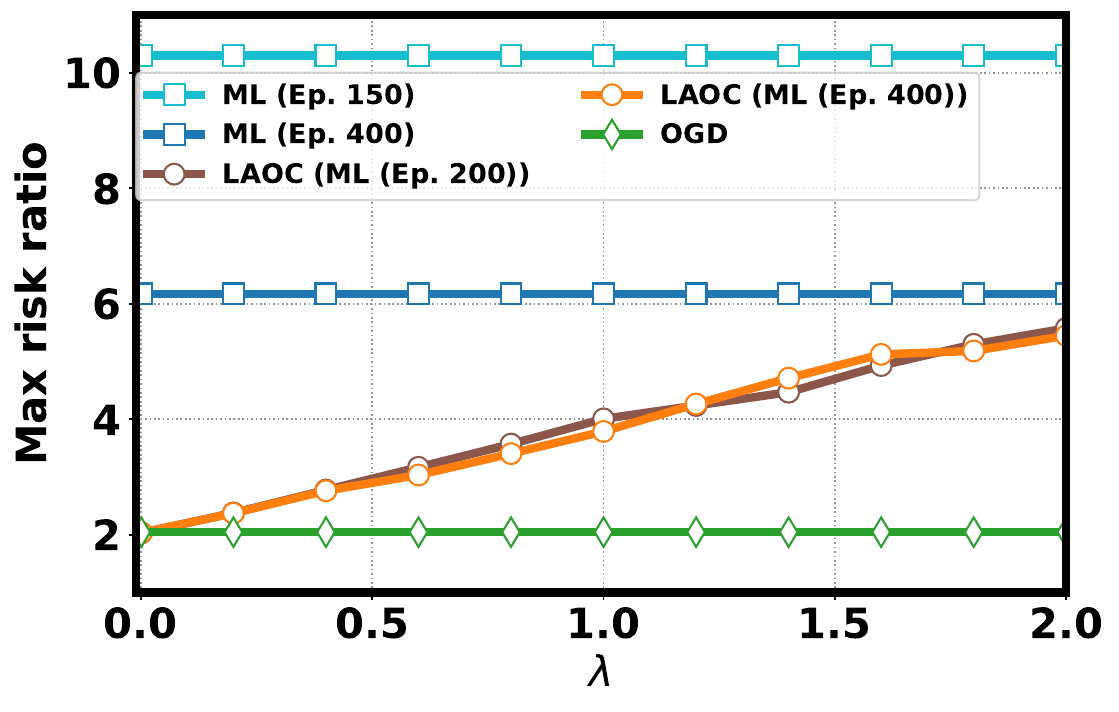}\label{fig:risk_ratio}
	}
\vspace{-0.2cm}	
\caption{\small{Average testing loss and the maximum risk ratio. By default, \ogd is the control prior for \ouralg.  ML (Ep. $N$) is the ML model at the $N$th epoch. \ouralg (ML (Ep. $N$)) is \ouralg using the purely-trained ML model at the $N$th epoch. \ouralg-F is \ouralg with safety-aware finetuning \eqref{eqn:training}.} }\label{fig:cost_split}
\vspace{-0.5cm}
\end{figure*} 
\subsubsection{Convergence}
In Figure \ref{fig:convergence}, we show the average testing losses as the training evolves. We show the training sequences of pure ML (blue curve) and the safety-aware fine-tuning of \ouralg-F (orange curve), respectively. \ouralg(ML) (green curve) takes the purely trained ML model at the corresponding epoch as input. 
The average loss is normalized by the average loss of the optimal policy, i.e. $\mathbb{E}[J_H^{\pi}]/\mathbb{E}[J_H^{*}]$. The testing losses converge after 400 epochs.  We can find that ML purely trained without considering safety has the best testing loss convergence. Due to the safety constraint, the testing loss of \ouralg(ML) with the purely-trained ML model as input increases a lot. By the safety-aware finetuning in \eqref{eqn:training}, \ouralg-F effectively reduces the testing loss of \ouralg because the safety-aware finetuning is performed on an objective that takes the safety set \eqref{eqn:robust_constraint} into consideration, which validates the conclusion in Theorem \ref{thm:sublinear_main}.

\subsubsection{Testing loss with respect to $\lambda$}
Figure \ref{fig:cost_ratio} shows the the average testing loss changing with the safety parameter $\lambda$ in the safety constraint \eqref{eqn:safety_constraint} for \ouralg. The average testing loss is the weighted combination of the energy cost, carbon cost and the deviation penalty, and is normalized by the average loss of the optimal policy. When $\lambda$ becomes larger, $(1+\lambda)-$safety constraint \eqref{eqn:safety_constraint} becomes less strict, the average loss of \ouralg approaches the average loss of corresponding purely-trained ML. When $\lambda=0$, the safety constraint is the strictest and \ouralg reduces to the control prior \ogd. These observations coincide with the cost bound in Theorem \ref{thm:optimal}. 
Additionally, we evaluate the average testing loss of \lin-0.2 and find that although \lin-0.2 has low safety violation probability, it is so conservative that average loss is very high. These validate the superiority of \ouralg in achieving a low enough average loss while guaranteeing safety.

\subsubsection{Maximum risk ratio with respect to $\lambda$}
In Figure \ref{fig:risk_ratio}, we show the worst-case risk ratio changing with the safety parameter $\lambda$ in the safety constraint \eqref{eqn:safety_constraint}. If the safety requirement parameter $\lambda$ becomes larger, \ouralg will take greater risks. Nevertheless, the risk is still lower than purely-trained ML even with very large $\lambda$. 
These results show the advantage of \ouralg in decarbonizing the water supply systems under the safety guarantee.

\begin{figure*}[t]	
	\centering
	\subfigure[Average loss]{
\includegraphics[width=0.31\textwidth]{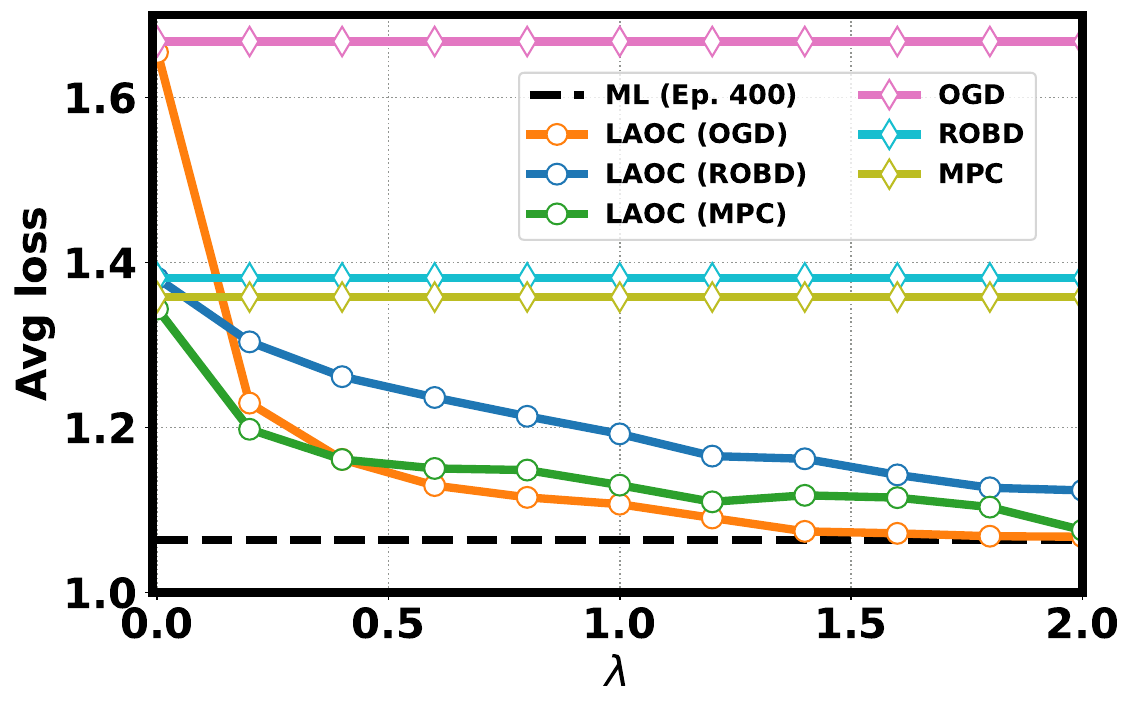}
	}
	\subfigure[Average carbon emission]{
	\includegraphics[width=0.31\textwidth]{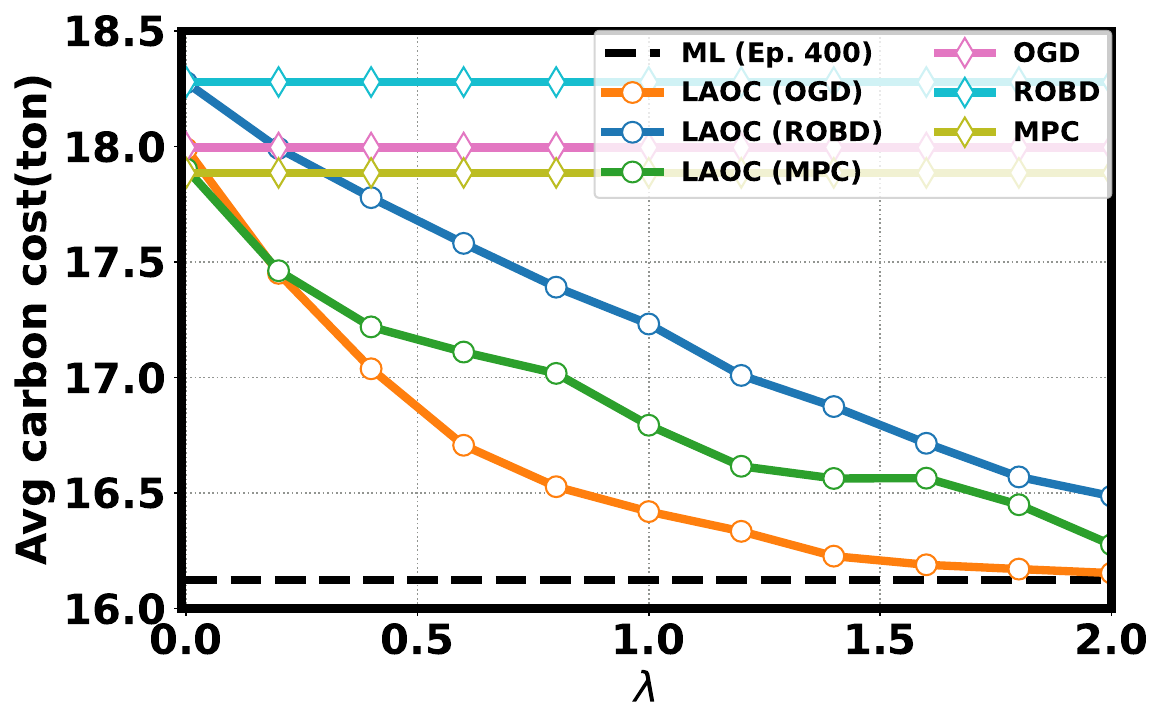}
	}
	\subfigure[Average energy cost]{
	\includegraphics[width=0.31\textwidth]{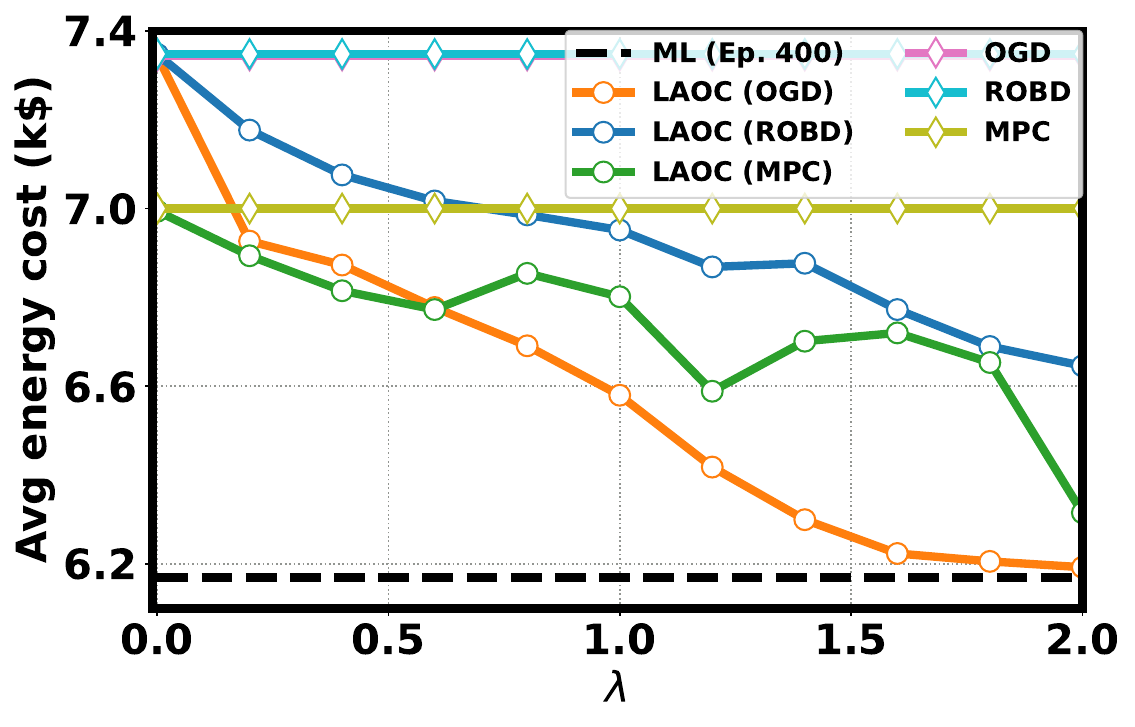}
	}
\vspace{-0.4cm}	
\caption{\small{Average loss, carbon cost and energy cost for different \ouralg algorithms and control priors. \ouralg algorithms use purely-trained ML model at Epoch 400. Here, \mpc represents \mpc-0.03 with a prediction error of 0.03. } }\label{fig:cost_priors}
\end{figure*}

\subsection{\ouralg with different control priors}
In Figure \ref{fig:cost_priors}, we give the average costs of \ouralg using different control priors (\ogd,\robd,\mpc). Here, \mpc represents \mpc-0.03 with a generated prediction error of 0.03. \mpc-0.03 can achieve a maximum risk ratio of 2.52, an average carbon cost of 17782 kg, and an average energy cost of 6924 \$. By the performance bound in Theorem \ref{thm:optimal}, the expected loss is affected by the per-round risk performance of the control prior $r_h^{\dagger}$ and the action discrepancy $\delta_h$ between the
pure ML action and the control prior.  As shown in Table \ref{table:comparison_all_results}, \robd has the lowest worst-case risk which defines the most stringent safety constraint, so \ouralg (\robd) has larger average loss and larger carbon/energy costs than \ouralg with the other two priors. 
We also observe that although \ogd has the largest average carbon/energy costs, \ouralg(\ogd) can achieve low carbon/energy costs a when $\lambda$ is slightly larger. This is because the safety constraint is defined by the risk of \ogd which is higher than that of \robd. No matter which control prior is considered, \ouralg can always guarantee the ($1+\lambda$)-safety constraint with respect to the control prior.

\subsection{Safety Violation Probability Under OOD Setting}
\begin{figure}
\includegraphics[width=0.3\textwidth]{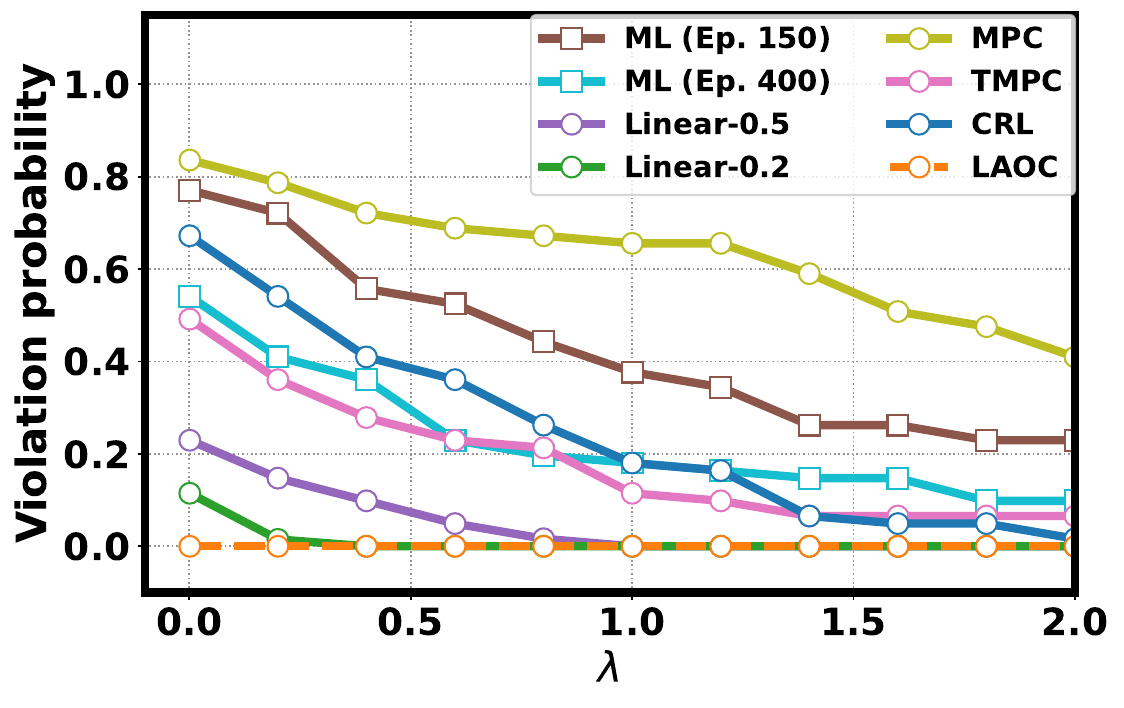}
\vspace{-0.3cm}
\caption{Safety Violation Probability Under OOD Setting. }
\label{fig:violation_rate_ood}
\vspace{-0.5cm}
\end{figure}
Under the OOD seeting, the violation rates of safety constraint \eqref{eqn:safety_constraint} with respect to the control prior \ogd are given in Figure \ref{fig:violation_rate_ood}. A higher $\lambda$ in ($1+\lambda$)-safety in \eqref{eqn:safety_constraint} gives a less strict safety constraint, so the violation probability decreases with $\lambda$. We can observe that \mpclstm is largely affected by the distribution shift and has the highest safety violation probability. \tmpc reduces the violation probability but still has a large violation probability. Both \rl and \crl have non-zero violation probability. We can find that the violation probability of \crl is even larger than the violation probability of \rl when $\lambda$ is small. The ineffectiveness of \crl is because \crl guarantees an expected constraint on the training distribution but the testing distribution has been very different from the training distribution. 

As a learning-augmented design, \lin can achieve low safety constraint violation rate by choosing a small enough combination weight, but this results in a large increase of average costs shown in Table \ref{table:comparison_all_results_ood}. By contrast, even in the OOD setting, \ouralg never violates safety constraint given any problem instance and any safety requirement parameter $\lambda$, which validates the effectiveness of \ouralg in strictly guaranteeing the safety constraint as proved in Theorem \ref{thm:safety}. 

\subsection{Instance study}
\begin{figure*}[t]	
	\centering
	\subfigure[Context sequence]{
\includegraphics[width=0.36\textwidth]{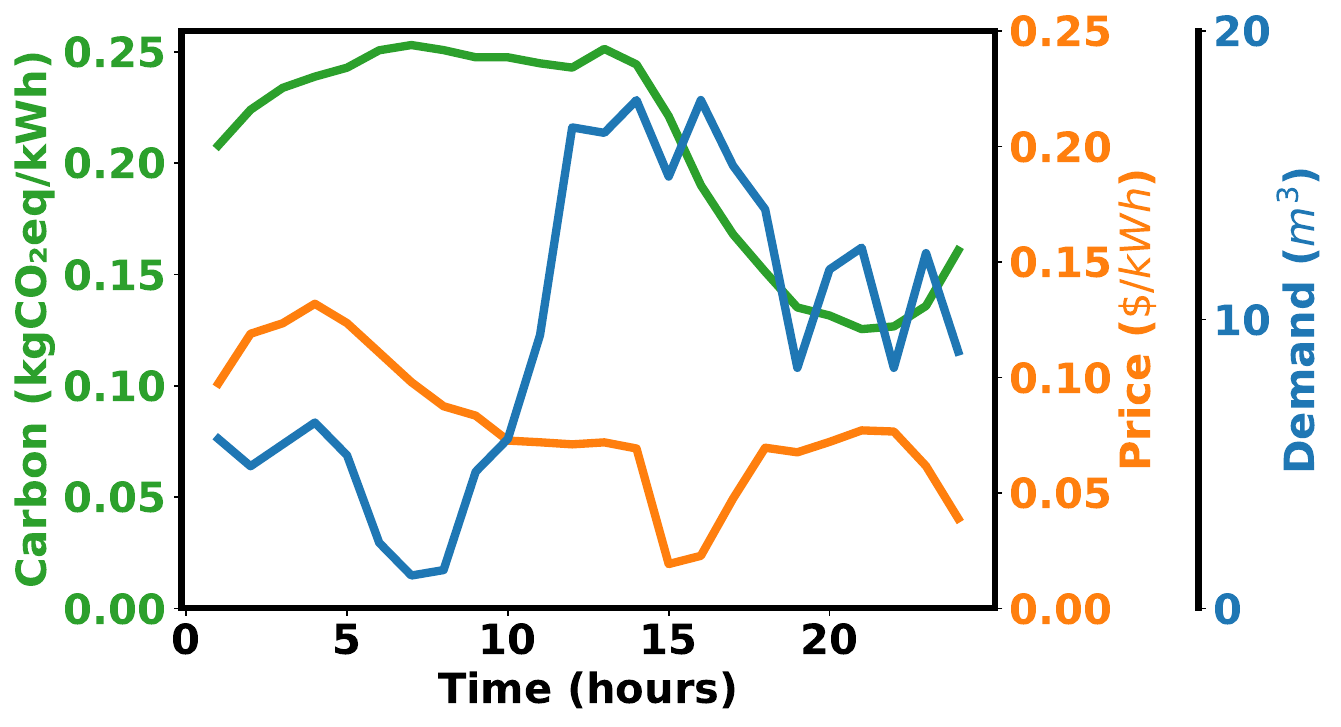}\label{fig:context_seq}
	}
	\subfigure[Action sequence]{
	\includegraphics[width=0.29\textwidth]{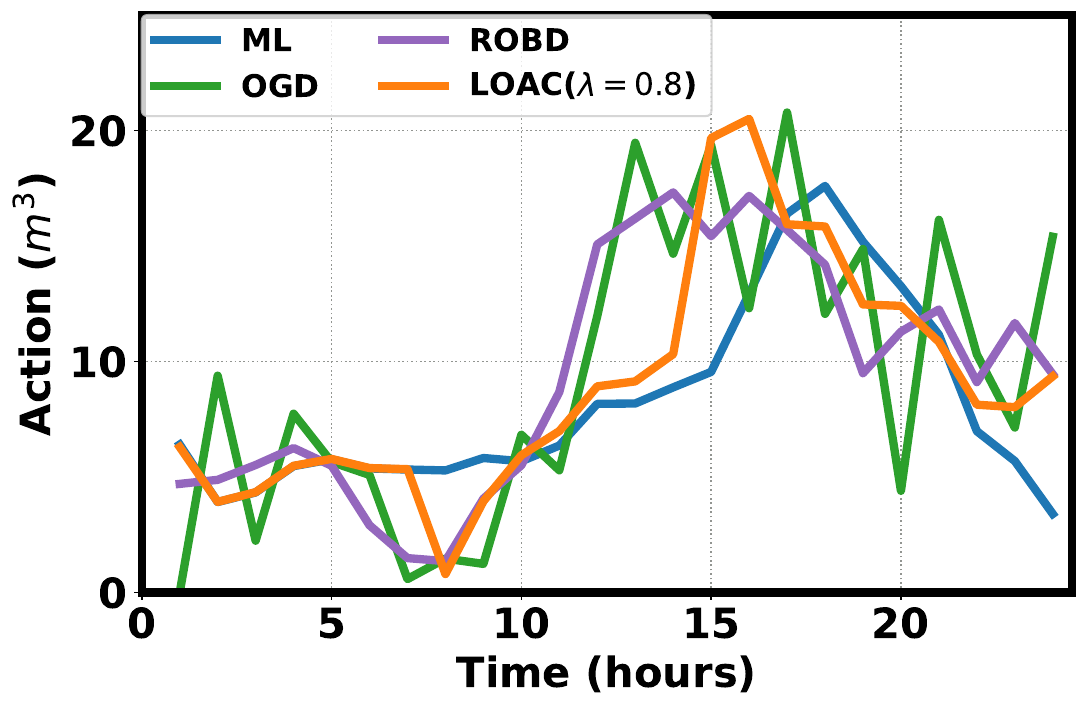}\label{fig:action_seq}
	}
	\subfigure[Water level sequence]{
	\includegraphics[width=0.29\textwidth]{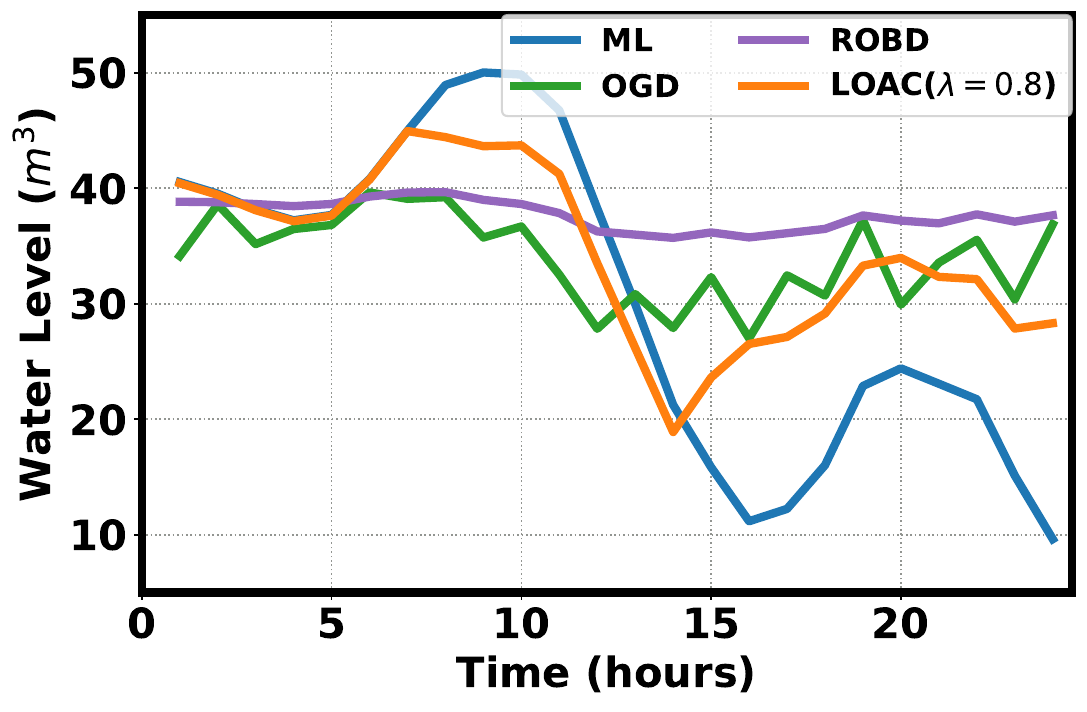}\label{fig:state_seq}
	}
\vspace{-0.4cm}	
\caption{\small{A sequence snapshot of 24 hours. Contexts include carbon intensity, energy price and demand. } }\label{fig:snap}
\end{figure*}
In Figure \ref{fig:snap}, we give a snapshot of a problem instance with 24 hours to get better intuitions on the control process. Figure \ref{fig:context_seq} shows the traces of carbon intensity, energy price, and water demand of the instance. From Figure \ref{fig:action_seq}, we can observe that ML chooses to delay the water supply when the carbon intensity or energy price is high. ML tends to schedule a large water supply when the carbon intensity or energy price is relatively low. This shows the effectiveness of ML policy in utilizing the water tank to save the energy costs by buffering the demand.   However, from Figure \ref{fig:state_seq}, we can find that the water level of ML can be very low at some hour. In this instance, the water level by ML can reduce to 10 $m^3$ which is much lower than the nominal safe water level $\bar{x}=40 m^3$. This results in a high safety risk since the water is not enough when there is an emergency in the building. Comparably, the control priors \ogd and \robd take much more conservative action shown in Figure \ref{fig:action_seq} and maintain the nominal water level very well shown in Figure \ref{fig:state_seq}.  However, they are limited in predicting and exploiting the time-varying energy price and carbon intensity, thus ineffective in saving energy costs and reducing carbon emissions. Different from them, the proposed algorithm \ouralg ($\lambda=0.8$) achieves a good trade-off between safety and costs. It can maintain a water level not far from nominal water level (orange curve in Figure \ref{fig:state_seq}), so the safety risk of \ouralg is low. At the same time, \ouralg regulates the water supply aware of the time-varying carbon intensity and energy price (orange curve in Figure \ref{fig:action_seq}), so it is also effective in saving energy costs and reducing carbon emissions.

\section{Proof of Proposition~\ref{thm:necessary_competitiveness}}
\begin{proof}
We prove by providing a contradictory example.  In this example, the dynamic function is linear, i.e. $f_h(x,u)=\sigma_x x_h +\sigma_u u_h +w_h$, and the control prior has a competitive ratio of $\eta_{\pi^{\dagger}}$ (i.e. $\frac{R_H^{\dagger}}{R_H^*}\leq \eta_{\pi^{\dagger}}$). We prove that at least for this example, $\lambda-$ competitiveness is not guaranteed by \lin.

If \lin guarantees $\lambda-$ competitiveness, since the competitive ratio of $\pi^{\dagger}$ is $\eta_{\pi^{\dagger}}$, we must have
\begin{equation}\label{eqn:competitiveness}
R_H^{\lin}\leq (1+\lambda) R_H^{\pi^{\dagger}}\leq (1+\lambda)\eta_{\pi^{\dagger}}R_H^{\pi^{*}}.
\end{equation}
Since the risk function $r_h$ is $\alpha-$strongly convex and the dynamic function is linear, the total risk $R_H$ is also strongly convex with parameter $\alpha$. By the smoothness of the cost function, we have $\bigtriangledown_{u^*} R_H^{\pi^*} = 0 $, and so 
\begin{equation}\label{eqn:strongconvexity}
R_H^{\lin}\geq R_H^{\pi^*} + \frac{\alpha}{2}\|\rho\tilde{u}+(1-\rho)u^{\dagger}-u^*\|^2,
\end{equation}
where the inequality holds by $\alpha-$strongly convexity of $R_H(u)$. Substituting \eqref{eqn:strongconvexity} into \eqref{eqn:competitiveness}, we have
\begin{equation}\label{eqn:necessaryproof1}
\frac{\alpha}{2}\|\rho\tilde{u}+(1-\rho)u^{\dagger}-u^*\|^2 \leq \left((1+\lambda)\eta_{\pi^{\dagger}}-1\right)R_H^{\pi^{*}},
\end{equation}
and by moving items and the triangle inequality, we have
\begin{equation}\label{eqn:necessaryproof2}
\|\rho(\tilde{u}-u^*)\|-\|(1-\rho)(u^{\dagger}-u^*)\| \leq \sqrt{\frac{2}{\alpha}\left((1+\lambda)\eta_{\pi^{\dagger}}-1\right)R_H^{\pi^{*}}}.
\end{equation}
Applying the $\alpha-$strongly convex of $R_H(u)$ and $\bigtriangledown_{u^*} R_H^{\pi^*} = 0 $ again, we have
\begin{equation}\label{eqn:strongconvexity_2}
R_H^{\pi^{\dagger}}\geq R_H^{\pi^*} + \frac{\alpha}{2}\|u^{\dagger}-u^*\|^2.
\end{equation}
Substituting \eqref{eqn:strongconvexity_2} into \eqref{eqn:necessaryproof2}, we have
\begin{equation}\label{eqn:necessaryproof3}
\begin{split}
\|\tilde{u}-u^*\|&\leq \frac{1-\rho}{\rho}\sqrt{\frac{2}{\alpha}(R_H^{\pi^{\dagger}}-R_H^{\pi^*})} + \frac{1}{\rho}\sqrt{\frac{2}{\alpha}\left((1+\lambda)\eta_{\pi^{\dagger}}-1\right)R_H^{\pi^{*}}}\\
&\leq \sqrt{\frac{2}{\alpha}}\left( \frac{1-\rho}{\rho}\sqrt{\eta_{\pi^{\dagger}}-1}+\frac{1}{\rho}\sqrt{(1+\lambda)\eta_{\pi^{\dagger}}-1}\right)\sqrt{R_H^{\pi^{*}}}
\end{split}
\end{equation}

If \lin guarantees the $\lambda-$ competitiveness, then the ML advice must satisfy 
\begin{equation}\label{eqn:competitiveness_condition}
\frac{\|\tilde{u}-u^*\|^2}{R_H^*}\leq \frac{2}{\alpha}\left( \frac{1-\rho}{\rho}\sqrt{\eta_{\pi^{\dagger}}-1}+\frac{1}{\rho}\sqrt{(1+\lambda)\eta_{\pi^{\dagger}}-1}\right)^2.
\end{equation}
Given $\rho\in (0,1]$ and finite $\eta_{\pi^{\dagger}}$, the right-hand-side is a finite value. Thus, when $\rho\neq 0$,  for arbitrary ML advice with unbounded $\frac{\|\tilde{u}-u^*\|^2}{R_H^*}$, $\lambda-$ competitiveness is not guaranteed.
\end{proof}

\section{Proof of Proposition~\ref{thm:potentaildesign_concrete}}
\label{sec:robustnessproof}

\begin{lemma}[\cite{convex_course}]\label{eqn:tri-smoothness}
For any convex and $\beta-$ cost function $r$ with respect to its input $
(x,u)$, it holds for a parameter $\lambda>0$ that,
\begin{equation}
\begin{split}
&r(x,u)-(1+\lambda)r(x^{\dagger},u^{\dagger})\leq (1+\frac{1}{\lambda})\left(\frac{\beta}{2}\|x-x^{\dagger}\|^2+\frac{\beta}{2}\|u-u^{\dagger}\|^2\right)
\end{split}
\end{equation}
\end{lemma}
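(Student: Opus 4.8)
The plan is to reduce the two-variable statement to a one-variable one by stacking the arguments: writing $z=(x,u)$ and $z^\dagger=(x^\dagger,u^\dagger)$, the right-hand side equals $(1+\tfrac{1}{\lambda})\tfrac{\beta}{2}\|z-z^\dagger\|^2$, since $\|z-z^\dagger\|^2=\|x-x^\dagger\|^2+\|u-u^\dagger\|^2$. So it suffices to establish $r(z)-(1+\lambda)r(z^\dagger)\le (1+\tfrac{1}{\lambda})\tfrac{\beta}{2}\|z-z^\dagger\|^2$ for a non-negative, $\beta$-smooth $r$. The engine of the argument is the descent lemma combined with the non-negativity of $r$; the extra factor $1+\tfrac{1}{\lambda}$ will emerge from a single, carefully weighted Young's inequality.

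First I would invoke $\beta$-smoothness through the descent lemma, $r(z)\le r(z^\dagger)+\langle\nabla r(z^\dagger),z-z^\dagger\rangle+\tfrac{\beta}{2}\|z-z^\dagger\|^2$, and subtract $(1+\lambda)r(z^\dagger)$ from both sides to obtain $r(z)-(1+\lambda)r(z^\dagger)\le -\lambda r(z^\dagger)+\langle\nabla r(z^\dagger),z-z^\dagger\rangle+\tfrac{\beta}{2}\|z-z^\dagger\|^2$. The task is thereby reduced to controlling the cross term $\langle\nabla r(z^\dagger),z-z^\dagger\rangle$ using the negative term $-\lambda r(z^\dagger)$ together with a multiple of $\|z-z^\dagger\|^2$.

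Second, I would bound the gradient at $z^\dagger$ via smoothness and non-negativity: minimizing the descent-lemma upper bound over the free point gives $\min r\le r(z^\dagger)-\tfrac{1}{2\beta}\|\nabla r(z^\dagger)\|^2$, and since $r\ge 0$ forces $\min r\ge 0$, this yields $\|\nabla r(z^\dagger)\|^2\le 2\beta\,r(z^\dagger)$. Combining with Cauchy--Schwarz gives $\langle\nabla r(z^\dagger),z-z^\dagger\rangle\le \sqrt{2\beta\,r(z^\dagger)}\,\|z-z^\dagger\|$. The crux is then the final Young step: applying $pq\le \tfrac{t}{2}p^2+\tfrac{1}{2t}q^2$ with $p=\sqrt{2\beta\,r(z^\dagger)}$, $q=\|z-z^\dagger\|$, and the specific weight $t=\lambda/\beta$ produces exactly $\langle\nabla r(z^\dagger),z-z^\dagger\rangle\le \lambda\,r(z^\dagger)+\tfrac{\beta}{2\lambda}\|z-z^\dagger\|^2$. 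Substituting, the $\lambda r(z^\dagger)$ term cancels the $-\lambda r(z^\dagger)$ and leaves $\tfrac{\beta}{2\lambda}\|z-z^\dagger\|^2+\tfrac{\beta}{2}\|z-z^\dagger\|^2=(1+\tfrac{1}{\lambda})\tfrac{\beta}{2}\|z-z^\dagger\|^2$, which is the claim.

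I expect the main obstacle to be identifying the right weight $t=\lambda/\beta$ in the Young step: the inequality is tight only for this choice, and any other split either fails to cancel the $-\lambda r(z^\dagger)$ term or overshoots the target coefficient, so this is where the precise form of the bound is forced. A secondary point worth stating explicitly is that the gradient estimate $\|\nabla r(z^\dagger)\|^2\le 2\beta\,r(z^\dagger)$ rests on the non-negativity of $r$ (Assumption~\ref{assumption:smooth}) rather than on convexity, so the convexity hypothesis is not strictly needed for this particular lemma, although it is available.
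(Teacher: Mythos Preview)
Your argument is correct. The paper does not actually supply a proof of this lemma: it is stated with a citation to \cite{convex_course} and used as a black box in the proofs of Proposition~\ref{thm:potentaildesign_concrete} and Lemma~\ref{thm:consistency_sequence}. Your route---descent lemma, the gradient bound $\|\nabla r(z^\dagger)\|^2\le 2\beta\,r(z^\dagger)$ from non-negativity and smoothness, then the weighted Young split with $t=\lambda/\beta$---is the standard way to obtain this inequality and matches how such ``tri-smoothness'' bounds are derived in the cited notes. Your closing observation is also accurate: the proof uses only $\beta$-smoothness and non-negativity of $r$ (both part of Assumption~\ref{assumption:smooth}), so convexity, while stated in the lemma, is not actually invoked.
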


\textbf{Proof of Proposition~\ref{thm:potentaildesign_concrete}}
\begin{proof} 
Note that $\phi_h$ is non-negative. Thus, if the safe action set $\mathcal{U}_{\lambda,h}$ in \eqref{eqn:robust_constraint} is non-empty for each $h\in[H]$, then we can always guarantee the competitiveness in Eqn. \eqref{eqn:constraint} by selecting an action in $\mathcal{U}_{\lambda,h}$ in Algorithm \ref{alg:expert_robust_Q}. 
Then we prove the non-empty of safe action set $\mathcal{U}_{\lambda,h}$ by induction.

First of all,  $\mathcal{U}_{\lambda,0}$ is not empty because $u_0^{\dagger}$ is always in $\mathcal{U}_{\lambda,0}$. Then assuming $\mathcal{U}_{\lambda,h-1}$ is not empty, we prove $\mathcal{U}_{\lambda,h}$ is not empty and at least contain an action $u_h^{\dagger}$ as follows. 
Since $\mathcal{U}_{\lambda,h-1}$ is not empty, we have $u_{h-1}\in \mathcal{U}_{\lambda,h-1}$  by Algorithm \ref{alg:expert_robust_Q}, and it holds that
\begin{equation}
R_{h-1}+\phi_{h-1}(u_{h-1})\leq (1+\lambda) R^{\dagger}_{h-1}.
\end{equation}
Thus if $u_h=u_h^{\dagger}$, we have 
\begin{equation}\label{eqn:robustnessproof1}
\begin{split}
&R_{h-1}+r_t(x_h,u_h^{\dagger})+\phi_{h}(u^{\dagger}_{h})- (1+\lambda)\left( R^{\dagger}_{h-1}+r_{h}(x_h^{\dagger},u^{\dagger}_{h})\right)\\
=&R_{h-1}- (1+\lambda) R^{\dagger}_{h-1}+\left(r_h(x_h,u_h^{\dagger})-(1+\lambda)r_{h}(x_h^{\dagger},u^{\dagger}_{h})\right)+\phi_h(u_h^{\dagger})\\
\leq & -\phi_{h-1}(u_{h-1})+\phi_h(u^{\dagger}_h)+\left(r_h(x_h,u_h^{\dagger})-(1+\lambda)r_{h}(x_h^{\dagger},u^{\dagger}_{h})\right)\\
\leq & -\phi_{h-1}(u_{h-1})+\phi_h(u^{\dagger}_h)+(1+\frac{1}{\lambda})\frac{\beta}{2}\|x_h-x_h^{\dagger}\|^2
\end{split}
\end{equation}
where the second inequality holds by Lemma \ref{eqn:tri-smoothness}. 

Since the reservation cost is chosen as 
\begin{equation}
\phi_h(u)=q_h\|f_h(x_{h},u)-f_h(x_{h}^{\dagger},u_h^{\dagger})\|^2,
\end{equation}
 we have
\begin{equation}\label{eqn:phidifference}
\begin{split}
  &-\phi_{h-1}(u_{h-1})+\phi_h(u^{\dagger}_h)\\
  =& -q_{h-1}\|x_h-x_h^{\dagger}\|^2+q_h\|f_h(x_{h},u_h^{\dagger})-f_h(x_{h}^{\dagger},u_h^{\dagger})\|^2\\
  \leq & (-q_{h-1}+q_{h}\sigma_x^2)\|x_h-x_h^{\dagger}\|^2\\
  \leq & -(1+\frac{1}{\lambda})\frac{\beta}{2}\|x_h-x_h^{\dagger}\|^2,
  \end{split}
\end{equation}
where the first inequality comes from the Lipschitz continuity of dynamic $f_h$, and the second inequality holds by the choice of $q_h=C_1(1+\frac{1}{\lambda})\frac{\beta}{2}\sum_{h'=0}^{H-h-1}(C_2\sigma_x^{2})^{h'}$
for some constant $C_1\geq 1$ and $C_2\geq 1$ such that $q_h\sigma_x^2=C_1(1+\frac{1}{\lambda})\frac{\beta}{2}\sum_{h'=1}^{H-h}C_2^{h'-1}\sigma_x^{2h'}\leq C_1(1+\frac{1}{\lambda})\frac{\beta}{2}\sum_{h'=1}^{H-h}C_2^{h'}\sigma_x^{2h'}=q_{h-1}-C_1(1+\frac{1}{\lambda})\frac{\beta}{2}\leq q_{h-1}-(1+\frac{1}{\lambda})\frac{\beta}{2}$.

Substituting \eqref{eqn:phidifference} into \eqref{eqn:robustnessproof1}, it holds for $u_h=u_h^{\dagger}$ that
$R_{h-1}+r_t(x_h,u_h^{\dagger})+\phi_{h}(u^{\dagger}_{h})\leq (1+\lambda)\left( R^{\dagger}_{h-1}+r_{h}(x_h^{\dagger},u^{\dagger}_{h})\right)$. Therefore, $u_h^{\dagger}$ is in the safe action set $\mathcal{U}_{\lambda,h}$ and so $\mathcal{U}_{\lambda,h}$ is not empty.

Therefore by the discussion at the beginning of this proof, the Proposition is proved.
\end{proof}

\section{Proof of Theorem \ref{thm:optimal}}
We denote the policy \ouralg on the basis of the ML policy $\tilde{\pi}$ and the action set $\mathcal{U}_{\lambda,h}$ as 
\begin{equation}\label{eqn:projection_func}
\pi_{\lambda}(s_h)=m(\tilde{\pi}(s_h), \mathcal{U}_{\lambda,h}),
\end{equation}
where $s_h$ is the ML input at round $h$, and $m$ is the projection function in \eqref{eqn:projection} or the linear function in \eqref{eqn:linear_combination}.
By directly applying the ML policy $\tilde{\pi}$ without projection or linear operations, we get the action sequence $\{\tilde{u}_h', h\in[H]\}$ and the state sequence $\{\tilde{x}_h, h\in[H]\}$, and the corresponding ML inputs (which include $\tilde{x}_h$) are denoted as $\tilde{s}_h$.

\begin{lemma}\label{thm:consistency_sequence}
Given two constants $\lambda_1>0$ and $\lambda_0\in(0,\lambda)$, if the potential function is designed as $\phi_h(u)=q_h\|f_h(x_{h},u_h)-f_t(x_{h}^{\dagger},u_h^{\dagger})\|^2$ with $q_h\geq 0$ satisfying
$
2\sigma_x^2q_h\leq q_{h-1}-(1+\frac{1}{\lambda_0})\frac{\beta}{2}$
for $h\in[H-1]$, $q_H=0$, then $u_h$ is in the competitive action set \eqref{eqn:robust_constraint} if 
\[
\|u_h-u_h^{\dagger}\|^2\leq Gr_h^{\dagger},
\]
where $r_h^{\dagger}$ is the risk of the prior at time $h$, and $G=\frac{(\lambda-\lambda_0)}{(1+\frac{1}{\lambda_0})\frac{\beta}{2}+\frac{C_2}{C_2-1}q_h\sigma_{u}^2}$.
\end{lemma}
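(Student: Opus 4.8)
The plan is to show that the defining inequality of the safe action set $\mathcal{U}_{\lambda,h}$ in \eqref{eqn:robust_constraint}, namely $R_{h-1}+r_h(x_h,u_h)+\phi_h(u_h)\le (1+\lambda)(R^{\dagger}_{h-1}+r_h^{\dagger})$, holds for every $u_h$ in the ball $\|u_h-u_h^{\dagger}\|^2\le G r_h^{\dagger}$. The starting point is feasibility at the previous round: since $u_{h-1}\in\mathcal{U}_{\lambda,h-1}$ we have $R_{h-1}-(1+\lambda)R^{\dagger}_{h-1}\le -\phi_{h-1}(u_{h-1})=-q_{h-1}\|x_h-x_h^{\dagger}\|^2$, exactly as in the proof of Proposition~\ref{thm:potentaildesign_concrete}. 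Substituting this reduces the target to showing that $-q_{h-1}\|x_h-x_h^{\dagger}\|^2 + \big(r_h(x_h,u_h)-(1+\lambda)r_h^{\dagger}\big)+\phi_h(u_h)\le 0$.

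The crucial idea is to split the risk gap using the intermediate parameter $\lambda_0$, writing $r_h(x_h,u_h)-(1+\lambda)r_h^{\dagger}=\big(r_h(x_h,u_h)-(1+\lambda_0)r_h^{\dagger}\big)-(\lambda-\lambda_0)r_h^{\dagger}$. Applying the smoothness bound of Lemma~\ref{eqn:tri-smoothness} with parameter $\lambda_0$ to the first bracket bounds it by $(1+\tfrac{1}{\lambda_0})\tfrac{\beta}{2}\big(\|x_h-x_h^{\dagger}\|^2+\|u_h-u_h^{\dagger}\|^2\big)$, while the second term $-(\lambda-\lambda_0)r_h^{\dagger}$ is the negative slack that pays for letting $u_h$ deviate from $u_h^{\dagger}$. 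This is where $\lambda_0<\lambda$ matters: at $\lambda_0=\lambda$ the slack vanishes and only $u_h^{\dagger}$ itself is guaranteed feasible, recovering Proposition~\ref{thm:potentaildesign_concrete}.

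Next I would expand $\phi_h(u_h)=q_h\|f_h(x_h,u_h)-f_h(x_h^{\dagger},u_h^{\dagger})\|^2$ with the triangle inequality and Assumption~\ref{assumption:lipschitz}, giving $\|f_h(x_h,u_h)-f_h(x_h^{\dagger},u_h^{\dagger})\|\le \sigma_x\|x_h-x_h^{\dagger}\|+\sigma_u\|u_h-u_h^{\dagger}\|$, and then use the weighted square inequality $(a+b)^2\le C_2 a^2+\tfrac{C_2}{C_2-1}b^2$ to separate the state and action contributions as $C_2 q_h\sigma_x^2\|x_h-x_h^{\dagger}\|^2+\tfrac{C_2}{C_2-1}q_h\sigma_u^2\|u_h-u_h^{\dagger}\|^2$. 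Collecting terms, the coefficient of $\|x_h-x_h^{\dagger}\|^2$ becomes $-q_{h-1}+(1+\tfrac{1}{\lambda_0})\tfrac{\beta}{2}+C_2q_h\sigma_x^2$, which the recurrence on $q_h$ is designed to make nonpositive so the unknown state-gap term drops, while the coefficient of $\|u_h-u_h^{\dagger}\|^2$ is $(1+\tfrac{1}{\lambda_0})\tfrac{\beta}{2}+\tfrac{C_2}{C_2-1}q_h\sigma_u^2$. The surviving inequality $\big[(1+\tfrac{1}{\lambda_0})\tfrac{\beta}{2}+\tfrac{C_2}{C_2-1}q_h\sigma_u^2\big]\|u_h-u_h^{\dagger}\|^2\le(\lambda-\lambda_0)r_h^{\dagger}$, solved for $\|u_h-u_h^{\dagger}\|^2$, yields precisely the threshold $G r_h^{\dagger}$.

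The main obstacle is making the state-gap term $\|x_h-x_h^{\dagger}\|^2$ cancel: because the deviation propagates through the dynamics into future states, we cannot bound $\|x_h-x_h^{\dagger}\|$ directly and must rely entirely on the telescoping induced by the reservation recurrence $C_2\sigma_x^2 q_h\le q_{h-1}-(1+\tfrac{1}{\lambda_0})\tfrac{\beta}{2}$ to absorb it (I read the ``$2$'' in the stated condition as this constant $C_2$, consistent with the $\tfrac{C_2}{C_2-1}$ appearing in $G$). Keeping the $C_2$-weighting consistent between the square decomposition and the recurrence is the delicate part; the remaining steps are direct substitution.
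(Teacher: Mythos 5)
Your proposal is correct and follows essentially the same route as the paper's proof: reduce to a per-round sufficient condition via feasibility at round $h-1$, split the risk gap with the intermediate parameter $\lambda_0$ and apply Lemma~\ref{eqn:tri-smoothness}, decompose $\phi_h(u_h)$ via Lipschitz continuity and the weighted square inequality $(a+b)^2\leq C_2a^2+\tfrac{C_2}{C_2-1}b^2$, and let the recurrence on $q_h$ annihilate the $\|x_h-x_h^{\dagger}\|^2$ coefficient so that only the action-gap condition survives, yielding exactly $G r_h^{\dagger}$. Your reading of the ``$2$'' in the stated recurrence as $C_2$ is also the correct one — the paper's own proof uses the condition $q_{h-1}\geq(1+\tfrac{1}{\lambda_0})\tfrac{\beta}{2}+C_2q_h\sigma_x^2$, consistent with the $\tfrac{C_2}{C_2-1}$ factor in $G$.
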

\begin{proof}
Note that at time $h-1$, the competitiveness constraint holds as
\begin{equation}
R_{h-1}+\phi_{h-1}(u_{h-1})\leq (1+\lambda)\left( R^{\dagger}_{h-1}\right),
\end{equation}
and the sufficient condition for $u_h\in\mathcal{U}_{\lambda,h}$ is
\begin{equation}\label{eqn:sufficientcondition_1}
\begin{split}
& r_h(x_h,u_h)+\phi_h(u_h)-\phi_{h-1}(u_{h-1})\leq  (1+\lambda)r_h(x_h^{\dagger},u_h^{\dagger}).
\end{split}
\end{equation}
Given $\lambda_0\in (0,\lambda)$, subtracting $(1+\lambda_0)r_h(x_h^{\dagger},u_h^{\dagger})$ by both sides and by Lemma \ref{eqn:tri-smoothness}, we get the sufficient condition that \eqref{eqn:sufficientcondition_1} holds if
\begin{equation}\label{eqn:sufficientcondition_2}
\begin{split}
\left((1+\frac{1}{\lambda_0})\frac{\beta}{2}-q_{h-1}\right)&\|x_h-x_h^{\dagger}\|^2+(1+\frac{1}{\lambda_0})\frac{\beta}{2}\|u_h-u_h^{\dagger}\|^2\\
&+q_h\|x_{h+1}-x_{h+1}^{\dagger}\|^2\leq (\lambda-\lambda_0)r_h^{\dagger}.
\end{split}
\end{equation}
By the Lipschitz continuity of $f$ and the smoothness of squared norm,  we have by Lemma \ref{eqn:tri-smoothness}
\[
\begin{split}
\|x_{h+1}&-x_{h+1}^{\dagger}\|^2=\|f_h(x_{h},u_h)-f(x_{h}^{\dagger},u_h^{\dagger})\|^2\\
&\leq C_2\sigma_x^2\|x_{h}-x_{h}^{\dagger}\|^2+\frac{C_2}{C_2-1}\sigma_u^2\|u_{h}-u_{h}^{\dagger}\|^2.
\end{split}
\] 
Thus, we further get the sufficient condition of \eqref{eqn:sufficientcondition_2} as 
\begin{equation}
\begin{split}
&\left((1+\frac{1}{\lambda_0})\frac{\beta}{2}+C_2q_h\sigma_{x}^2-q_{h-1}\right)\|x_h-x_h^{\dagger}\|^2+
 \\
 &\quad \left((1+\frac{1}{\lambda_0})\frac{\beta}{2}+\frac{C_2}{C_2-1}q_h\sigma_{u}^2 \right)\|u_h-u_h^{\dagger}\|^2\leq (\lambda-\lambda_0)r_h^{\dagger}.
\end{split}
\end{equation}
By the condition that $q_{h-1}\geq (1+\frac{1}{\lambda_0})\frac{\beta}{2}+C_2q_h\sigma_{x}^2$, we get the following:
\begin{equation}
\begin{split}
\left((1+\frac{1}{\lambda_0})\frac{\beta}{2}+\frac{C_2}{C_2-1}q_h\sigma_{u}^2\right)\|u_h-u_h^{\dagger}\|^2\leq (\lambda-\lambda_0)r_h^{\dagger},
\end{split}
\end{equation}
which implies the sufficient condition in this lemma.
\end{proof}

\begin{lemma}\label{lma:q_difference}
With a reservation function satisfying the condition in Lemma \ref{thm:consistency_sequence},  $\xi_h(s_h)=\|\pi_{\lambda}(s_h)-\tilde{\pi}(s_h)\|=\|m(\tilde{\pi}(s_h),\mathcal{U}_{\lambda,h})-\tilde{\pi}(s_h)\|$ with $m$ being the projection function in \eqref{eqn:projection_func} is bounded by
\[
\xi_h(s_h)\leq L_{\pi}\|x_h-\tilde{x}_h\|+\left[\delta_h-(\sqrt{1+\lambda}-1)^2Gr_h^{\dagger}\right]^+,
\]
where $G=\frac{2}{L_c\left(1+\frac{C_2}{C_2-1}\sigma_u^2(1-(C_2\sigma_x^2)^{H-h})/(1-C_2\sigma_x^2)\right)}$ and $\delta_h=\|\tilde{\pi}(\tilde{s}_h)-\pi^{\dagger}(s_h^{\dagger})\|$.
\end{lemma}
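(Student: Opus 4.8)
The plan is to control the projection gap $\xi_h(s_h)=\mathrm{dist}\bigl(\tilde\pi(s_h),\mathcal{U}_{\lambda,h}\bigr)$ by combining the inner‑ball certificate furnished by Lemma~\ref{thm:consistency_sequence} with two elementary facts: the distance to a set is bounded by the distance to any of its subsets, and the positive part is subadditive. First I would dispose of the trivial case. If $\tilde\pi(s_h)\in\mathcal{U}_{\lambda,h}$, the projection $m$ in \eqref{eqn:projection} returns $\tilde\pi(s_h)$ itself, so $\xi_h(s_h)=0$ and the claimed inequality holds because its right‑hand side is nonnegative. Hence I may assume $\tilde\pi(s_h)\notin\mathcal{U}_{\lambda,h}$ from here on.

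The geometric core is to read Lemma~\ref{thm:consistency_sequence} as certifying that an entire Euclidean ball centered at the prior action $u_h^\dagger=\pi^\dagger(s_h^\dagger)$ and of some radius $\rho_h$ is contained in $\mathcal{U}_{\lambda,h}$ (the lemma gives a sufficient condition $\|u-u_h^\dagger\|\le\rho_h\Rightarrow u\in\mathcal{U}_{\lambda,h}$). Since the distance to $\mathcal{U}_{\lambda,h}$ is at most the distance to this inner ball, I get
\[
\xi_h(s_h)\;\le\;\bigl[\,\|\tilde\pi(s_h)-u_h^\dagger\|-\rho_h\,\bigr]^+ .
\]
Next I would bound the leading norm by inserting $\tilde\pi(\tilde s_h)$, invoking the triangle inequality, the $L_\pi$‑Lipschitz continuity of $\tilde\pi$, and the fact that the ML inputs $s_h$ and $\tilde s_h$ differ only in their state coordinate ($x_h$ versus $\tilde x_h$):
\[
\|\tilde\pi(s_h)-u_h^\dagger\|\;\le\;\|\tilde\pi(s_h)-\tilde\pi(\tilde s_h)\|+\|\tilde\pi(\tilde s_h)-\pi^\dagger(s_h^\dagger)\|\;\le\;L_\pi\|x_h-\tilde x_h\|+\delta_h .
\]
Substituting into the previous display and applying $[a+b]^+\le[a]^++[b]^+$ with the nonnegative term $a=L_\pi\|x_h-\tilde x_h\|$ pulls that term outside the bracket, producing $\xi_h(s_h)\le L_\pi\|x_h-\tilde x_h\|+[\delta_h-\rho_h]^+$, which is exactly the target shape.

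It remains to identify the radius $\rho_h$ with $(\sqrt{1+\lambda}-1)^2\,G\,r_h^\dagger$. The factor $(\sqrt{1+\lambda}-1)^2$ should emerge by optimizing the free auxiliary parameter $\lambda_0\in(0,\lambda)$ of Lemma~\ref{thm:consistency_sequence} so as to enlarge the certified region: maximizing $(\lambda-\lambda_0)/(1+\tfrac1{\lambda_0})$ yields the stationary point $\lambda_0=\sqrt{1+\lambda}-1$ with optimal value $(\sqrt{1+\lambda}-1)^2$, which is also consistent with the choice $C_1=1+1/\sqrt{1+\lambda}$ of Theorem~\ref{thm:optimal} (at this point $C_1(1+1/\lambda)=1+1/\lambda_0$, so the $\beta$‑terms in the denominator factor cleanly). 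I expect the main obstacle to be precisely this final bookkeeping: one must carry the smoothness constant $\beta$ and the geometric sum defining $q_h$ through the denominator of Lemma~\ref{thm:consistency_sequence}, substitute the optimal $C_1,C_2,\lambda_0$, convert the quadratic membership certificate into a radius, and verify that it collapses to $(\sqrt{1+\lambda}-1)^2\,G\,r_h^\dagger$ with $G=2\bigl(L_c(1+\tfrac{C_2}{C_2-1}\sigma_u^2(1-(C_2\sigma_x^2)^{H-h})/(1-C_2\sigma_x^2))\bigr)^{-1}$. The geometric, triangle‑inequality, and Lipschitz steps preceding it are routine.
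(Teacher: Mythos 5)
Your proposal follows essentially the same route as the paper's proof: an inner ball $\mathcal{B}(u_h^{\dagger}, (\sqrt{1+\lambda}-1)^2 G r_h^{\dagger}) \subset \mathcal{U}_{\lambda,h}$ certified via Lemma~\ref{thm:consistency_sequence}, bounding the projection distance by the distance to that ball, the triangle inequality combined with the $L_{\pi}$-Lipschitz continuity of $\tilde{\pi}$ (using $\|\tilde{s}_h - s_h\| = \|\tilde{x}_h - x_h\|$), and the optimization of $\lambda_0$ at $\lambda_0 = \sqrt{1+\lambda}-1$ producing the factor $(\sqrt{1+\lambda}-1)^2$ --- all exactly as in the paper. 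The one step you defer as "bookkeeping" (converting the squared-norm certificate involving $\beta$ into a linear radius involving $L_c$) is also the step the paper itself glosses over with a one-line appeal to the bounded action set ($\|u_h - u_h^{\dagger}\| \leq A$), so your outline matches the paper's argument even at its weakest point.
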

\begin{proof}
We choose $q_h=(1+\frac{1}{\lambda_0})\frac{\beta}{2}\sum_{i=0}^{H-h-1}\left(C_2\sigma_x^2\right)^{i}$ given any $\lambda_0\in(0,\lambda)$.  The choice of $q_h$ satisfies the requirement for $q_h$ in Lemma \ref{thm:consistency_sequence}  and the sufficient condition becomes 
\begin{equation}
\begin{split}
\|u_h-u_h^{\dagger}\|^2\leq& \frac{\lambda-\lambda_0}{(1+\frac{1}{\lambda_0})\frac{\beta}{2}}\frac{r_h^{\dagger}}{1+\frac{C_2}{C_2-1}\sigma_u^2\sum_{i=0}^{H-h-1}(C_2\sigma_x^2)^{i}}\\
=& \frac{\lambda-\lambda_0}{(1+\frac{1}{\lambda_0})\frac{\beta}{2}}\frac{r_h^{\dagger}}{1+\frac{C_2}{C_2-1}\sigma_u^2(1-(C_2\sigma_x^2)^{H-h})/(1-C_2\sigma_x^2)}.
\end{split}
\end{equation}
By optimally choosing $\lambda_0=\sqrt{1+\lambda}-1$, we have
\[q_h=(1+\frac{1}{\sqrt{1+\lambda}-1})\frac{\beta}{2}\sum_{i=0}^{H-h-1}\left(C_2\sigma_x^2\right)^{i},\] and 
\begin{equation}
\begin{split}
\|u_h-u_h^{\dagger}\|^2\leq
\frac{\frac{2}{\beta}(\sqrt{1+\lambda}-1)^2r_h^{\dagger}}{1+\frac{C_2}{C_2-1}\sigma_u^2(1-(C_2\sigma_x^2)^{H-h})/(1-C_2\sigma_x^2)}.
\end{split}
\end{equation}
Since $\|u_h-u_h^{\dagger}\|\leq A$ where $A$ is the size of the action set, we get the sufficient condition that an action belongs to safe action set \eqref{eqn:robust_constraint} as 
\begin{equation}
\begin{split}
\|u_h-u_h^{\dagger}\|&\leq
\frac{\frac{2}{L_c}(\sqrt{1+\lambda}-1)^2r_h^{\dagger}}{1+\frac{C_2}{C_2-1}\sigma_u^2(1-(C_2\sigma_x^2)^{H-h})/(1-C_2\sigma_x^2)}\\
&=(\sqrt{1+\lambda}-1)^2Gr_h^{\dagger}.
\end{split}
\end{equation}

Let $G'=(\sqrt{1+\lambda}-1)^2G$. We denote the action projected from $\tilde{\pi}(s_h)$ to the norm ball $\mathcal{B}(u_h^{\dagger},  G'r_h^{\dagger})$ as 
$\pi_{\lambda}^{\perp}(s_h)$. We have $\pi_{\lambda}^{\perp}(s_h)=\tilde{\pi}(s_h)$ if $\tilde{\pi}(s_h)\in \mathcal{B}(u_h^{\dagger}, G'r_h^{\dagger})$. And if $\tilde{\pi}(s_h)\notin \mathcal{B}(u_h^{\dagger}, G'r_h^{\dagger})$, we have $\pi_{\lambda}^{\perp}(s_h)=u_h^{\dagger}+G'r_h^{\dagger}\frac{\tilde{\pi}(s_h)-u_h^{\dagger}}{\|\tilde{\pi}(s_h)-u_h^{\dagger}\|}$. Since $\mathcal{U}_{\lambda,h}$ is a close set, the norm ball $\mathcal{B}(u_h^{\dagger}, G'r_h^{\dagger})\subset \mathcal{U}_{\lambda,h}$ thanks to Lemma \ref{thm:consistency_sequence}, we have 
\begin{equation}
\begin{split}
\xi_h(s_h)=&\left\|\pi_{\lambda}(s_h)-\tilde{\pi}(s_h)\right\|\leq \left\|\pi_{\lambda}^{\perp}(s_h)-\tilde{\pi}(s_h)\right\|\\
=&\left[\|\tilde{\pi}(s_h)-u_h^{\dagger}\|-G'r_h^{\dagger}\right]^+.\\
\leq & \|\tilde{\pi}(\tilde{s_h})-\tilde{\pi}(s_h)\|+\left[\|\tilde{\pi}(\tilde{s_h})-u_h^{\dagger}\|-G'r_h^{\dagger}\right]^+\\
\leq &L_{\pi}\|x_h-\tilde{x}_h\|+\left[\delta_h-G'r_h^{\dagger}\right]^+,
\end{split}
\end{equation}
where the first inequality is because $\pi_{\lambda}$ applies the projection or linear operation $m$ on the ML predictions, the second equality holds because if $\tilde{\pi}(s_h)\notin \mathcal{B}(u_h^{\dagger}, G'r_h^{\dagger})$, $\left\|\pi_{\lambda}^{\perp}(s_h)-\tilde{\pi}(s_h)\right\|=\|u_h^{\dagger}-\tilde{\pi}(s_h)-G'r_h^{\dagger}\frac{\tilde{\pi}(s_h)-u_h^{\dagger}}{\|u_h^{\dagger}-\tilde{\pi}(s_h)\|}\|=\|\tilde{\pi}(s_h)-u_h^{\dagger}\|-G'r_h^{\dagger}$ and if $\tilde{\pi}(s_h)\in \mathcal{B}(u_h^{\dagger}, G'r_h^{\dagger})$, $\left\|\pi_{\lambda}^{\perp}(s_h)-\tilde{\pi}(s_h)\right\|=0$, the second inequality holds by the triangle inequality, and the last inequality holds by the Lipschitz continuity of the policy $\pi^*$ and $\|\tilde{s}_h-s_h\|=
\|\tilde{x}_h-x_h\|$ for the same context instance.
\end{proof}

\begin{lemma}\label{state_perturbation}
With a reservation function satisfying the condition in Lemma \ref{thm:consistency_sequence}, the difference of the states with respect to the policy $\pi_{\lambda}$ and the policy $\tilde{\pi}$ is bounded as
\[
\|\tilde{x}_h-x_h\|\leq \sum_{i=0}^{h-1}(\sigma_x+2\sigma_uL_{\pi})^{h-i-1}\sigma_u\left[\eta_i-(\sqrt{1+\lambda}-1)^2Gr_{i}^{\dagger}\right]^+,
\]
where $G=\frac{2}{L_c\left(1+\frac{C_2}{C_2-1}\sigma_u^2(1-(C_2\sigma_x^2)^{H-h})/(1-C_2\sigma_x^2)\right)}$.
\end{lemma}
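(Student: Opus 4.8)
The plan is to derive a one-step recursion for the state discrepancy $e_h \coloneqq \|x_h - \tilde{x}_h\|$ between the \ouralg trajectory (driven by $\pi_{\lambda}$) and the pure-ML trajectory (driven by $\tilde{\pi}$), and then to unroll it. Since both trajectories are initialized at the same observed state, $e_0 = 0$, and the dynamics obey Assumption~\ref{assumption:lipschitz}, I would first write
\[
e_{h+1} = \|f_h(x_h, u_h) - f_h(\tilde{x}_h, \tilde{u}_h)\| \leq \sigma_x \|x_h - \tilde{x}_h\| + \sigma_u \|u_h - \tilde{u}_h\|,
\]
where $u_h = \pi_{\lambda}(s_h)$ and $\tilde{u}_h = \tilde{\pi}(\tilde{s}_h)$. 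Everything then reduces to bounding the action gap $\|u_h - \tilde{u}_h\|$.

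The key step is to split the action gap through a triangle inequality that introduces a common input,
\[
\|\pi_{\lambda}(s_h) - \tilde{\pi}(\tilde{s}_h)\| \leq \underbrace{\|\pi_{\lambda}(s_h) - \tilde{\pi}(s_h)\|}_{=\,\xi_h(s_h)} + \|\tilde{\pi}(s_h) - \tilde{\pi}(\tilde{s}_h)\|.
\]
The first term is precisely $\xi_h(s_h)$, bounded by Lemma~\ref{lma:q_difference} as $\xi_h(s_h) \leq L_{\pi}\|x_h - \tilde{x}_h\| + [\delta_h - (\sqrt{1+\lambda}-1)^2 G r_h^{\dagger}]^+$. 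The second term I would control using the $L_{\pi}$-Lipschitz continuity of $\tilde{\pi}$ together with the observation that $s_h$ and $\tilde{s}_h$ share the same exogenous context and differ only in their state coordinate, so that $\|s_h - \tilde{s}_h\| = \|x_h - \tilde{x}_h\|$ and hence $\|\tilde{\pi}(s_h) - \tilde{\pi}(\tilde{s}_h)\| \leq L_{\pi}\|x_h - \tilde{x}_h\|$. Summing the two contributions produces the crucial factor $2L_{\pi}$:
\[
\|u_h - \tilde{u}_h\| \leq 2 L_{\pi} \|x_h - \tilde{x}_h\| + \left[\delta_h - (\sqrt{1+\lambda}-1)^2 G r_h^{\dagger}\right]^+.
\]

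Substituting this back into the dynamics bound yields the linear recursion $e_{h+1} \leq (\sigma_x + 2\sigma_u L_{\pi})\, e_h + \sigma_u [\delta_h - (\sqrt{1+\lambda}-1)^2 G r_h^{\dagger}]^+$, and unrolling from $e_0 = 0$ by a routine induction gives
\[
e_h \leq \sum_{i=0}^{h-1} (\sigma_x + 2\sigma_u L_{\pi})^{h-i-1} \sigma_u \left[\delta_i - (\sqrt{1+\lambda}-1)^2 G r_i^{\dagger}\right]^+,
\]
which is the claimed bound (identifying $\eta_i$ with the action discrepancy $\delta_i$ of Lemma~\ref{lma:q_difference}). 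The main obstacle is the careful bookkeeping in the triangle-inequality split: one must recognize that there are \emph{two} distinct sources of action mismatch—the mapping of $\tilde{\pi}(s_h)$ into the safe set $\mathcal{U}_{\lambda,h}$, and the fact that $\tilde{\pi}$ is queried at \emph{perturbed} states $s_h$ versus $\tilde{s}_h$—and that each scales as $L_{\pi}\|x_h - \tilde{x}_h\|$; conflating or dropping either term would destroy the $(\sigma_x + 2\sigma_u L_{\pi})$ growth constant. Cleanly invoking Lemma~\ref{lma:q_difference} also requires checking that its reservation-function hypothesis matches the condition assumed here, which is inherited from Lemma~\ref{thm:consistency_sequence}.
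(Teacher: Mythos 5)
Your proposal is correct and follows essentially the same argument as the paper's proof: a Lipschitz bound on the dynamics, a triangle-inequality split of the action gap through $\tilde{\pi}(s_h)$ into the mapping discrepancy $\xi_h(s_h)$ (bounded via Lemma~\ref{lma:q_difference}) plus a policy-Lipschitz term, yielding the recursion with growth factor $(\sigma_x + 2\sigma_u L_{\pi})$ that is then unrolled from equal initial states. You also correctly flag that the $\eta_i$ in the lemma statement is the paper's (inconsistent) notation for the action discrepancy $\delta_i$ of Lemma~\ref{lma:q_difference}.
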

\begin{proof}
By the state dynamic function and Lipschitz continuity, we have
\begin{equation}
\begin{split}
\|\tilde{x}_h-x_h\|&=\|f(\tilde{x}_{h-1}',\tilde{\pi}(\tilde{s}_{h-1}))-f(x_{h-1},\pi_{\lambda}(s_{h-1}))\|\\
&\leq \sigma_x\|\tilde{x}_{h-1}'-x_{h-1}\|+\sigma_u\|\tilde{\pi}(\tilde{s}_{h-1})-\pi_{\lambda}(s_{h-1})\|\\
&\leq \sigma_x\|\tilde{x}_{h-1}'-x_{h-1}\|+\sigma_u\|\tilde{\pi}(\tilde{s}_{h-1})-\tilde{\pi}(s_{h-1})\|\\
&\qquad +\sigma_u\|\tilde{\pi}(s_{h-1})-\pi_{\lambda}(s_{h-1}))\|\\
&\leq (\sigma_x+\sigma_u L_{\pi})\|\tilde{x}_{h-1}'-x_{h-1}\|+\sigma_u\xi_{h-1}(s_{h-1}),
\end{split}
\end{equation}
where the second inequality holds by the triangle inequality, and the last inequality holds by the Lipschitz continuity of the policy $\tilde{\pi}$ and $\|\tilde{s}_h-s_h\|=
\|\tilde{x}_h-x_h\|$ for the same context instance. 

Applying Lemma \ref{lma:q_difference} for $\xi_{h-1}(s_{h-1})$, we further have
\begin{equation}\label{eqn:state_iterate}
\begin{split}
\|\tilde{x}_h-x_h\|
&\leq (\sigma_x+2\sigma_u L_{\pi})\|\tilde{x}_{h-1}'-x_{h-1}\|\\
&+\sigma_u\left[\eta_{h-1}-(\sqrt{1+\lambda}-1)^2Gr_{h-1}^{\dagger}\right]^+.
\end{split}
\end{equation}
Iteratively applying \eqref{eqn:state_iterate}, we have
\begin{equation}
\begin{split}
\|\tilde{x}_h-x_h\|
\leq  \sum_{i=0}^{h-1}(\sigma_x+2\sigma_uL_{\pi})^{h-i-1}\sigma_u\left[\eta_i-(\sqrt{1+\lambda}-1)^2Gr_{i}^{\dagger}\right]^+.
\end{split}
\end{equation}
\end{proof}

\textbf{Proof of Theorem \ref{thm:optimal}}
\begin{proof}
Now we are ready to bound the difference of expected costs of \ouralg and the pure ML policy $\tilde{\pi}$ which is 
\begin{equation}\label{eqn:averagedifferencebound1}
\begin{split}
&\mathbb{E}_{y_{0:H}}\left[ J_H^{\pi_{\lambda}}(y_{0:H})\right]-\mathbb{E}_{y_{0:H}}\left[J_H^{\tilde{\pi}}(y_{0:H})\right]\\
=&  \mathbb{E}_{y_{0:H}}\left[ \sum_{h=0}^H c_h\left(x_h,m(\tilde{\pi}(s_h),\mathcal{U}_{\lambda,h})\right)-c_h\left(\tilde{x}_h,\tilde{\pi}(\tilde{s}_h)\right)\right].
\end{split}
\end{equation}

We can bound this difference as
\begin{equation}
\begin{split}
&\mathbb{E}_{y_{0:H}}\left[ J_H^{\pi_{\lambda}}(y_{0:H})\right]-\mathbb{E}_{y_{0:H}}\left[J_H^{\tilde{\pi}}(y_{0:H})\right]\\
= & \mathbb{E}_{y_{0:H}}\left[ \sum_{h=0}^H c_h\left(x_h,m(\tilde{\pi}(s_h),\mathcal{U}_{\lambda,h})\right)-c_h\left(x_h,\tilde{\pi}(s_h)\right)\right.\\
&\qquad \left.+c_h\left(x_h,\tilde{\pi}(s_h)\right)-c_h\left(\tilde{x}_h,\tilde{\pi}(\tilde{s}_h)\right)\right]\\
\leq &L_c \mathbb{E}_{y_{0:H}}\left[\sum_{h=0}^H \|m(\tilde{\pi}(s_h),\mathcal{U}_{\lambda,h})-\tilde{\pi}(\tilde{s}_h)\| + (1+L_{\pi})\|x_h-\tilde{x}_h\| \right]\\
\leq  & L_c \mathbb{E}_{y_{0:H}}\left[\sum_{h=0}^H\xi_h(s_h) + (1+2L_{\pi})\|x_h-\tilde{x}_h\| \right],
\end{split}
\end{equation}
where the first inequality holds because the cost functions $c_h$ are $L_c$-Lipschitz continuous, $\tilde{\pi}$ is $L_{\pi}-$Lipschitz and $\tilde{s}_h-s_h=
\tilde{x}_h-x_h$ for the same context instance.
and the second equality is due to the definition of $
\xi_h(s_h)=\|m(\tilde{\pi}(s_h),\mathcal{U}_{\lambda,h})-\tilde{\pi}(s_h)\|$ in Lemma \ref{lma:q_difference} and $\|\tilde{\pi}(\tilde{s}_h)-\tilde{\pi}(s_h)\|\leq L_{\pi}\|\tilde{s}_h-s_h\|=L_{\pi}\|x_h-\tilde{x}_h\|$.

By Lemma \ref{lma:q_difference}, we can further bound the expected cost difference as 
\begin{equation}
\begin{split}
&\mathbb{E}_{y_{0:H}}\left[ J_H^{\pi_{\lambda}}(y_{0:H})\right]-\mathbb{E}_{y_{0:H}}\left[J_H^{\tilde{\pi}}(y_{0:H})\right]\\
\leq & L_c \mathbb{E}_{y_{0:H}}\left[\sum_{h=0}^H\xi_h(s_h) + (1+2L_{\pi})\|x_h-\tilde{x}_h\| \right]\\
\leq & L_c \mathbb{E}_{y_{0:H}}\left[\sum_{h=0}^{H-1} \left[\delta_h-G'r_h^{\dagger}\right]^+ + (1+2L_{\pi})\sum_{h=0}^{H}\|x_h-\tilde{x}_h\| \right],
\end{split}
\end{equation}
where $G'=(\sqrt{1+\lambda}-1)^2G$, $\delta_h=\|\tilde{\pi}(\tilde{s}_h)-\pi^{\dagger}(s_h^{\dagger})\|$, and $\xi_{H}(s_H)=0$ as there is no action at round $H$.

By Lemma \ref{state_perturbation}, the expected cost is bounded as
\begin{equation}\label{eqn:expected_cost_bound}
\begin{split}
&\mathbb{E}_{y_{0:H}}\left[ J_H^{\pi_{\lambda}}(y_{0:H})\right]-\mathbb{E}_{y_{0:H}}\left[J_H^{\tilde{\pi}}(y_{0:H})\right]\\
\leq & L_c \mathbb{E}_{y_{0:H}}\left[\sum_{h=0}^{H-1} \left[\delta_h-G'r_h^{\dagger}\right]^+ +\right.\\
&\left.(1+2L_{\pi})\sum_{h=1}^H\sum_{i=0}^{h-1}(\sigma_x+2\sigma_uL_{\pi})^{h-i-1}\sigma_u\left[\eta_i-G'r_{i}^{\dagger}\right]^+ \right]\\
\leq & L_c \mathbb{E}_{y_{0:H}}\left[\sum_{h=0}^{H-1} \left[\delta_h-G'r_h^{\dagger}\right]^+ + \right.\\
&\left.(1+2L_{\pi})\sigma_u \sum_{h=0}^{H-1} \left[\delta_h-G'r_h^{\dagger}\right]^+\sum_{i=h}^{H-1}(\sigma_x+2\sigma_uL_{\pi})^{h-i-1} \right]\\
\leq & B\mathbb{E}_{y_{0:H}}\left[\sum_{h=0}^{H-1} \left[\delta_h-(\sqrt{1+\lambda}-1)^2Gr_h^{\dagger}\right]^+ \right]
\end{split}
\end{equation}
where $B=L_c\left(1+(1+2L_{\pi})\sigma_u\sum_{i=0}^{H-1}(\sigma_x+2\sigma_uL_{\pi})^{h-i-1}\right)$.

The reservation function in Lemma \ref{lma:q_difference} meet the requirements in Proposition \ref{thm:potentaildesign_concrete} by choosing some proper constants $\rho$, $C_1$ and $C_2$. 
\end{proof}

\section{Proof of Theorem \ref{thm:sublinear_main}}
\begin{proof}
Since the policy $\pi_{\lambda}^{(n)}$ is one from the constrained policy set $\Pi_{\lambda}$, we apply the statistical generalization theorem in \cite{statistical_learning_bousquet2004introduction} and get with probability at least $1-\delta, \delta\in(0,1)$,
\begin{equation}
\left|\mathbb{E}\left[J_H^{\pi_{\lambda}^{(n)}}\right]- \frac{1}{n}\sum_{t=1}^nJ_H^{\pi_{\lambda}^{(n)}}(y_{0:H}^{(t)})\right|\leq 4HP\sqrt{\frac{2}{n}\ln\frac{4N(\epsilon, \Pi_{\lambda}, \hat{L}_1^n)}{\delta}},
\end{equation}
where $N(\epsilon, \Pi_{\lambda}, \hat{L}_1^n)$ is the $\epsilon-$covering number of the competitive policy space $\Pi_{\lambda}$ with $L_1-$norm as the distance measure: the distance of two functions $\pi$ and $\pi'$ is $\|\pi-\pi'\|_{\hat{L}_1^n}=\frac{1}{n}\sum_{t=1}^n\|\pi(s^{(t)})-\pi'(s^{(t)})\|_1$.

By Eqn.~\eqref{eqn:training}, we have $\frac{1}{n}\sum_{t=1}^nJ_H^{\pi_{\lambda}^{(n)}}(y_{0:H}^{(t)})\leq \frac{1}{n}\sum_{t=1}^nJ_H^{\pi_{\lambda}^{*}}(y_{0:H}^{(t)})$. Thus, we have
\begin{equation}
\begin{split}
\mathbb{E}\left[J_H^{\pi_{\lambda}^{(n)}}\right]&\leq \frac{1}{n}\sum_{t=1}^nJ_H^{\pi_{\lambda}^{*}}(y_{0:H}^{(t)})+ 4HP\sqrt{\frac{2}{n}\ln\frac{4N(\epsilon, \Pi_{\lambda}, \hat{L}_1^n)}{\delta}}\\
&\leq \mathbb{E}\left[J_H^{\pi_{\lambda}^{*}}\right]+8HP\sqrt{\frac{2}{n}\ln\frac{4N(\epsilon, \Pi_{\lambda}, \hat{L}_1^n)}{\delta}},
\end{split}
\end{equation}
where the last inequality holds be applying the generalization theorem in \cite{statistical_learning_bousquet2004introduction}.
By Eqn.\eqref{eqn:expected_cost_bound}, we have
\begin{equation}
\begin{split}
\mathbb{E}\left[J_H^{\pi_{\lambda}^{(n)}}\right]
&\leq \mathbb{E}\left[J_H^{\pi^{*}}\right]+B\mathbb{E}\left[\sum_{h=0}^{H-1} \left[\delta_h-(\sqrt{1+\lambda}-1)^2Gr_h^{\dagger}\right]^+ \right]\\
&+\mathcal{O}\left(\sqrt{\frac{1}{n}\ln\frac{N(\epsilon, \Pi_{\lambda}, \hat{L}_1^n)}{\delta}}\right),
\end{split}
\end{equation}
where $\mathcal{O}$ notation indicates the increasing with episode length $H$ and maximum loss value $P$
\end{proof}

\end{document}